\crefname{theorem}{theorem}{\bf Theorem}
\crefname{observation}{observation}{\bf Observation}
\crefname{lemma}{lemma}{\bf Lemma}
\crefname{corollary}{corollary}{\bf Corollary}
\crefname{proposition}{proposition}{\bf Proposition}
\crefname{definition}{definition}{\bf Definition}
\crefname{claim}{claim}{\bf Claim}
\crefname{reductionrule}{reduction rule}{\bf Reduction rule}
\crefname{chapter}{chapter}{\bf Chapter}
\newtheorem{observation}{\bf Observation}
\newcommand{\longversion}[1]{}
\newcommand{\shortversion}[1]{#1}
\newcommand{\el}{\ensuremath{\ell}\xspace}
\newcommand{\suc}{\ensuremath{\succ}\xspace}
\renewcommand{\leq}{\leqslant}
\renewcommand{\geq}{\geqslant}
\renewcommand{\ge}{\geqslant}
\renewcommand{\le}{\leqslant}
\newcommand{\YES}{\textsc{Yes}\xspace}
\newcommand{\NO}{\textsc{No}\xspace}
\newcommand{\pr}{\ensuremath{\prime}\xspace}
\newcommand{\prr}{{\ensuremath{\prime\prime}}\xspace}
\newcommand{\PW}{\textsc{Possible Winner}\xspace}
\newcommand{\TDM}{\textsc{Three Dimensional Matching}\xspace}
\newcommand{\MIS}{\textsc{Multicolored Independent Set}\xspace}
\newcommand{\SAT}{\ensuremath{(3,B2)}--\textsc{SAT}\xspace}
\newcommand{\NP}{\ensuremath{\mathsf{NP}}\xspace}
\newcommand{\Pb}{\ensuremath{\mathsf{P}}\xspace}
\newcommand{\NPC}{\ensuremath{\mathsf{NP}}-complete\xspace}
\newcommand{\NB}{\ensuremath{\mathbb N}\xspace}
\renewcommand{\AA}{\ensuremath{\mathcal A}\xspace}
\newcommand{\BB}{\ensuremath{\mathcal B}\xspace}
\newcommand{\CC}{\ensuremath{\mathcal C}\xspace}
\newcommand{\DD}{\ensuremath{\mathcal D}\xspace}
\newcommand{\EE}{\ensuremath{\mathcal E}\xspace}
\newcommand{\GG}{\ensuremath{\mathcal G}\xspace}
\newcommand{\II}{\ensuremath{\mathcal I}\xspace}
\newcommand{\LL}{\ensuremath{\mathcal L}\xspace}
\newcommand{\PP}{\ensuremath{\mathcal P}\xspace}
\newcommand{\QQ}{\ensuremath{\mathcal Q}\xspace}
\renewcommand{\SS}{\ensuremath{\mathcal S}\xspace}
\newcommand{\TT}{\ensuremath{\mathcal T}\xspace}
\newcommand{\UU}{\ensuremath{\mathcal U}\xspace}
\newcommand{\VV}{\ensuremath{\mathcal V}\xspace}
\newcommand{\WW}{\ensuremath{\mathcal W}\xspace}
\newcommand{\XX}{\ensuremath{\mathcal X}\xspace}
\newcommand{\YY}{\ensuremath{\mathcal Y}\xspace}
\newcommand{\ZZ}{\ensuremath{\mathcal Z}\xspace}
\newcommand{\aaa}{\ensuremath{\mathfrak a}\xspace}
\newcommand{\bbb}{\ensuremath{\mathfrak b}\xspace}
\newcommand{\ppp}{\ensuremath{\mathfrak p}\xspace}
\newcommand{\qqq}{\ensuremath{\mathfrak q}\xspace}
\newcommand{\sss}{\ensuremath{\mathfrak s}\xspace}
\newcommand{\vvv}{\ensuremath{\mathfrak v}\xspace}
\newcommand{\nice}{smooth}
\newcommand{\nfrac}{\nicefrac}
\newcounter{nmcomment}
\newcommand{\nmtodo}[1]{%
    \refstepcounter{nmcomment}%
    {%
        \todo[color={LightSteelBlue},size=\small,inline]{%
            \textbf{Comment [NM\thenmcomment ]:}~#1}%
}}
\newcounter{pdcomment}
\title{On the Exact Amount of Missing Information that makes Finding Possible Winners Hard}
\titlerunning{On the Exact Amount of Missing Information that makes Finding Possible Winners Hard} 
\author[1]{Palash Dey}
\author[2]{Neeldhara Misra}
\affil[1]{Indian Institute of Science, Bangalore\\
  \texttt{palash@csa.iisc.ernet.in}}
\affil[2]{Indian Institute of Technology, Gandhinagar\\
  \texttt{neeldhara.m@iitgn.ac.in}}
\authorrunning{N. Misra and P. Dey} 
\subjclass{F.2.0 Analysis Of Algorithms And Problem Complexity - General }
\keywords{Computational Social Choice, Dichotomy, NP-completeness, Maxflow, Voting}
\begin{document}

\maketitle

\begin{abstract}

We consider election scenarios with incomplete information, a situation that arises often in practice. There are several models of incomplete information and accordingly, different notions of outcomes of such elections. In one well-studied model of incompleteness, the votes are given by partial orders over the candidates. In this context we can frame the problem of finding a \textit{possible winner}, which involves determining whether a given candidate wins in at least one completion of a given set of partial votes for a specific voting rule. 

The \PW problem is well-known to be \NPC{} in general, and it is in fact known to be \NPC{} for several voting rules where the number of undetermined pairs in every vote is bounded only by some constant. In this paper, we address the question of determining precisely the smallest number of undetermined pairs for which the \PW problem remains \NPC{}. In particular, we find the exact values of $t$ for which the \PW problem transitions to being \NPC from being in \Pb, where $t$ is the maximum number of undetermined pairs in every vote. We demonstrate tight results for a broad subclass of scoring rules which includes all the commonly used scoring rules (such as plurality, veto, Borda, $k$-approval, and so on), Copeland$^\alpha$ for every $\alpha\in[0,1]$, maximin, and Bucklin voting rules. A somewhat surprising aspect of our results is that for many of these rules, the \PW problem turns out to be hard even if every vote has at most one undetermined pair of candidates.

\end{abstract}
\section{Introduction}

In many real life situations including multiagent systems, agents often need to aggregate 
their preferences and agree upon a common decision (candidate). 
Voting is an immediate natural tool in these 
situations. Common and classic applications of voting in multiagent systems include collaborative filtering and recommender systems~\cite{PennockHG00}, spam detection~\cite{Cohen}, computational biology~\cite{JacksonSA08}, winner determination in sports competition~\cite{BetzlerBN14} etc. We refer the readers to \cite{moulin2016handbook} for an elaborate treatment of computational voting theory.

Usually, in a voting setting, it is assumed that the votes are complete orders over the
candidates. However, due to many reasons, for example, lack of knowledge of voters 
about some candidates, a voter maybe indifferent between some pairs of candidates. 
Hence, it is both natural and important to 
consider scenarios where votes are partial orders over the candidates. When votes are only partial orders over the candidates, the winner cannot be determined with certainty since it depends on 
how these partial orders are extended to linear orders. 
This leads to a natural computational problem called the \PW problem~\cite{konczak2005voting}:  given a set of partial votes $\PP$ and a distinguished candidate $c$, is there a way to extend the partial votes to complete votes where $c$ wins? The \PW problem has been studied extensively in the literature \cite{lang2007winner,pini2007incompleteness,walsh2007uncertainty,xia2008determining,betzler2009multivariate,chevaleyre2010possible,betzler2010partial,baumeister2011computational,lang2012winner,faliszewski2014complexity} following its definition in \cite{konczak2005voting}. Betzler et al.~\cite{betzler2009towards} and Baumeister et al.~\cite{BaumeisterR12} show that the \PW winner problem is \NPC for all scoring rules except for the plurality and veto voting rules; the \PW winner problem is in \Pb for the plurality and veto voting rules.
The \PW problem is known to be \NPC for many common voting rules, for example, a class of scoring rules, maximin, 
Copeland, Bucklin etc. even when the maximum number of undetermined pairs of candidates in every vote is bounded above by small constants~\cite{xia2008determining}. Walsh showed that the \PW problem can be solved in polynomial time for all the voting rules mentioned above when we have a constant number of candidates \cite{walsh2007uncertainty}.

\subsection{Our Contribution}

Our main contribution lies in pinning down exactly the minimum number of undetermined pairs allowed per vote so that the \PW winner problem continues to be \NPC for a large class of scoring rules, Copeland$^\alpha$, maximin, and Bucklin voting rules. To begin with, we describe our results for scoring rules. We work with a class of scoring rules that we call \nice{}, which are essentially scoring rules where the score vector for $(m+1)$ candidates can be obtained by either duplicating an already duplicated score in the score vector for $m$ candidates, or by extending the score vector for $m$ candidates at one of the endpoints with an arbitrary new value. While less general than the class of pure scoring rules, the \nice{} rules continue to account for all commonly used scoring rules (such as Borda, plurality, veto, $k$-approval, and so on). Using $t$ to denote the maximum number of undetermined pairs of candidates in every vote, we show the following.

\begin{itemize}
 \item The \PW problem is \NPC even when $t\le 1$ for scoring rules which have two distinct nonzero differences between consecutive coordinates in the score vector (we call them differentiating) and in \Pb when $t\le 1$ for other scoring rules [\Cref{thm:one-missing-pair}].
 
 \item Else the \PW problem is \NPC when $t\ge 2$ and in \Pb when $t\le 1$ for scoring rules that contain $(\alpha+1, \alpha+1, \alpha)$ for any $\alpha\in\NB$ [\Cref{thm:two-missing-pairs}].
 
 \item Else the \PW problem is \NPC when $t\ge 3$ and in \Pb when $t\le 2$ for scoring rules which contain $(\alpha+2, \alpha+1, \alpha+1, \alpha)$ for any $\alpha\in\NB$ [\Cref{thm:three-missing-pairs}].
 
 \item The \PW problem is \NPC when $t\ge 4$ and in \Pb when $t\le 3$ for $k$-approval and $k$-veto voting rules for any $k>1$ [\Cref{thm:four-missing-pairs}].
 
 \item The \PW problem is \NPC when $t\ge m-1$ and in \Pb when $t\le m-2$ for the scoring rule $(2, 1, 1, \ldots, 1, 0)$ [\Cref{thm:four-missing-pairs}].
\end{itemize}

We summarize our results for the Copeland$^\alpha$, maximin, and Bucklin voting rules in \Cref{tbl:summary_pw_exact_except_scoring_rules}. We observe that the \PW problem for the Copeland$^\alpha$ voting rule is \NPC even when every vote has at most $2$ undetermined pairs of candidates for $\alpha\in\{0, 1\}$. However, for $\alpha\in(0,1)$, the \PW problem for the Copeland$^\alpha$ voting rule is \NPC even when every vote has at most $1$ undetermined pairs of candidates.
Our results show that the \PW winner problem continues to be \NPC for all the common voting rules studied here (except $k$-approval) even when the number of undetermined pairs of candidates per vote is at most $2$. Other than finding the exact number of undetermined pairs needed per vote to make the \PW problem \NPC for common voting rules, we also note that all our proofs are much simpler and shorter than most of the corresponding proofs from the literature subsuming the work in \cite{xia2008determining,betzler2009towards,BaumeisterR12}.

\begin{table}[!htbp]
 \centering
 \renewcommand*{\arraystretch}{1.3}
 \begin{tabular}{|c|c|c|c|}\hline
 Voting rules & \NPC & Poly time & Known from literature~\cite{xia2008determining}\\\hline\hline
 
 Copeland$^{0,1}$ & $t\ge2$ [\Cref{thm:copeland_hard_two}] & $t\le 1$ [\Cref{thm:copeland_poly_one}] & \multirow{3}{*}{\NPC for $t\ge 8$} \\\cline{1-3}
 
 \makecell{Copeland$^\alpha$\\$\alpha\in(0,1)$} & $t\ge1$ [\Cref{thm:copeland_hard_one}] & -- &  \\\hline
 
 Maximin & $t\ge2$ [\Cref{thm:maximin_hard}] & $t\le 1$ [\Cref{thm:maximin_poly_one}] & \NPC for $t\ge 4$ \\\hline
 
 Bucklin & $t\ge2$ [\Cref{thm:bucklin_hard}] & $t\le 1$ [\Cref{thm:bucklin_hard}] & \NPC for $t\ge 16^\star$ \\\hline
 \end{tabular}
 \caption{Summary and comparison of results from the literature for Copeland$^\alpha$, maximin, and Bucklin voting rules. $^\star$The result was proved for the simplified Bucklin voting rule but the proof can be modified easily for the Bucklin voting rule.}\label{tbl:summary_pw_exact_except_scoring_rules}
\end{table}
\section{Preliminaries}

Let us denote the set $\{1, 2, \ldots, n\}$ by $[n]$ for any positive integer $n$. Let $\CC = \{c_1, c_2, \ldots, c_m\}$ be a set of candidates or alternatives and $\VV = \{v_1, v_2, \ldots, v_n\}$ a set of voters. If not mentioned otherwise, we denote the set of candidates by \CC, the set of voters by \VV, the number of candidates by $m$, and the number of voters by $n$. Every voter $v_i$ has a preference or vote $\suc_i$ which is a complete order over \CC. We denote the set of complete orders over \CC by $\LL(\CC)$. We call a tuple of $n$ preferences $(\suc_1, \suc_2, \cdots, \suc_n)\in\LL(\CC)^n$ an $n$-voter preference profile. It is often convenient to view a preference as a subset of $\CC\times\CC$ --- a preference \suc corresponds to the subset $\AA = \{(x, y)\in\CC\times\CC: x\suc y\}$. For a preference \suc and a subset $\AA\subseteq\CC$ of candidates, we define $\suc(\AA)$ be the preference \suc restricted to \AA, that is $\suc(\AA) = \suc \cap (\AA\times\AA)$. Let $\uplus$ denote the disjoint union of sets. 
A map $r:\uplus_{n,|\mathcal{C}|\in\mathbb{N}^+}\mathcal{L(C)}^n\longrightarrow 2^\mathcal{C}\setminus\{\emptyset\}$
is called a \emph{voting rule}. For a voting rule $r$ and a preference profile $\succ = (\succ_1, \dots, \succ_n)$, we say a candidate $x$ wins uniquely if $r(\succ) = \{x\}$ and $x$ co-wins if $x\in r(\suc)$. For a vote $\suc\in\LL(\CC)$ and two candidates $x, y\in\CC$, we say $x$ is placed before $y$ in \suc if $x\suc y$; otherwise we say $x$ is placed after $y$ in \suc. For any two candidates $x, y\in\CC$ with $x\ne y$ in an election \EE, let us define the margin $\DD_\EE(x, y)$ of $x$ from $y$ to be $|\{ i: x \suc_i y \}| - |\{ i: y \suc_i x \}|$. Examples of some common voting rules are as follows.

{\bf Positional scoring rules:} A collection $(\overrightarrow{s_m})_{m\in\NB^+}$ of $m$-dimensional vectors $\overrightarrow{s_m}=\left(\alpha_m,\alpha_2,\dots,\alpha_1\right)\in\mathbb{N}^m$ 
 with $\alpha_m\ge\alpha_2\ge\dots\ge\alpha_1$ and $\alpha_m>\alpha_1$ for every $m\in \mathbb{N}^+$ naturally defines a voting rule --- a candidate gets score $\alpha_i$ from a vote if it is placed at the $i^{th}$ position, and the  score of a candidate is the sum of the scores it receives from all the votes. 
 The winners are the candidates with maximum score. Scoring rules remain unchanged if we multiply every $\alpha_i$ by any constant $\lambda>0$ and/or add any constant $\mu$. Hence, we assume without loss of generality that for any score vector $\overrightarrow{s_m}$, there exists a $j$ such that $\alpha_k = 0$ for all $k<j$ and the greatest common divisor of $\alpha_1, \ldots, \alpha_m$ is one. Such a $\overrightarrow{s_m}$ is called a normalized score vector. Without loss of generality, we will work with normalized scoring rules only in this work. If $\alpha_i$ is $1$ for $i\in [k]$ and $0$ otherwise, then we get the $k$-approval voting rule. For the $k$-veto voting rule, $\alpha_i$ is $0$ for $i\in [m-k]$ and $-1$ otherwise. $1$-approval is called the plurality voting rule and $1$-veto is called the veto voting rule.
 
\textbf{Copeland$^{\alpha}$:} Given $\alpha\in[0,1]$, the Copeland$^{\alpha}$ score of a candidate $x$ is $|\{y\ne x:\DD_\EE(x,y)>0\}|+\alpha|\{y\ne x:\DD_\EE(x,y)=0\}|$. The winners are the candidates with maximum Copeland$^{\alpha}$ score. If not mentioned otherwise, we will assume $\alpha$ to be zero.
 
{\bf Maximin:} The maximin score of a candidate $x$ in an election $E$ is $\min_{y\ne x} \DD_\EE(x,y)$. The winners are the candidates with maximum maximin score.

{\bf Bucklin:} Let $\ell$ be the minimum integer such that there exists at least one candidate $x\in\CC$ whom more than half of the voters place in their top $\ell$ positions. Then the Bucklin winner is the candidate who is placed most number of times within top $\el$ positions of the votes.

{\bf Elections with Incomplete Information.} A more general setting is an {\it election} where the votes are only 
\emph{partial orders} over candidates. A \emph{partial order} is a relation that is \emph{reflexive, 
antisymmetric}, and \emph{transitive}. A partial vote can be extended to possibly more than one linear votes depending on how we fix the order for the unspecified pairs of candidates.\longversion{ For example, in an election with the set of candidates $\mathcal{C} = \{a, b, c\}$, 
a valid partial vote can be $a \succ b$. This partial vote can be extended to three linear votes namely, $a \succ b \succ c$, $a \succ c \succ b$, $c \succ a \succ b$.} Given a partial vote $\suc$, we say that an extension $\suc^\pr$ of \suc {\em places the candidate $c$ as high as possible} if $a\suc^\pr c$ implies $a\suc^\prr c$ for every extension $\suc^\prr$ of \suc.

\begin{definition}($r$--\PW)\\
 Given a set of partial votes \PP over a set of candidates \CC and a candidate $c\in\CC$, does there exist an extension $\PP^\pr$ of \PP such that $c\in r(\PP^\pr)$?
\end{definition}

\longversion{We denote an arbitrary instance of \PW by $(\CC, \PP, c)$.}
\section{Results}
For ease of exposition, we present all our results for the co-winner case. All our proofs extend easily to the unique winner case too. We begin with our results for the scoring rules.
\subsection{Scoring Rules}

\nmtodo{Mention somewhere that we will be working with only normalized score vectors WLOG. ADDED THIS IN PRELIM IN THE DEFINITION OF SCORING RULES}

In this section, we establish a dichotomous result describing the status of the \PW{} problem for a large class of scoring rules when the number of undetermined pairs in every vote is at most one, two, three, or four. We begin by introducing some terminology. Instead of working directly with score vectors, it will sometimes be convenient for us to refer to the ``vector of differences'', which, for a score vector $s$ with $m$ coordinates, is a vector $d(s)$ with $m-1$ coordinates with each entry being the difference between adjacent scores corresponding to that location and the location left to it. This is formally stated below.

\begin{definition} Given a normalized score vector $\overrightarrow{s_m}=\left(\alpha_m,\alpha_{m-1},\dots,\alpha_1=0\right)\in\mathbb{N}^m$, the associated difference vector $d(\overrightarrow{s_m})$ is given by $\left(\alpha_m - \alpha_{m-1},\alpha_{m-1} - \alpha_{m-2},\dots,\alpha_2 - \alpha_1 \right)\in\mathbb{N}^{m-1}.$ We also employ the following notation to refer to the smallest score difference among all non-zero differences, and the largest score difference, respectively:
\begin{itemize}
	\item $\delta(\overrightarrow{s_m}) = \min(\{\alpha_i - \alpha_{i-1} ~|~ 2 \leq i \leq m \mbox{ and } \alpha_i - \alpha_{i-1} > 0\})$
\item $\Delta(\overrightarrow{s_m}) = \max(\{\alpha_i - \alpha_{i-1} ~|~ 2 \leq i \leq m \})$
 \end{itemize} 
\end{definition}

Note that for every normalized score vector $\overrightarrow{s_m}$, $\Delta(\overrightarrow{s_m})$ is always non-zero. We now proceed to defining the notion of \nice{} scoring rules. Consider a score vector $\overrightarrow{s_m} = \left(\alpha_m,\alpha_{m-1},\dots,\alpha_1\right)$. For $0 \leq i \leq m$, we say that $\overrightarrow{s_{m+1}}$ is obtained from $s_m$ by inserting $\alpha$ just before position $i$ from the right if: $$\overrightarrow{s_{m+1}} = \left(\alpha_m,\alpha_{m-1},\dots,\alpha_{i+1},\alpha,\alpha_{i}, \ldots, \alpha_2, \alpha_1\right).$$ Note that if $i = 0$, we have  $\overrightarrow{s_{m+1}} = \left(\alpha_m,\alpha_{m-1},\dots,\alpha_1,\alpha\right)$, and if $i = m$, then we have $\overrightarrow{s_{m+1}} = \left(\alpha,\alpha_m,\alpha_{m-1},\dots,\alpha_1\right)$. For $0 \leq i \leq m$, we say that the position $i$ is \textit{admissible} if $i = 0$, or $i = m$, or $\alpha_{i+1} = \alpha_i$.

\begin{definition}[Smooth scoring rules]
We say that a scoring rule $s$ is \nice{} if there exists some constant $n_0 \in \NB^+$ such that for all $m \geq n_0$, the score vector $\overrightarrow{s_m}$ can be obtained from $\overrightarrow{s_{m-1}}$ by inserting an additional score value at any position $i$ that is admissible. \end{definition}

Intuitively speaking, a \nice{} scoring rule is one where the score vector for $m$ candidates can be obtained by either extending the one for $(m-1)$ candidates at one of the ends, or by inserting a score between an adjacent pair of ambivalent locations (i.e, consecutive scores in the score vector with the same value). Although at a first glance it may seem that the class of \nice{} scoring rules involves an evolution from a limited set of operations, we note that all of the common scoring rules, such as plurality, veto, $k$-approval, Borda, and scoring rules of the form $(2, 1, \ldots, 1, 0)$, are \nice{}. We now turn to some definitions that will help describe the cases that appear in our classification result.

\nmtodo{Maybe we should give some examples and non-examples of \nice{} scoring rules at this point, or explain why this a reasonable class of scoring rules to consider.}


\begin{definition} Let $s=(\overrightarrow{s_m})_{m\in\NB^+}$ be a scoring rule. 
\begin{itemize}
\item We say that $s$ is a \textit{Borda-like scoring rule} if there exists some $n_0 \in \NB^+$ for which we have that $\Delta(\overrightarrow{s_m}) = \delta(\overrightarrow{s_m})$ for every $m > n_0$. 

\item Any rule that is not Borda-like is called a \textit{differentiating scoring rule}. 

\item For any vector $t$ with $\ell$ co-ordinates, we say that $s$ is $t$-difference-free if there exists some $n_0 \in \NB^+$ such that for every $m \geq n_0$, the vector $t$ does not occur in $d(\overrightarrow{s_m})$. In other words, the vector $\langle d(\overrightarrow{s_m})[i], \ldots, d(\overrightarrow{s_m})[i+\ell-1] \rangle \neq t$ for any $1 \leq i \leq m-\ell$. 

\item For any vector $t$, we say that $s$ is $t$-contaminated it is not $t$-difference-free. We also say that $s$ is $t$-contaminated at $m$ if the vector $t$ occurs in $d(\overrightarrow{s_m})$.
\end{itemize}
\end{definition}

We will frequently be dealing with Borda-like score vectors. To this end, the following easy observation will be useful. 

\begin{observation} If $s=(\overrightarrow{s_m})_{m\in\NB^+}$ is a Borda-like scoring rule in its normalized form, then there exists $n_0 \in \NB^+$ such that all the coordinates of $d(\overrightarrow{s_m})$ are either zero or one for all $m > n_0$.
\end{observation}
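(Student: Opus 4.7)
The plan is to deduce that the only positive value appearing in the difference vector must be $1$, by pairing the Borda-like hypothesis with the two normalization conventions.

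First, I would invoke the Borda-like property to obtain $n_0 \in \NB^+$ such that for every $m > n_0$ we have $\Delta(\overrightarrow{s_m}) = \delta(\overrightarrow{s_m})$; denote this common value by $c_m$. Since $\delta(\overrightarrow{s_m})$ is the smallest nonzero coordinate of $d(\overrightarrow{s_m})$ and $\Delta(\overrightarrow{s_m})$ is the largest coordinate, the equality $\Delta = \delta$ forces every nonzero coordinate of $d(\overrightarrow{s_m})$ to equal $c_m$. Hence all coordinates of $d(\overrightarrow{s_m})$ lie in $\{0, c_m\}$, and the task reduces to showing $c_m = 1$.

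Next, I would exploit the shift-normalization $\alpha_1 = 0$ built into the definition of a normalized score vector. Telescoping yields $\alpha_i = \sum_{j=2}^{i}(\alpha_j - \alpha_{j-1})$ for every $i$, and since each summand is in $\{0, c_m\}$, every $\alpha_i$ is a non-negative integer multiple of $c_m$. In particular, $c_m$ divides every entry of the score vector.

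Finally, I would invoke the gcd-normalization $\gcd(\alpha_1, \ldots, \alpha_m) = 1$. Since $c_m$ divides each $\alpha_i$, it also divides their gcd, forcing $c_m = 1$. This gives the claim with the same $n_0$ produced by the Borda-like hypothesis. The argument is entirely elementary and I foresee no real obstacle; the only small point worth noting is that the gcd is well-defined because $\alpha_m > \alpha_1 = 0$ (guaranteed by the definition of a positional scoring rule), so at least one $\alpha_i$ is strictly positive.
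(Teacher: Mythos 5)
Your proof is correct: the equality $\Delta(\overrightarrow{s_m}) = \delta(\overrightarrow{s_m})$ pins every nonzero entry of $d(\overrightarrow{s_m})$ to a common value $c_m$, and the normalization ($\alpha_1 = 0$ together with $\gcd(\alpha_1,\ldots,\alpha_m)=1$) forces $c_m = 1$ via the telescoping/divisibility argument. The paper states this observation without proof, and your argument is exactly the elementary one it implicitly relies on, so there is nothing further to reconcile.
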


It turns out that if a scoring rule is smooth, then its behavior with respect to some of the properties above is fairly monotone. For instance, we have the following easy proposition. \shortversion{For the interest of space, we move proofs of some of our results including all our polynomial time algorithms to the appendix. For a few proofs, we only provide a sketch of the proof deferring the complete proof to the appendix. We mark these results with $\star$. All our polynomial time algorithms are based on reduction to the maximum flow problem in a graph.}

\begin{restatable}{proposition}{DiffNotBordaLike}\shortversion{[$\star$]}
\label{prop:diff-borda-like}
	Let $s=(\overrightarrow{s_m})_{m\in\NB^+}$ be a \nice{} scoring rule that is not Borda-like. Then there exists some $n_0 \in \NB^+$ such that $\Delta(\overrightarrow{s_m}) \neq \delta(\overrightarrow{s_m})$ for every $m > n_0$. 
\end{restatable}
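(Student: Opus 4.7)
The plan is to track how the difference vector $d(\overrightarrow{s_m})$ transforms as $m$ grows under the smoothness condition, and to show that the multiset of non-zero entries can only grow. Once this monotonicity is in hand, the conclusion follows quickly: being ``not Borda-like'' forces the appearance of at least two distinct non-zero differences at some finite $m^\star \geq n_0$, and by monotonicity these must persist for every $m > m^\star$.

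I would first carry out the case analysis of what each admissible insertion does to $d$. Working in the normalized form (so $\alpha_1 = 0$), the three admissible positions translate as follows: (i)~inserting at the right end ($i = 0$) appends some $\alpha \leq \alpha_1 = 0$, which forces $\alpha = 0$ and thus appends a single $0$ to the right end of $d$; (ii)~inserting at the left end ($i = m$) prepends some $\alpha \geq \alpha_m$, which contributes a new non-negative entry $\alpha - \alpha_m$ at the left end of $d$; and (iii)~inserting between two equal adjacent scores $\alpha_i = \alpha_{i+1}$ is forced to use $\alpha = \alpha_i$, which merely splits an existing $0$ of $d$ into two consecutive $0$s. A routine check confirms that none of these operations disturbs the normalization conventions (non-negativity, monotonicity, and $\gcd = 1$, the last point because appending $0$ or any integer to a set with $\gcd = 1$ leaves the $\gcd$ unchanged). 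From this the key invariant falls out: as $m$ grows past $n_0$, the multiset of non-zero values in $d(\overrightarrow{s_m})$ is non-decreasing in $m$, because operations (i) and (iii) add only $0$ entries, while operation (ii) adds at most one new non-negative entry.

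To finish, I would observe that $\Delta(\overrightarrow{s_m}) = \delta(\overrightarrow{s_m})$ holds precisely when $d(\overrightarrow{s_m})$ contains exactly one distinct non-zero value (at least one such value exists because $\alpha_m > \alpha_1$ in any score vector). Since $s$ is not Borda-like, the set $\{\, m \geq n_0 : d(\overrightarrow{s_m}) \text{ contains at least two distinct non-zero values} \,\}$ is nonempty; pick any $m^\star$ in it, and the monotonicity invariant ensures that $d(\overrightarrow{s_m})$ continues to contain those same two distinct non-zero values for every $m > m^\star$, giving the desired threshold $n_0 := m^\star$. The principal obstacle is the insertion-on-$d$ case analysis above, especially noting that normalization forces the inserted value at the right end to be $0$; once that is in place, the rest of the argument is pure monotonicity bookkeeping.
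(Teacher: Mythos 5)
Your proof is correct and takes essentially the same route as the paper's: both arguments fix a witness index where $\Delta \neq \delta$ and propagate it forward using the fact that admissible insertions can never destroy existing non-zero score differences, so $\Delta(\overrightarrow{s_m})$ is non-decreasing while $\delta(\overrightarrow{s_m})$ is non-increasing. Your explicit three-case analysis of how insertions act on $d(\overrightarrow{s_m})$ (and the resulting multiset invariant) is just a finer-grained version of the paper's induction, which argues the same monotonicity via the non-admissibility of any position whose difference exceeds one.
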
\longversion{

\begin{proof}
	Since $s$ is not Borda-like, there exists some $\ell \in \NB^+$ for which $\Delta(\overrightarrow{s_\ell}) \neq \delta(\overrightarrow{s_\ell})$. We claim that for any $m \geq \ell$, $\Delta(\overrightarrow{s_m}) \neq \delta(\overrightarrow{s_m})$. We prove this by induction. The base case follows directly from the assumption. Suppose the inductive hypothesis is that $\Delta(\overrightarrow{s_m}) \neq \delta(\overrightarrow{s_m})$, for some $m > \ell$, where $\overrightarrow{s_m}=\left(\alpha_m,\alpha_{m-1},\dots,\alpha_1\right)\in\mathbb{N}^m$. Since $\Delta(\overrightarrow{s_m}) \neq \delta(\overrightarrow{s_m})$, there exists $1 \leq j \leq m-1$ for which $\alpha_{j+1} - \alpha_j > 1$ (since $\Delta(\overrightarrow{s_m}) > \delta(\overrightarrow{s_m}) \ge 1$), and in particular, this implies that position $j$ is not admissible.  
	
	Now using the fact that $s$ is a \nice{} scoring rule, we let $\overrightarrow{s_{m+1}}$ be any score vector that can be obtained from $\overrightarrow{s_m}$ by inserting an additional score value at a position $i$, where we recall that $i$ must be an admissible position. Observe that $i \neq j$, so $\Delta(\overrightarrow{s_{m+1}}) \geq \Delta(\overrightarrow{s_m})$. Also inserting a score value cannot increase the smallest non-zero score difference. Therefore, $\delta(\overrightarrow{s_m}) \ge \delta(\overrightarrow{s_{m+1}})$ and the claim follows. 
\end{proof}
}

We are now ready to state the first classification result of this section, for the scenario where every vote has at most one missing pair. We use \SAT to prove some of our hardness results. The \SAT problem is the 3-SAT problem restricted to formulas in which each clause contains exactly three literals, and each variable occurs exactly twice positively and twice negatively. We know that \SAT is \NPC~\cite{ECCC-TR03-022}. Let us first present a structural result for scoring rules which we will use subsequently. 

Suppose we have a set $\CC = \{c_1, \ldots, c_{m-1}, g\}$ of $m$ candidates including a ``dummy'' candidate $g$. Then we know from \cite{baumeister2011computational,journalsDeyMN16}, that for a score vector $(\alpha_m, \ldots, \alpha_1)$ and integers $\{k_i^j\}_{i\in[m-1], j\in[m-1]}$, we can add votes polynomially many in $\sum_{i\in[m-1], j\in[m-1]} k_i^j$ so that the score of the candidate $c_i$ is $\lambda + \sum_{j\in[m-1]} k_i^j (\alpha_{j}-\alpha_{j+1})$ for some $\lambda$ and the score of $g$ is less than $\lambda$. Since the greatest common divisor of non-zero differences of the consecutive entries in a normalized score vector is one, we have the following.

\begin{lemma}\label{score_gen}
Let $\mathcal{C} = \{c_1, \ldots, c_m\} \cup D, (|D|>0)$ be a set of candidates, and $\vec{\alpha}$ a normalized score vector of length $|\mathcal{C}|$. Then for every $\mathbf{X} = (X_1, \ldots, X_m) \in \mathbb{Z}^m$, there exists $\lambda\in \mathbb{N}$ and a voting profile \VV such that the $\vec{\alpha}$-score of $c_i$ is $\lambda + X_i$ for all $1\le i\le m$,  and the score of candidates $d\in D$ is less than $\lambda$. Moreover, the number of votes in \VV is $O(poly(|\mathcal{C}|\cdot \sum_{i=1}^m |X_i|))$.
\end{lemma}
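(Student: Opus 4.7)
The plan is to reduce the statement to the construction from \cite{baumeister2011computational,journalsDeyMN16} that is paraphrased immediately before the lemma. That construction, applied to $\{c_1,\ldots,c_m, g\}$ with a single dummy $g$, takes an arbitrary tuple of non-negative integers $\{k_i^j\}$ and produces a voting profile (of size polynomial in $\sum_{i,j} k_i^j$) in which the score of $c_i$ is $\lambda + \sum_{j} k_i^j (\alpha_{j} - \alpha_{j+1})$ and the score of $g$ is strictly less than $\lambda$. Hence, as long as each target quantity can be written as a \emph{non-negative} integer combination of the consecutive score differences, we are done up to two adjustments: coping with negative $X_i$ and handling $|D|>1$ rather than a single dummy.

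For negative $X_i$, I will use a global shift. Choose a common constant $C \in \NB$ (to be fixed) and aim to produce a profile realizing $c_i$-scores equal to $(\lambda + C) + X_i$, which is what the lemma requires with $\lambda' := \lambda + C$. This reduces matters to expressing $X_i + C$ as $\sum_j k_i^j (\alpha_j - \alpha_{j+1})$ with $k_i^j \in \NB$, for every $i$ simultaneously. Here the normalization hypothesis on $\vec{\alpha}$ is crucial: the greatest common divisor of the non-zero differences $\{\alpha_j - \alpha_{j+1}\}$ is $1$, so by the Chicken McNugget (Frobenius) theorem there is a threshold $F$, depending only on $\vec{\alpha}$, beyond which every integer is such a non-negative combination. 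Taking $C > F + \max_i |X_i|$ guarantees $X_i + C > F$ for every $i$, which yields the required coefficients $k_i^j$; plugging them into the cited construction gives a profile with the right relative behaviour, and $\lambda' = \lambda + C$ is a natural number as required.

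To promote the single dummy $g$ to an arbitrary non-empty set $D$, I would pick any $g \in D$, apply the construction as above on $\{c_1,\ldots,c_m,g\}$, and then append a bounded block of ``filler'' votes that always rank the remaining candidates in $D \setminus \{g\}$ at the lowest positions of the vote (and complete them arbitrarily). A small symmetric collection of such votes (e.g.\ paired cyclic shifts on $\{c_1,\ldots,c_m\} \cup \{g\}$, each concatenated with $D\setminus\{g\}$ at the bottom) keeps the scores of $c_1,\ldots,c_m,g$ mutually balanced, so the previous equalities persist up to a common additive constant that can be absorbed into $\lambda$. The scores of the candidates in $D\setminus\{g\}$ collected from these filler votes, together with their contribution from the earlier block, remain below $\lambda'$ since each such candidate is confined to the lowest positions throughout.

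The size bound is straightforward to verify: the cited construction contributes $\mathrm{poly}\bigl(\sum_{i,j} k_i^j\bigr)$ votes, and since $F$ is bounded by a polynomial in the entries of $\vec{\alpha}$ (hence polynomial in $|\mathcal{C}|$), we can take $k_i^j = O(|X_i| + F) = \mathrm{poly}(|\mathcal{C}|,|X_i|)$; the filler block for $D$ is of size $\mathrm{poly}(|\mathcal{C}|)$. Summing over $i$ and $j$ yields the stated bound $O\bigl(\mathrm{poly}(|\mathcal{C}|\cdot \sum_i |X_i|)\bigr)$. The one step that needs care, and which I would check most carefully, is the extension from a single dummy $g$ to a general $D$: verifying that the filler votes do not disturb the equalities between the $c_i$-scores while keeping every candidate in $D$ strictly below $\lambda'$.
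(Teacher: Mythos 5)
Your overall route is the same as the paper's: invoke the cited single-dummy construction and use the normalization (the gcd of the non-zero consecutive differences is one) to hit arbitrary integer targets, absorbing a common shift into $\lambda$ to cope with negative $X_i$. That part is fine up to a sign-convention quibble, though your claim that the Frobenius threshold $F$ is ``polynomial in $|\mathcal{C}|$'' is unjustified as stated: $F$ is controlled by the magnitudes of the score differences, which need not be polynomially bounded in the number of candidates (the lemma's own size bound silently glosses over the same point, so this is a soft spot rather than the main problem).

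The genuine gap is the promotion from one dummy to a set $D$ with $|D|\ge 2$. You apply the cited construction to the candidate set $\{c_1,\ldots,c_m,g\}$ and keep $D\setminus\{g\}$ pinned to the lowest positions, both (implicitly) in the construction votes and in the filler votes. But the score vector has length $|\mathcal{C}|=m+|D|$, so if $D\setminus\{g\}$ always occupies the bottom $|D|-1$ positions, the only relative score changes you can generate among $c_1,\ldots,c_m,g$ come from differences between the top $m+1$ entries of $\vec{\alpha}$. The gcd-one property is a property of the \emph{whole} difference vector, not of this top segment. Concretely, for the normalized veto rule $(1,1,\ldots,1,0)$ with $|D|\ge 2$, the top $m+1$ entries are all equal, so no relative change among $c_1,\ldots,c_m,g$ can be produced at all, and your filler votes, being balanced on $\{c_1,\ldots,c_m,g\}$ by design, do not create any; yet the lemma demands realizing arbitrary $\mathbf{X}$. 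The same obstruction appears whenever the non-zero differences of $\vec{\alpha}$ sit only near the bottom, or the top-segment differences have gcd larger than one. The correct generalization must run the difference-generating gadget over all $|\mathcal{C}|$ positions, letting $c_i$ and $g$ pass through the bottom positions as well, while the remaining dummies are rotated and balanced across votes so that each stays strictly below $\lambda$; that is, the extra dummies have to be built into the construction itself rather than patched on afterwards with bottom-ranked filler votes.
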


\begin{restatable}{theorem}{OneMissingPair}\shortversion{[$\star$]}
\label{thm:one-missing-pair}
Let $s$ be a \nice{} scoring rule. If $s$ is differentiating, then the \PW problem is \NPC{}, even if every vote has at most one undetermined pair of candidates. Otherwise, the \PW problem for $s$ is in \Pb if every vote has at most one undetermined pair of candidates.
\end{restatable}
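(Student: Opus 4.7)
For the Borda-like case, normalization together with $\delta(\overrightarrow{s_m}) = \Delta(\overrightarrow{s_m})$ forces every nonzero entry of the difference vector $d(\overrightarrow{s_m})$ to equal $1$. In any partial vote whose single undetermined pair is $(x,y)$, transitivity makes $x$ and $y$ adjacent in every completion, so the two extensions either preserve all scores (when the two adjacent scores are equal) or hand exactly one extra point to one of the two endpoints. For pairs that involve $c$, putting $c$ first is unambiguously optimal, which fixes $c$'s final score at some value $T$. The remaining score-affecting pairs involve only candidates other than $c$ and each distributes one bonus point to one of its two endpoints. I would cast this as a max-flow feasibility instance: the source sends one unit of flow to every pair-node (capacity $1$); every pair-node forwards to its two endpoint candidate-nodes (capacity $1$); and each candidate $d \neq c$ drains to the sink with capacity $\max\{T - f(d),\, 0\}$, where $f(d)$ is $d$'s score from all positions determined once the $c$-pairs are resolved. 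A winning extension exists iff the max flow saturates the source, giving a polynomial-time algorithm.

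For the differentiating case, \Cref{prop:diff-borda-like} guarantees $\delta(\overrightarrow{s_m}) < \Delta(\overrightarrow{s_m})$ for all sufficiently large $m$, so we have two distinct positive swing magnitudes to work with. I would reduce from \SAT. Given a formula $\phi$ on variables $x_1, \ldots, x_n$ and clauses $C_1, \ldots, C_{m'}$, the plan is to use the distinguished candidate $c$, a dummy $g$ that absorbs slack via \Cref{score_gen}, a variable-candidate $X_i$ per variable, and a clause-candidate $D_j$ per clause. For each of the four literal-occurrences of $x_i$ in $\phi$, I would add a partial vote whose single undetermined pair sits at a position with score-difference $\delta$ and involves $X_i$, so that variable consistency across the four occurrences is enforced by the need for $X_i$'s total bonus to land in a narrow band dictated by the baseline chosen through \Cref{score_gen}. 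For each clause $C_j$, I would add three partial votes whose single undetermined pairs are placed at positions with score-difference $\Delta$, each coupling $D_j$ to the sign-appropriate endpoint of one of its literals; baselines tuned via \Cref{score_gen} ensure that $D_j$'s score stays at most $s(c)$ only if at least one of its three votes is oriented in the ``satisfying'' way. Then $c$ co-wins iff $\phi$ is satisfiable.

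The main obstacle I anticipate is the variable-consistency gadget, since each vote contributes only a single binary bit of freedom and four such bits must be forced to agree using only the linear score budget of $X_i$. My plan addresses this by separating the $\delta$-sized and $\Delta$-sized swings into disjoint ``score budgets''---consistency constraints live in the $\delta$-coordinate, clause-satisfaction constraints in the $\Delta$-coordinate---so that the two systems of inequalities cannot cancel one another. This separation is precisely what fails for a Borda-like rule, where $\delta = \Delta$ collapses the two budgets into one and the resulting feasibility problem becomes the orientation/flow problem solved in the first paragraph. This dovetailing explains why the dichotomy lands exactly at the Borda-like/differentiating boundary.
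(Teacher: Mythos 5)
Your polynomial-time half is essentially the paper's own argument: place $c$ as high as possible, discard undetermined pairs that straddle equal scores, observe that each remaining incomplete vote hands exactly one point to one of two candidates, and decide feasibility by whether a max flow saturates the source. One small correction: with sink capacities $\max\{T-f(d),0\}$ alone, a candidate with $f(d)>T$ that is incident to no undetermined pair would go undetected (the flow saturates vacuously), so you must also add the paper's preliminary check that no candidate's minimum score already exceeds $s(c)$.

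The hardness half, however, has a genuine gap exactly at the point you flag. With a single candidate $X_i$ per variable, a clause-vote flip adds the same quantity to the same candidate $X_i$ whether the occurrence of $x_i$ in that clause is positive or negative; the only constraint you impose on $X_i$ is an upper bound on one scalar score, which is symmetric in positive and negative occurrences and therefore cannot force the set of flipped clause votes to come from a consistent assignment. Your proposed repair---keeping consistency in a ``$\delta$-budget'' and satisfaction in a ``$\Delta$-budget'' that cannot cancel---is arithmetically unsound: $\delta$ and $\Delta$ are just positive integers and nothing prevents, say, $\Delta=2\delta$ (a normalized differentiating smooth rule can have differences $1$ and $2$), so contributions of the two sizes to a single candidate's score mix freely and one inequality does not split into two independent budgets; similarly, a bound on the total $\delta$-weight touching $X_i$ cannot force four independent bits to agree. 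The paper's construction resolves this with two candidates per variable, $b_i$ for $x_i$ and $b_i^\pr$ for $\bar x_i$, plus one variable vote whose unique undetermined pair is $(b_i,b_i^\pr)$ sitting at the \emph{large} difference $D$: flipping it transfers a $D$-swing between the two, and with baselines fixed via \Cref{score_gen} ($s^+(b_i)=s^+(w)+1-d$, $s^+(b_i^\pr)=s^+(w)+1-d-D$) and the fact that each literal occurs exactly twice in a \SAT instance, a literal candidate can absorb the at most $2d$ it gains from clause flips iff its variable vote is oriented the corresponding way (net change $2d-D$ with $D\ge d+1$). Note also that your magnitude allocation is reversed: if the clause votes carry $\Delta$ and the variable gadget only $\delta$, the blocking condition for the false literal (cannot afford even one $\Delta$) and the affordability condition for the true literal (can afford $2\Delta$ minus a $\delta$ compensation) force $\delta>\Delta$, a contradiction. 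Without separate candidates per literal sign and a large swing coupling them, the reduction as planned does not yield the reverse direction of the equivalence.
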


\nmtodo{I have not changed the proof below so that it is easy to merge your changes. Please change the start of the proof to "For the hardness result, we reduce from..." and add the maxflow argument at the end. Also, make a reference to Proposition 1 above.} 
  
\begin{proof} For the hardness result, we reduce from an instance of $(3,B2)$-SAT. Let $\II$ be an instance of $(3,B2)$-SAT, over the variables $\VV = \{x_1, \ldots, x_n\}$ and with clauses $\TT = \{c_1, \ldots, c_t\}$. To construct the reduced instance $\II^\pr$, we introduce two candidates for every variable, and one candidate for every clause, one special candidate $w$, and a dummy candidate $g$ to achieve desirable score differences. Notationally, we will use $b_i$ (corresponding to $x_i$) and $b_i^\prime$ (corresponding to $\bar{x}_i$) to refer to the candidates based on the variable $x_i$ and $e_j$ to refer to the candidate based on the clause $c_j$. To recap, the set of candidates are given by:
$$\CC = \{ b_i, b_i^\pr ~|~ x_i \in \VV\} \cup \{e_j ~|~ c_j \in \TT \} \cup \{w, g\}.$$

Consider an arbitrary but fixed ordering over $\CC$, such as the lexicographic order. In this proof, the notation $\overrightarrow{\CC^\pr}$ for any $\CC^\pr \subseteq \CC$ will be used to denote the lexicographic ordering restricted to the subset $\CC^\pr$. Let $m$ denote $|\CC| = 2n + t + 2$, and let $\overrightarrow{s_m}=\left(\alpha_m,\alpha_{m-1},\dots,\alpha_1\right)\in\mathbb{N}^m$. Since $s$ is a \nice{} differentiating scoring rule, we have that there exist $1 \leq p,q \leq m$ such that $|p-q| > 1$ and $ \alpha_p - \alpha_{p-1} > \alpha_q - \alpha_{q-1} \geq 1.$

We use $D$ to refer to the larger of the two differences above, namely $\alpha_p - \alpha_{p-1}$ and $d$ to refer to $\alpha_q - \alpha_{q-1}$. We now turn to a description of the votes. Fix an arbitrary subset $\CC_1$ of $(m-p)$ candidates. For every variable $x_i \in \VV$, we introduce the following complete and partial votes.
$$\ppp_i := \overrightarrow{\CC_1} \succ b_i \succ b_i^\pr \succ \overrightarrow{\CC \setminus \CC_1} \mbox{ and } \ppp_i^\pr := \ppp_i \setminus \{(b_i,b_i^\pr)\}$$

We next fix an arbitrary subset $\CC_2\subset\CC$ of $(m-q)$ candidates. Consider a literal $\ell$ corresponding to the variable $x_i$. We use $\ell^\star$ to refer to the candidate $b_j$ if the literal is positive and $b_j^\pr$ if the literal is negated. For every clause $c_j \in \TT$ given by $c_j = \{\ell_1, \ell_2, \ell_3\}$, we introduce the following complete and partial votes.
$$\qqq_{j,1} := \overrightarrow{\CC_2} \succ e_j \succ \ell_1^\star \succ \overrightarrow{\CC \setminus \CC_2} \mbox{ and } \qqq_{j,1}^\pr := \qqq_{j,1} \setminus \{(e_j,\ell_1^\star)\}$$
$$\qqq_{j,2} := \overrightarrow{\CC_2} \succ e_j \succ \ell_2^\star \succ \overrightarrow{\CC \setminus \CC_2} \mbox{ and } \qqq_{j,2}^\pr := \qqq_{j,2} \setminus \{(e_j,\ell_2^\star)\}$$
$$\qqq_{j,3} := \overrightarrow{\CC_2} \succ e_j \succ \ell_3^\star \succ \overrightarrow{\CC \setminus \CC_2} \mbox{ and } \qqq_{j,3}^\pr := \qqq_{j,3} \setminus \{(e_j,\ell_3^\star)\}$$

Let us define the following sets of votes:
$$\PP = \left( \bigcup_{i=1}^n \ppp_i \right) \cup \left(\bigcup_{\substack{1 \leq j \leq t,}\\\substack{1\leq b \leq 3}} \qqq_{j,b} \right)
\mbox{ and } \PP^\pr = \left( \bigcup_{i=1}^n \ppp_i^\pr \right) \cup \left(\bigcup_{\substack{1 \leq j \leq t,}\\ \substack{1\leq b \leq 3}} \qqq_{j,b}^\pr \right) $$ 

There exists a set of complete votes \WW of size polynomial in $m$ with the following properties due to \Cref{score_gen}. Let $s^+: \CC \longrightarrow\NB$ be a function mapping candidates to their scores from the set of votes $\PP \cup\WW$. Then $\WW$ can be constructed to ensure the scores as in \Cref{tbl:score_differentiating}. We now define the instance $\II^\pr$ of \PW to be $(\CC, \PP^\pr \cup \WW, w)$. This completes the description of the reduction. We now turn to a proof of the equivalence. Before we begin making our arguments, observe that since $w$ does not participate in any undetermined pairs of the votes in $\PP^\pr$, it follows that the score of $w$ continues to be $s^+(w)$ in any completion of $\PP^\pr$. The intuition for the construction, described informally, is as follows. The score of every ``clause candidate'' needs to decrease by $d$, which can be achieved by pushing it down against its literal partner in the $\qqq_j$-votes. However, this comes at the cost of increasing the score of the literals by $2d$ (since every literal appears in at most two clauses). It turns out that this can be compensated appropriately by ensuring that the candidate corresponding to the literal appears in the $(p-1)^{th}$ position among the $\ppp$-votes, which will adjust for this increase. Therefore, the setting of the $(b_i^\pr,b_i)$ pairs in a successful completion of $\ppp_i$ can be read off as a signal for how the corresponding variable should be set by a satisfying assignment. \shortversion{We defer the formal proof of equivalence of the two instances and the polynomial time solvable case to the appendix.}
%
\begin{table}
\centering
 \begin{tabular}{|c|c|}\hline
  $s^+(e_j) = s^+(w) + d~\forall~1 \leq j \leq t$ & $s^+(b_i) = s^+(w) + 1 - d~\forall~1 \leq i \leq n$\\
  $s^+(b_i^\pr) = s^+(w) + 1 - d - D~\forall~1 \leq i \leq n$ & $s^+(g) < s^+(w)$ \\\hline
 \end{tabular}
 \caption{Score of candidates from $\PP \cup\WW$.}\label{tbl:score_differentiating}
\end{table}
\longversion{We claim that \II and $\II^\pr$ are equivalent.

We now turn to a formal proof. In the forward direction, let $\tau: \VV \rightarrow \{0,1\}$ be a satisfying assignment for $\II$. Then we have the following completions of the votes in $\PP^\pr$. To begin with, for all $1\leq i \leq n$, we have: 
\begin{equation*}
  \ppp_i^{\pr\pr} := \left\{
  \begin{array}{rl}
\overrightarrow{\CC_1} \succ b_i^\pr \succ b_i \succ\overrightarrow{\CC \setminus \CC_1} & \text{if } \tau(x_i) = 1,\\ 
\overrightarrow{\CC_1} \succ b_i \succ b_i^\pr \succ \overrightarrow{\CC \setminus \CC_1} & \text{if } \tau(x_i) = 0.
\end{array} \right.	
\end{equation*}

For a clause $c_j = \{\ell_1, \ell_2, \ell_3\}$, suppose $\tau(\ell_1) = 1$. Then we have the following completions for the votes $\qqq_{j,b}$, $1 \leq b \leq 3$:
$$\qqq^{\pr\pr}_{j,1} := \overrightarrow{\CC_2} \succ \ell_1^\star \succ e_j \succ \overrightarrow{\CC \setminus \CC_2}, $$
$$\qqq^{\pr\pr}_{j,2} := \overrightarrow{\CC_2} \succ e_j \succ \ell_2^\star \succ \overrightarrow{\CC \setminus \CC_2}, $$
$$\qqq^{\pr\pr}_{j,3} := \overrightarrow{\CC_2} \succ e_j \succ \ell_3^\star \succ \overrightarrow{\CC \setminus \CC_2}$$

The completions for the cases when $\tau(\ell_2) = 1$ or $\tau(\ell_3) = 1$ are analogously defined. Now consider the election given by the complete votes described above, which we denote by $\PP^{\pr\pr}$. Let $s^\star: \CC \longrightarrow\NB$ be the function that maps candidates to their scores from the votes $\PP^{\pr\pr} \cup\WW$. Then, we have the following. 

\begin{itemize}
	\item Since $\tau$ is a satisfying assignment, for every $1 \leq j \leq t$, we have that the candidate $e_j$ swaps places with one of its companions in at least one of the votes $\qqq_{j,b}$, $1 \leq b \leq 3$. Therefore, it loses a score of at least $d$, leading to the observation that $s^\pr(e_j) \leq s^+(e_j) - d  = s^+(w)$ for all $1 \leq j \leq t$. 
	\item We now turn to a candidate $b_i$, for some $1 \leq i \leq n$. If $\tau(x_i) = 0$, then notice that the score of $b_i$ does not change, and therefore $s^\star(b_i) = s^+(b_i) = s^+(w) - d + 1 \leq s^+(w)$, since $d \geq 1$. Otherwise, note that it decreases by $D$ and increases by at most $2d$, implying that $s^\star(b_i) = s^+(b_i) + 2d - D =  s^+(w) + 1 - D + d \leq s^+(w)$, as $(D-d) \geq 1$. 
	\item Finally, consider the candidates $b_i^\pr$, for for some $1 \leq i \leq n$. If $\tau(x_i) = 1$, then notice that the score of $b_i^\pr$ increases by $D$, and therefore $s^\star(b_i^\pr) = s^+(b_i^\pr) + D = s^+(w) - d + 1 \leq s^+(w)$, since $d \geq 1$. Otherwise, note that its score increases by at most $2d$, implying that $s^\star(b_i^\pr) = s^+(b_i^\pr) + 2d =  s^+(w) + 1 - D + d \leq s^+(w)$, as $(D-d) \geq 1$. 
\end{itemize}

This completes the forward direction of the argument. In the other direction, let $\PP^{\pr\pr}$ be any completion of the votes in $\PP^\pr$ which makes $w$ a co-winner with respect to $s$. Let $s^\star$ be the function that computes the scores of all the candidates with respect to $\PP^{\pr\pr}$. We define the following assignment to the variables of $\II$ based on $\PP^{\pr\pr}$:
\begin{equation*}
  \tau(x_i) := \left\{
  \begin{array}{rl}
1 & \text{if } \overrightarrow{\CC_1} \succ b_i^\pr \succ b_i \succ\overrightarrow{\CC \setminus \CC_1} \in \PP^{\pr\pr},
\\ 
0 & \text{if }\overrightarrow{\CC_1} \succ b_i \succ b_i^\pr \succ \overrightarrow{\CC \setminus \CC_1} \in \PP^{\pr\pr}.
\end{array} \right.	
\end{equation*}

We claim that $\tau$, as defined above, satisfies every clause in $\II$. Consider any clause $c_j \in \TT$. Observe that the score of the corresponding candidate, $e_j$, must decrease by at least $d$ in any valid completion, since $s^+(e_j) = s^+(w) + d$. Therefore, for at least one of the votes $\qqq_{j,b}$, $1 \leq b \leq 3$, we must have a completion where $e_j$ appears at position $q-1$. We claim that the literal $\ell$ that consequently appears at position $q$ must be set to one by $\tau$. Indeed, suppose not. Then we have two cases, as follows:
\begin{itemize}
	\item Suppose the literal $\ell$ corresponds to the positive appearance of a variable $x_j$. If $\tau(x_j) = 0$, then the score of $b_j$ has increased by $d$, making its final score equal to $s^+(w) + 1$, which is a contradiction.
	\item Suppose the literal $\ell$ corresponds to the negated appearance of a variable $x_j$. If $\tau(x_j) = 1$, then the score of $b_j$ has increased by $D + d$, making its final score equal to $s^+(w) + 1$, which is, again, a contradiction.
\end{itemize}

Now we turn to the proof of the polynomial time solvable case. Let the input instance of \PW be $(\CC, \PP, c)$ where every partial vote in \PP has at most one pair of candidates whose ordering is undetermined. In every partial vote in \PP we place the candidate $c$ as high as possible. Suppose in a partial vote \ppp in \PP, one undetermined pair of candidates appears at positions $i$ and $i+1$ (from the bottom) and $\alpha_i = \alpha_{i+1}$. Then we fix the ordering of the undetermined pair of candidates in \ppp arbitrarily. Let us call the resulting profile $\PP^\pr$. It is easy to see that $(\CC, \PP, c)$ is a \YES instance if and only if $(\CC, \PP^\pr, c)$ is a \YES instance. Notice that the position of $c$ in every vote in $\PP^\pr$ is fixed and thus we know the score of $c$; let it be $s(c)$. Also we can compute the minimum score that every candidate receives over all extensions of $\PP^\pr$. Let $s(w)$ be the minimum score of candidate $w$. If there exists a candidate $z$ such that $s(z)>s(c)$, then we output \NO. Otherwise we construct the following flow graph $\GG=(\VV, \EE)$. For every partial vote \vvv in $\PP^\pr$, we add a vertex $v_\vvv$ in \VV. We also add a vertex $v_w$ in \VV for every candidate $w$ other than $c$. We also add two special vertices $s$ and $t$ in \VV. We add an edge from $s$ to $v_\vvv$ for every $\vvv\in\PP^\pr$ of capacity $1$, an edge from $v_w$ to $t$ of capacity $s(c)-s(w)$ for every candidate $w$ other than $c$. If a vote $\vvv\in\PP^\pr$ has an undetermined pair $(x, y)$ of candidates, we add an edge from $v_\vvv$ to $v_x$ and $v_y$ each of capacity $1$. Let the number of votes in $\PP^\pr$ which are not complete be $t$. Now it is easy to see that the $(\CC, \PP^\pr, c)$ is a \YES instance if and only if there is a flow of size $t$ in \GG.}
\end{proof}

We make a couple of quick remarks before moving on to our next result. Observe that any hardness result that holds for instances where every vote has at most $k$ undetermined pairs also holds for instances where every vote has at most $k^\prime$ undetermined pairs with $k^\prime > k$, by a standard special case argument. Therefore, the next question for us to address is that of whether the \PW problem is in \Pb for all Borda-like scoring rules when the number of undetermined pairs in every vote is at most two.\longversion{ It is not hard to see that the maxflow algorithm that we used previously does not immediately work. For example, if we have a scenario with the partial vote $ \{\cdots x \succ y \succ z \succ \cdots \} \setminus \{(x,y),(x,z)\},$ then the maxflow construction in the previous proof, if used directly, may lead us to a solution that completes the vote in accordance with $x \succ y$ and $z \succ x$, violating transitivity. In fact, we will use this kind of a structure to engineer a reduction from a problem called \TDM{}, which is known to be \NPC, and is defined as follows.}\shortversion{ We show that the complexity of the \PW problem for the Borda-like scoring rules crucially depends on the presence (or absence) some particular patterns in the score vector. We begin with a hardness result which uses a reduction from the \TDM problem. The \TDM problem is known to be \NPC and is defined as follows.} 

\begin{definition}[\TDM]
 Given three disjoint sets $\XX, \YY,$ and \ZZ of size $t$ each and a collection \SS of subsets of $\XX\cup\YY\cup\ZZ$ each containing exactly one element from \XX, \YY, and \ZZ, does there exist a sub-collection $\SS^\pr\subset\SS$ of size $t$ such that $\cup_{\AA\in\SS^\pr} \AA = \XX \cup \YY \cup \ZZ$.
\end{definition}

To help us deal with the nature of the score vectors considered, we will use the following proposition, which again reflects the monotonicity property alluded to earlier.  

\begin{restatable}{proposition}{OneOneContaminated}\shortversion{[$\star$]}
\label{prop:11-contaminated}
	Let $s$ be a normalized \nice{} scoring rule that is not $\langle 1, 1 \rangle$-difference-free. Then there exists some $n_0 \in \NB^+$ such that for every $m \geq n_0$, $s$ is  $\langle 1, 1 \rangle$-contaminated at $m$.
\end{restatable}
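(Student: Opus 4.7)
The plan is to prove this by an inductive monotonicity argument that leverages the restrictive nature of admissible insertion positions in a smooth scoring rule. The intuition is that the pattern $\langle 1,1\rangle$ in the difference vector corresponds to three consecutive scores of the form $(\alpha+2, \alpha+1, \alpha)$, and the smoothness rules prevent any insertion from disrupting such a block, since all three differences between the block's entries are strictly positive (so none of the intermediate positions is admissible), while insertion outside the block does not affect the block.

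First I would unpack the hypothesis. By definition, $s$ not being $\langle 1, 1\rangle$-difference-free means there are arbitrarily large $m^{\star}$ for which $d(\overrightarrow{s_{m^{\star}}})$ contains the pattern $\langle 1,1\rangle$. In particular, we can fix such an $m^{\star}$ that is at least as large as the threshold $n_0(s)$ beyond which the smoothness insertion description of $s$ is guaranteed to hold. This $m^{\star}$ will be the base case of the induction and will play the role of the desired $n_0$.

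Next I would carry out the induction. Assume for some $m \geq m^{\star}$ that $d(\overrightarrow{s_m})$ contains $\langle 1, 1 \rangle$, i.e., there is an index $k$ with $\alpha_{k+2}-\alpha_{k+1}=1$ and $\alpha_{k+1}-\alpha_k=1$. Since $s$ is smooth and $m+1 > n_0(s)$, the vector $\overrightarrow{s_{m+1}}$ is obtained from $\overrightarrow{s_m}$ by inserting a new score value $\alpha$ at an admissible position $i \in \{0, 1, \ldots, m\}$. The key observation is that the positions internal to our three-entry block, namely $i=k$ and $i=k+1$, are both inadmissible: indeed $\alpha_{k+1} \neq \alpha_k$ and $\alpha_{k+2} \neq \alpha_{k+1}$, so neither position satisfies the admissibility condition $\alpha_{i+1}=\alpha_i$, and neither equals $0$ or $m$. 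Consequently, the insertion takes place at some position $i \notin \{k, k+1\}$, which means that $\alpha_{k+2}, \alpha_{k+1}, \alpha_k$ remain three consecutive coordinates of $\overrightarrow{s_{m+1}}$; hence the pattern $\langle 1,1\rangle$ survives in $d(\overrightarrow{s_{m+1}})$, completing the inductive step.

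Setting $n_0 := m^{\star}$ then yields the proposition. The only potential subtlety will be to double-check that the admissibility condition indeed forbids insertion between two unequal consecutive scores, i.e., that the displayed insertion rule really requires $\alpha_{i+1}=\alpha_i$ for internal $i$; once that is confirmed directly from the definition, the argument above goes through without further complications.
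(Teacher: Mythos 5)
Your proof is correct and follows essentially the same route as the paper's: locate the block $(\alpha+2,\alpha+1,\alpha)$ witnessing the $\langle 1,1\rangle$ pattern, observe that the two internal positions are inadmissible so no smooth insertion can split the block, and induct on $m$. Your extra care in choosing the base case $m^{\star}$ above the smoothness threshold is a minor refinement the paper leaves implicit, and the argument goes through.
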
\longversion{

\begin{proof}
If $s$ is not $\langle 1, 1 \rangle$-difference-free, then there exists some $\ell \in \NB^+$ for which $s$ is $\langle 1, 1 \rangle$-contaminated at $\ell$. In particular, this implies that there exists an index $i$ for which $\alpha_{i+1} - \alpha_i = 1$ and $\alpha_i - \alpha_{i-1} = 1$. We now argue that $s$ is $\langle 1, 1 \rangle$-contaminated at $m$ for every $m \geq \ell$. This follows from the fact that the differences $(\alpha_{i+1} - \alpha_i)$ and $(\alpha_i - \alpha_{i-1})$  are ``carried forward''. In particular since the positions $i-1$ and $i$ are not admissible, it is not possible to diminish these differences in any score vector $s_{\ell+1}$ obtained from $s_\ell$, and repeating this argument for all $m \geq \ell$ gives us the desired claim. 
\end{proof}
}


We are now ready to state our next result, which shows that if there are at most $2$ undetermined pairs of candidates in every vote, and we are dealing with a smooth Borda-like scoring rule $s$, then the \PW{} problem is \NPC{} if $s$ is $\langle 1,1 \rangle$-contaminated, and solvable in polynomial time otherwise.

\begin{restatable}{theorem}{TwoMissingPairs}\shortversion{[$\star$]}
\label{thm:two-missing-pairs}
Let $s$ be a \nice{}, Borda-like scoring rule. If $s$ is $\langle 1,1 \rangle$-contaminated, the \PW problem is \NPC{}, even if every vote has at most $2$ undetermined pairs of candidates.  On the other hand, if $s$ is $\langle 1,1 \rangle$-difference-free, then the \PW problem for $s$ is in \Pb if every vote has at most $2$ undetermined pairs of candidates.
\end{restatable}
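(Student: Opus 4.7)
The plan is to prove the theorem in two independent parts. For the hardness half, I will reduce from \TDM, exploiting that \Cref{prop:11-contaminated} guarantees (for every sufficiently large $m$) a position $p$ in $\overrightarrow{s_m}$ with $\alpha_{p+1} - \alpha_p = \alpha_p - \alpha_{p-1} = 1$. For the algorithmic half, I will adapt the max-flow construction from the proof of \Cref{thm:one-missing-pair}, using the absence of consecutive $1$-differences to decouple the two undetermined pairs within each vote.

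For the hardness reduction, given a \TDM instance $(X, Y, Z, \mathcal{S})$ with $|X| = |Y| = |Z| = t$, I will introduce candidates $\{c_e : e \in X \cup Y \cup Z\}$, a distinguished $w$, and a dummy $g$, padding with extra dummies if necessary so that the total number of candidates exceeds the threshold in \Cref{prop:11-contaminated}. For each triple $A_\ell = \{x, y, z\} \in \mathcal{S}$ with $x \in X$, $y \in Y$, $z \in Z$, the partial vote $\ppp_\ell$ places $c_x, c_y, c_z$ at positions $p+1, p, p-1$ of an otherwise fixed linear order, leaving only the pairs $(c_x, c_y)$ and $(c_x, c_z)$ undetermined. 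Because $c_y \succ c_z$ stays fixed, transitivity restricts $\ppp_\ell$ to precisely three completions, whose net score effects on $(c_x, c_y, c_z)$ are $(0, 0, 0)$, $(-1, +1, 0)$, and $(-2, +1, +1)$. Using \Cref{score_gen}, I will append filler votes so that, after every $\ppp_\ell$ is completed by the first option, each $c_e$ with $e \in X$ starts $2$ points above $w$, each $c_e$ with $e \in Y \cup Z$ starts $1$ point below $w$, and $g$ sits strictly below $w$.

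The forward direction then translates a \TDM cover into a $w$-winning completion by choosing the third completion option on $\ppp_\ell$ iff $A_\ell$ is in the cover; every element of $X \cup Y \cup Z$ is touched by exactly one swap, bringing all scores down to that of $w$. The reverse direction is the crux: letting $\alpha$ and $\beta$ count the votes completed by the second and third options respectively, the $Y$-, $Z$-, and $X$-candidate constraints give $\alpha + \beta \le t$, $\beta \le t$, and $\alpha + 2\beta \ge 2t$, which together pin $\beta = t$ and $\alpha = 0$; the ``selected'' sets are precisely those completed by the third option, and they form an exact cover.

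For the polynomial-time direction, I will first place $c$ as high as possible in every vote, then fix the ordering of any undetermined pair sitting at a $0$-difference position, since such a swap leaves all scores unchanged. The surviving undetermined pairs lie at $1$-difference positions, and by $\langle 1, 1 \rangle$-difference-freeness no two such pairs in the same vote can share a candidate (a shared candidate would force consecutive $1$-differences). Hence any two surviving pairs in a vote involve disjoint candidates, and the max-flow network from \Cref{thm:one-missing-pair} applies essentially verbatim, with the capacity of the source-to-vote edge set to the number of surviving pairs in that vote. The main obstacle lies in the reverse direction of the hardness reduction: the partial vote admits three completions rather than two, so the intermediate option could a priori be used in a $w$-winning extension, and ruling it out requires exactly the tight interaction of the three constraint families outlined above. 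A secondary difficulty in the algorithmic half is the configuration where two undetermined pairs share a candidate, and this is precisely what $\langle 1, 1 \rangle$-difference-freeness forbids.
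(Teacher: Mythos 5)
Your hardness half is essentially the paper's own proof: the same \TDM gadget that places the triple at the $\langle 1,1\rangle$-contaminated positions with the two undetermined pairs $(x,y)$ and $(x,z)$, the same target scores set via \Cref{score_gen} ($+2$ for $X$-candidates, $-1$ for $Y$- and $Z$-candidates), and your counting argument ($\alpha+\beta\le t$, $\beta\le t$, $\alpha+2\beta\ge 2t$, forcing $\beta=t$, $\alpha=0$) is a clean and correct rendering of the paper's reverse direction.

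The gap is in the algorithmic half. You claim that $\langle 1,1\rangle$-difference-freeness implies that no two surviving undetermined pairs of a vote share a candidate. This confuses a property of the score vector with a property of the input: difference-freeness restricts $d(\overrightarrow{s_m})$, not which partial votes may appear, and a vote with undetermined pairs $(x,y),(x,z)$ and $y\succ z$ fixed (so $x$ floats over three consecutive positions $j+1,j,j-1$) is a perfectly legal input --- it is exactly the structure your own hardness gadget uses. Difference-freeness only guarantees that the two relevant differences are not both $1$; they can be $(1,0)$ or $(0,1)$. In the $(1,0)$ case the three completions either promote $x$ or promote $y$ (the latter in two score-equivalent ways), $z$ never gains, and the vote contributes exactly one unit of score to one of $\{x,y\}$ --- not two independent units, as your ``verbatim'' network with source capacity equal to the number of surviving pairs would assert (that network could even route a unit to $z$, which no completion realizes). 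Your preprocessing does not rescue this either: ``arbitrarily'' fixing the pair $(x,z)$ at the $0$-difference boundary is unsound, since fixing $z\succ x$ eliminates the completion $x\succ y\succ z$, which is not score-equivalent to the others, and can turn a \YES instance into a \NO. The theorem is still true: a case analysis on the two differences shows that each shared-candidate vote reduces to a single binary choice between two specific candidates (or to no choice at all when both differences are $0$), and a one-unit gadget for such votes restores the max-flow argument --- this is exactly the kind of case analysis the paper spells out for the three-pair situation in \Cref{thm:three-missing-pairs}. But as written, your disjointness claim is false and the flow construction fails on such votes.
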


\nmtodo{Again, please change the start of the proof to "For the hardness result, we reduce from..." and add the maxflow argument at the end, just indicating that it is similar to the previous maxflow argument. Also, make a reference to Proposition 2 above, and use the language of the difference vector to start off the proof.} 
\longversion{
\begin{proof}
 Since the scoring rule is $\langle 1,1 \rangle$-contaminated, for every $\el\ge N_0$ for some constant $N_0$, there exists an index $i\in[\el-2]$ in the score vector $(\alpha_j)_{j\in[\el]}$ such that $\alpha_{i+2} - \alpha_{i+1} = \alpha_{i+1} - \alpha_i = 1$. We begin with the proof of hardness. The \PW problem is clearly in \NP. To prove \NP-hardness of \PW, we reduce \PW from \TDM. Let $\II = (\XX\cup\YY\cup\ZZ, \SS)$ be an arbitrary instance of \TDM. Let $|\XX|=|\YY|=|\ZZ|= m > N_0$. We construct an instance $\II^\pr$ of \PW from \II as follows. 
 \[ \CC = \XX\cup\YY\cup\ZZ\cup\{c,d\} \]
 
 For every $\sss=(x, y, z)\in\SS$, let us consider the following vote $\ppp_\sss$. 
 \[ \ppp_\sss = \overrightarrow{(\CC\setminus\CC_\sss)} \suc x \suc y \suc z \suc \overrightarrow{\CC_\sss}, \text{ for some fixed } \CC_\sss\subset(\CC\setminus\{x, y, z\}) \text{ with } |\CC_\sss| = i-1 \]
 
 Using $\ppp_\sss$, we define a partial vote $\ppp_\sss^\pr$ as follows. 
 \[ \ppp_\sss^\pr = \ppp_\sss \setminus \{(x,y), (x,z)\} \]
 
 Let us define $\PP = \cup_{\sss\in\SS} \ppp_\sss$ and $\PP^\pr = \cup_{\sss\in\SS} \ppp_\sss^\pr$. There exists a set of complete votes \QQ of size polynomial in $m$ with the scores as in \Cref{tbl:score_one_one_contaminated} due to \Cref{score_gen}. Let $s_{\PP\cup\QQ}:\CC\longrightarrow\NB$ be a function mapping candidates to their scores from the set of votes $\PP\cup\QQ$.
 
 
 \begin{table}[!htbp]
 \centering
  \begin{tabular}{|c|c|}\hline
   $s_{\PP\cup\QQ} (x) = s_{\PP\cup\QQ} (c) + 2, ~\forall x\in\XX$ & $s_{\PP\cup\QQ} (y) = s_{\PP\cup\QQ} (c) - 1, \forall y\in\YY$\\
   $s_{\PP\cup\QQ} (z) = s_{\PP\cup\QQ} (c) - 1, \forall z\in\ZZ$ & $s_{\PP\cup\QQ} (d) < s_{\PP\cup\QQ} (c)$\\\hline
  \end{tabular}
  \caption{Score of candidates from $\PP \cup\WW$.}\label{tbl:score_one_one_contaminated}
 \end{table}

 We now define the instance $\II^\pr$ of \PW to be $(\CC, \PP^\pr \cup \QQ, c)$. Notice that the number of undetermined pairs in every vote in $\II^\pr$ is at most $2$. This finishes the description of the \PW instance. \shortversion{We defer the formal proof of equivalence of the two instances and the polynomial time solvable case to the appendix.} We claim that \II and $\II^\pr$ are equivalent. 
 
 In the forward direction, suppose that \II be a \YES instance of \TDM. Then, there exists a collection of $m$ sets $\SS^\pr\subset\SS$ in \SS such that $\cup_{\AA\in\SS^\pr} \AA = \XX\cup\YY\cup\ZZ$. We extend the partial vote $\ppp_\sss^\pr$ to $\bar{\ppp}_\sss$ as follows for $\sss\in\SS$. 
 $$
 \bar{\ppp}_\sss = 
 \begin{cases}
  \overrightarrow{(\CC\setminus\CC_\sss)} \suc y \suc z \suc x \suc \overrightarrow{\CC_\sss} & \sss\in\SS^\pr\\
  \overrightarrow{(\CC\setminus\CC_\sss)} \suc x \suc y \suc z \suc \overrightarrow{\CC_\sss} & \sss\notin\SS^\pr
 \end{cases}
 $$
 
 We consider the extension of \PP to $\bar{\PP} = \cup_{\sss\in\SS} \bar{\ppp}_\sss$. We claim that $c$ is a co-winner in the profile $\bar{\PP}\cup\QQ$ since $s_{\bar{\PP}\cup\QQ} (c) = s_{\bar{\PP}\cup\QQ} (x) = s_{\bar{\PP}\cup\QQ} (y) = s_{\bar{\PP}\cup\QQ} (z) > s_{\bar{\PP}\cup\QQ} (d)$.
 
 For the reverse direction, suppose the \PW instance $\II^\pr$ be a \YES instance. Then there exists an extension of the set of partial votes $\PP^\pr$ to a set of complete votes $\bar{\PP}$ such that, $c$ is a co-winner in $\bar{\PP}\cup\QQ$. Let us call the extension of $\ppp_\sss^\pr$ in $\bar{\PP}$ $\bar{\ppp}_\sss$. We first claim that, for every $x\in\XX$, there exists exactly one $\sss\in\SS$ such that $\bar{\ppp}_\sss = \overrightarrow{(\CC\setminus\CC_\sss)} \suc y \suc z \suc x \suc \overrightarrow{\CC_\sss}$. Notice that, the score of $c$ is same in every extension of $\PP^\pr$. Hence, for $c$ to co-win, every candidate $x\in\XX$ must lose $\alpha_{i+2} - \alpha_i$ points. If there are more than one vote in $\bar{\PP}$ where $x$ is placed after some candidate $y\in\YY$, then the total increase of scores of all the candidates in \YY is more than $m(\alpha_{i+2} - \alpha_{i+1})$ and thus there exists a candidate $y^\pr\in\YY$ whose score has increased by strictly more than $\alpha_{i+2} - \alpha_{i+1}$. However, in such a scenario, the score of $y^\pr$ will be strictly more than the score of $c$ contradicting the fact that $c$ is a co-winner in $\bar{\PP}\cup\QQ$. Now, the claim follows from the observation that, every $x\in\XX$ must lose $\alpha_{i+2} - \alpha_i$ scores in order to $c$ co-win. Let $\SS^\pr\subseteq\SS$ be the collections of sets $\sss\in\SS$ such that $x\in\sss$ is placed after $z\in\sss$ in $\bar{\ppp}_\sss$. From the claim above, we have $|\SS^\pr|=m$. We now claim that, $\cup_\sss\in\SS^\pr = \XX\cup\YY\cup\ZZ$. Indeed, otherwise there exists a candidate $a\in\YY\cup\ZZ$ who does not belong to $\cup_\sss\in\SS^\pr$. But then the score of $a$ is strictly more than the score of $c$ contradicting the fact that $c$ is a co-winner in $\bar{\PP}\cup\QQ$. Hence, $\II^\pr$ is also a \YES instance
 
 The proof for the polynomial time solvable case is similar to the polynomial time solvable case in \Cref{thm:one-missing-pair}.
\end{proof}
}
We now address the case involving at most $3$ undetermined pairs in every vote. The interesting scoring rules here are smooth Borda-like scoring rules that are $\langle 1, 1 \rangle$-difference-free. It turns out that here, if the scoring rule is further $\langle 1, 0, 1 \rangle$-difference-free, then the problem again admits a maxflow formulation. On the other hand,  $s$ is $\langle 1,0,1 \rangle$-contaminated at $m\ge N_0$ for osme constant $N_0$, then the \PW problem is \NPC even with $3$ undetermined pairs of candidates per vote. 

\begin{restatable}{theorem}{ThreeMissingPairs}\shortversion{[$\star$]}
\label{thm:three-missing-pairs}
Let $s$ be a \nice{}, Borda-like, $\langle 1,1 \rangle$-difference-free scoring rule. If there exists a constant $N_0 \in \NB^+$ such that $s$ is $\langle 1,0,1 \rangle$-contaminated for all $m \geq n_0$, then the \PW problem is \NPC{}, even if every vote has at most $3$ undetermined pairs.  On the other hand, if $s$ is $\langle 1,0,1 \rangle$-difference-free, then the \PW problem for $s$ is in \Pb if every vote has at most $3$ undetermined pairs.
\end{restatable}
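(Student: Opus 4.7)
The plan is to prove both directions in the same spirit as \Cref{thm:two-missing-pairs}: reduce from \TDM{} for the hardness part, and design a max-flow based algorithm for the polynomial-time part.

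For the hardness direction, I will exploit the $\langle 1,0,1 \rangle$-contamination. By hypothesis, for all sufficiently large $m$ there are four consecutive positions $i, i+1, i+2, i+3$ in $\overrightarrow{s_m}$ with scores $\alpha, \alpha+1, \alpha+1, \alpha+2$ for some $\alpha \in \NB$. Given a \TDM{} instance $(X \cup Y \cup Z, S)$ with $|X|=|Y|=|Z|=m$ chosen large enough, the plan is to introduce an auxiliary dummy candidate $d$ and, for every triple $s = (x, y, z) \in S$, construct a complete vote $\ppp_s$ placing $x, y, d, z$ at positions $i+3, i+2, i+1, i$ respectively (with the remaining candidates in a fixed order elsewhere). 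The partial vote $\ppp_s^\pr := \ppp_s \setminus \{(x,y),(x,d),(x,z)\}$ then has exactly three undetermined pairs, all involving $x$.

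The main structural observation is that the four linear extensions of $\ppp_s^\pr$ collapse to exactly three distinct score-change patterns: (A) no change; (B) $y$ gains $1$ while $x$ loses $1$ (produced by two distinct extensions whose effects coincide because positions $i+1$ and $i+2$ share score $\alpha+1$); and (C) $y$ and $z$ each gain $1$ while $x$ loses $2$. Crucially, $d$'s score remains $\alpha+1$ in every completion, since $d$ is placed inside the plateau. Using \Cref{score_gen}, I will engineer a set $\QQ$ of complete votes so that the scores from $\PP \cup \QQ$ satisfy $s(x) = s(c) + 2$ for every $x \in X$, $s(y) = s(c) - 1$ for every $y \in Y$, $s(z) = s(c) - 1$ for every $z \in Z$, and $s(d) < s(c)$. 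The equivalence then follows by a counting argument in the spirit of \Cref{thm:two-missing-pairs}: letting $a, b$ denote the numbers of triples completed via patterns (C) and (B) respectively, for $c$ to be a co-winner one needs $2a + b \geq 2m$ (total $X$-loss) and $a + b \leq m$ (total $Y$-gain), which forces $a = m$ and $b = 0$. Hence exactly $m$ triples are completed via (C), and they witness a \TDM{} solution; the converse is immediate. I expect this counting argument to be the main (but routine) structural check.

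For the polynomial-time direction, I will adapt the max-flow construction of \Cref{thm:one-missing-pair}. The key structural consequence of $s$ being Borda-like, $\langle 1, 1 \rangle$-difference-free, and $\langle 1, 0, 1 \rangle$-difference-free is that any four consecutive positions of the score vector contain at most one unit drop. Consequently, in any partial vote with at most three undetermined pairs, pairs lying entirely within a single plateau can be fixed arbitrarily without changing any score, while each remaining pair reduces to a binary choice between two candidates straddling a single drop. I then plan to build a flow network with source and sink, a vertex for every partial vote (with unit capacity from the source) and a vertex for each candidate $w \neq c$ (with capacity $s(c) - s(w)$ to the sink), using vote-to-candidate edges to encode the binary choices. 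A saturating flow corresponds exactly to a completion making $c$ a co-winner. The main obstacle I foresee is ensuring no partial vote induces a tangled three-way interaction across multiple score drops --- but the forbidden patterns $\langle 1,1 \rangle$ and $\langle 1,0,1 \rangle$ preclude exactly such interactions within the range of positions a three-pair partial vote can span.
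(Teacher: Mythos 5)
Your hardness half is essentially the paper's own proof: the same gadget (a dummy $d$ parked inside the plateau of the $\langle 1,0,1\rangle$-pattern, one vote per triple with the three undetermined pairs all incident to $x$), the same target scores produced via \Cref{score_gen}, and the same counting argument in the reverse direction; your explicit bookkeeping $2a+b\geq 2m$, $a+b\leq m$ (hence $a=m$, $b=0$) is a slightly cleaner rendering of the argument the paper gives, and your observation that the four extensions collapse to three score patterns with $d$'s score invariant is exactly the point of placing $d$ in the plateau. No issues there.

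The gap is in the polynomial-time half. Your flow network, as described, gives each partial vote a single unit of capacity from the source and treats each undetermined pair as an independent binary choice, justified by the claim that the forbidden patterns confine ``the range of positions a three-pair partial vote can span.'' That confinement only holds for pairs that share candidates (mutually incomparable candidates must occupy consecutive positions); three \emph{disjoint} undetermined pairs can sit in arbitrarily distant parts of the vote, each straddling its own unit drop. In that situation the vote is forced to hand out one extra point in \emph{each} of its (up to three) drop-straddling pairs simultaneously -- these are not alternatives -- so a unit-capacity source edge under-counts the mandatory score increments and the equivalence ``saturating flow $\iff$ completion making $c$ a co-winner'' breaks in both directions. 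The paper's construction handles this by giving the source-to-vote edge capacity equal to the number of extra points the vote must distribute (three for disjoint pairs, with an intermediate vertex per pair enforcing ``exactly one of the two''; one or two in the overlapping configurations), and it is precisely in the overlapping cases (two pairs sharing a candidate, or three pairs on three candidates) that the $\langle 1,0,1\rangle$-difference-freeness is invoked to determine how many points are forced and between which two candidates the residual binary choice lies. So your overall strategy (fix plateau pairs, reduce to max flow) is the right one and matches the paper, but the network you specify needs the per-vote case analysis and variable capacities before the correspondence with completions actually holds; also remember to state the required flow value as the sum of the forced per-vote contributions rather than the number of incomplete votes.
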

	\nmtodo{This is the 3DM construction and the easy-but-with-many-cases maxflow construction. I plan to add the hardness proof a little later, maybe you can work on the construction of the maxflow instance?}
\longversion{
\begin{proof}
 For every $\el\ge N_0$, there exists an index $i\in[\el-2]$ in the score vector $(\alpha_j)_{j\in[\el]}$ such that $\alpha_{i+3} - \alpha_{i+2} = \alpha_{i+1} - \alpha_i = 1$ and $\alpha_{i+2} = \alpha_{i+1}$. Let $\alpha_i = \alpha$. We begin with the proof of hardness. The \PW problem is clearly in \NP. To prove \NP-hardness of \PW, we reduce \PW from \TDM. Let $\II = (\XX\cup\YY\cup\ZZ, \SS)$ be an arbitrary instance of \TDM. Let $|\XX|=|\YY|=|\ZZ|= m > N_0$. We construct an instance $\II^\pr$ of \PW from \II as follows. 
 \[\text{Set of candidates: } \CC = \XX\cup\YY\cup\ZZ\cup\{c,d\} \]
 
 For every $\sss=(x, y, z)\in\SS$, let us consider the following vote $\ppp_\sss$. 
 \[ \ppp_\sss = \overrightarrow{(\CC\setminus\CC_\sss)} \suc x \suc y \suc d \suc z \suc \overrightarrow{\CC_\sss}, \text{ for some fixed } \CC_\sss\subset(\CC\setminus\{x, y, z\}) \text{ with } |\CC_\sss| = i-1 \]
 
 Using $\ppp_\sss$, we define a partial vote $\ppp_\sss^\pr$ as follows. 
 \[ \ppp_\sss^\pr = \ppp_\sss \setminus \{(x,y), (x,d), (x,z)\} \]
 
 Let us define $\PP = \cup_{\sss\in\SS} \ppp_\sss$ and $\PP^\pr = \cup_{\sss\in\SS} \ppp_\sss^\pr$. There exists a set of complete votes \QQ of size polynomial in $m$ with the following properties due to \Cref{score_gen}. Let $s_{\PP\cup\QQ}:\CC\longrightarrow\NB$ be a function mapping candidates to their scores from the set of votes $\PP\cup\QQ$.
 
 \begin{itemize}
  \item $s_{\PP\cup\QQ} (x) = s_{\PP\cup\QQ} (c) + 2, ~\forall x\in\XX$
  \item $s_{\PP\cup\QQ} (y) = s_{\PP\cup\QQ} (c) -1, \forall y\in\YY$
  \item $s_{\PP\cup\QQ} (z) = s_{\PP\cup\QQ} (c) -1, \forall z\in\ZZ$
  \item $s_{\PP\cup\QQ} (d) < s_{\PP\cup\QQ} (c)$
 \end{itemize}
 
 We now define the instance $\II^\pr$ of \PW to be $(\CC, \PP^\pr \cup \QQ, c)$. Notice that the number of undetermined pairs of candidates in every vote in $\II^\pr$ is at most $3$. This finishes the description of the \PW instance. We claim that \II and $\II^\pr$ are equivalent.
 
 In the forward direction, suppose that \II be a \YES instance of \TDM. Then there exists a collection of $m$ sets $\SS^\pr\subset\SS$ in \SS such that $\cup_{\AA\in\SS^\pr} \AA = \XX\cup\YY\cup\ZZ$. We extend the partial vote $\ppp_\sss^\pr$ to $\bar{\ppp}_\sss$ as follows for $\sss\in\SS$. 
 $$
 \bar{\ppp}_\sss = 
 \begin{cases}
  \overrightarrow{(\CC\setminus\CC_\sss)} \suc y \suc d \suc z \suc x \suc \overrightarrow{\CC_\sss} & \sss\in\SS^\pr\\
  \overrightarrow{(\CC\setminus\CC_\sss)} \suc x \suc y \suc d \suc z \suc \overrightarrow{\CC_\sss} & \sss\notin\SS^\pr
 \end{cases}
 $$
 
 We consider the extension of \PP to $\bar{\PP} = \cup_{\sss\in\SS} \bar{\ppp}_\sss$. We claim that $c$ is a co-winner in the profile $\bar{\PP}\cup\QQ$ since $s_{\bar{\PP}\cup\QQ} (c) = s_{\bar{\PP}\cup\QQ} (x) = s_{\bar{\PP}\cup\QQ} (y) = s_{\bar{\PP}\cup\QQ} (z) > s_{\bar{\PP}\cup\QQ} (d)$.
 
 For the reverse direction, suppose the \PW instance $\II^\pr$ be a \YES instance. Then there exists an extension of the set of partial votes $\PP^\pr$ to a set of complete votes $\bar{\PP}$ such that $c$ is a co-winner in $\bar{\PP}\cup\QQ$. Let us call the extension of $\ppp_\sss^\pr$ in $\bar{\PP}$ $\bar{\ppp}_\sss$. We first claim that, for every $x\in\XX$, there exists exactly one $\sss\in\SS$ such that $\bar{\ppp}_\sss = \overrightarrow{(\CC\setminus\CC_\sss)} \suc y \suc d \suc z \suc x \suc \overrightarrow{\CC_\sss}$. Notice that, the score of $c$ is same in every extension of $\PP^\pr$. Hence, for $c$ to co-win, every candidate $x\in\XX$ must lose $2$ points. If there are more than one vote in $\bar{\PP}$ where $x$ is placed after some candidate $y\in\YY$, then the total increase of scores of all the candidates in \YY is more than $m$ and thus there exists a candidate $y^\pr\in\YY$ whose score has increased by strictly more than $2$. However, in such a scenario, the score of $y^\pr$ will be strictly more than the score of $c$ contradicting the fact that $c$ is a co-winner in $\bar{\PP}\cup\QQ$. Now the claim follows from the observation that, every $x\in\XX$ must lose $2$ scores in order to $c$ co-win. Let $\SS^\pr\subseteq\SS$ be the collections of sets $\sss\in\SS$ such that $x\in\sss$ is placed after $z\in\sss$ in $\bar{\ppp}_\sss$. From the claim above, we have $|\SS^\pr|=m$. We now claim that, $\cup_\sss\in\SS^\pr = \XX\cup\YY\cup\ZZ$. Indeed, otherwise there exists a candidate $a\in\YY\cup\ZZ$ who does not belong to $\cup_\sss\in\SS^\pr$. But then the score of $a$ is strictly more than the score of $c$ contradicting the fact that $c$ is a co-winner in $\bar{\PP}\cup\QQ$. Hence, $\II^\pr$ is also a \YES instance.
 
 We now turn to the polynomial time solvable case. Let the input instance of \PW be $(\CC, \PP, c)$ where every partial vote in \PP has at most $3$ pairs of candidates whose ordering is undetermined. In every partial vote in \PP we place the candidate $c$ as high as possible. Suppose in a partial vote \ppp in \PP, one undetermined pair of candidates appears at positions $i$ and $i+1$ (from the bottom) and $\alpha_i = \alpha_{i+1}$. Then we fix the ordering of the undetermined pair of candidates in \ppp arbitrarily. Let us call the resulting profile $\PP^\pr$. It is easy to see that $(\CC, \PP, c)$ is a \YES instance if and only if $(\CC, \PP^\pr, c)$ is a \YES instance. Notice that the position of $c$ in every vote in $\PP^\pr$ is fixed and thus we know the score of $c$; let it be $s(c)$. Also we can compute the minimum score that every candidate receives over all extensions of $\PP^\pr$. Let $s(w)$ be the minimum score of candidate $w$. If there exists a candidate $z$ such that $s(z)>s(c)$, then we output \NO. Otherwise we construct the following flow graph $\GG=(\VV, \EE)$. For every partial vote \vvv in $\PP^\pr$, we add a vertex $v_\vvv$ in \VV. We also add a vertex $v_w$ in \VV for every candidate $w$ other than $c$. We also add two special vertices $s$ and $t$ in \VV. We add an edge from $v_w$ to $t$ of capacity $s(c)-s(w)$ for every candidate $w$ other than $c$. Consider a partial $\vvv\in\PP^\pr$ where the three undetermined pairs of candidates be $(x_1, x_2), (y_1, y_2), (z_1, z_2)$. We add edges from $v_\vvv$ to $v_w$ for candidate $w$ other than $c$ as follows.
 
 \begin{itemize}
  \item If the sets $\{x_1, x_2\}, \{y_1, y_2\}$, and $\{z_1, z_2\}$ are mutually disjoint, then we add three vertices $v_\vvv(x_1, x_2), v_\vvv(y_1, y_2),$ and $v_\vvv(z_1, z_2)$, add edges from $v_\vvv$ to each of them each of capacity $1$, add edges from $v_\vvv(x_1, x_2)$ to $v_{x_1}$ and $v_{x_2}$, edges from $v_\vvv(y_1, y_2)$ to $v_{y_1}$ and $v_{y_2}$, edges from $v_\vvv(z_1, z_2)$ to $v_{z_1}$ and $v_{z_2}$ each with capacity $1$ and an edge from $s$ to $v_\vvv$ for every $\vvv\in\PP^\pr$ of capacity $3$.
  
  \item If $\{x_1, x_2\}$ and $\{y_1, y_2\}$ are each disjoint with $\{z_1, z_2\}$ and $\{x_1, x_2\}\cap\{y_1, y_2\}=\{x_1\}=\{y_1\}$, then, without loss of generality, let us assume $x_2\ge y_2$ in \vvv. Now observe that since the scoring rule is $\langle 1,0,1 \rangle$-difference-free, exactly one of $x_2$ and $y_2$ gets a score of one in every extension of \vvv; say $x_2$ gets a score of one in every extension of \vvv. Also observe that exactly one of $x_1$ and $y_2$ gets a score of $1$ in any extension of \vvv. Hence, we add an edge from $v_\vvv$ to $x_1$ and another edge from $v_\vvv$ to $y_2$ each with capacity $1$.
  
  \item If $|\{x_1, x_2\}\cup\{y_1, y_2\}\cup\{z_1, z_2\}|=3$ (say $\{x_1, x_2\}\cup\{y_1, y_2\}\cup\{z_1, z_2\} = \{a_1, a_2, a_3\}$), then either exactly one of $a_i, i\in[3]$ gets a score of $1$ in every extension of \vvv or exactly two of $a_i, i\in[3]$ gets a score of $1$ in every extension of \vvv. We add an edge from $s$ to $v_\vvv$ for every $\vvv\in\PP^\pr$ of capacity $1$ in the former case and of capacity $2$ in the later case.
 \end{itemize}
 Now it is easy to see that the $(\CC, \PP^\pr, c)$ is a \YES instance if and only if there is a flow of size $t$ in \GG.
\end{proof}
}

{\bf Remark.} Note that unlike the previous two results, this statement is not a complete classification, because we don't have an appropriate analog of Propositions~\ref{prop:diff-borda-like} and~\ref{prop:11-contaminated}. Having said that, our result holds for a more general class of scoring rules: those where $s$ is $\langle 1,0,1 \rangle$-contaminated at $m$ ``sufficiently'' often, that is to say that if $\overrightarrow{s_m}$ is $\langle 1,0,1 \rangle$-contaminated and $m^\pr > m$ is the smallest natural number for which $\overrightarrow{s_m}$ is $\langle 1,0,1 \rangle$-contaminated, then $m^\pr - m$ is bounded by some polynomial function of $m$, by inserting appropriately many dummy candidates using standard techniques. 	

We now turn to our final result for scoring rules. Let $s$ be a smooth, Borda-like scoring rule that is $\langle 1,1 \rangle$-difference-free. Then we have the following. If $s$ is $\langle 0,1,0 \rangle$-contaminated, then the \PW problem for $s$ is \NPC{} even when every vote has at most $4$ undetermined pairs of candidates. If $s$ is $\langle 0,1,0 \rangle$-difference-free, then notice that $d(\overrightarrow{s_m})$ for any suitably large $m \in \NB^+$ can contain at most two ones (since $s$ is also $\langle 1,1 \rangle$-difference-free). If the number of ones in $d(\overrightarrow{s_m})$ is one, then $d(\overrightarrow{s_m})$ either has a one on the first or the last coordinate (recall that $s$ is $\langle 0,1,0 \rangle$-difference-free), corresponding to the plurality and veto voting rules, respectively. On the other hand, if the number of ones is two, $d(\overrightarrow{s_m}) = \langle 1,0, \ldots, 0, 1\rangle$, which is equivalent (in normal form) to the scoring rule $(2,1,\ldots,1,0)$. The \PW problem is polynomial time solvable for plurality and veto voting rules, and we show here that it is also polynomially solvable for the scoring rule  $(2,1,\ldots,1,0)$ as long as the number of undetermined pairs of candidates in any vote is at most $m-1$. If we allow for $m$ or more undetermined pairs of candidates in every vote, then we show that the \PW problem is \NPC{}. As before, we will need the following property of $\langle 1, 0, 1 \rangle$-contaminated vectors. 

\begin{restatable}{proposition}{ZeroOneZeroContaminated}\shortversion{[$\star$]}
\label{prop:010-contaminated}
	Let $s$ be a normalized \nice{} scoring rule that is not $\langle 1, 0, 1 \rangle$-difference-free. Then there exists some $n_0 \in \NB^+$ such that $s$ is $\langle 0, 1, 0 \rangle$-contaminated at $m$ for every $m > n_0$. 
\end{restatable}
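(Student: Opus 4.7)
The plan is to parallel the inductive, carried-forward argument used in the proof of Proposition 2 above, in two steps. First I will prove a persistence lemma: once the pattern $\langle 0, 1, 0 \rangle$ appears in $d(\overrightarrow{s_m})$ for some $m$, it is preserved under every admissible insertion, and so is present in $d(\overrightarrow{s_{m'}})$ for every $m' \geq m$. Second, I will argue that, under the hypothesis that $s$ is not $\langle 1, 0, 1 \rangle$-difference-free, the pattern $\langle 0, 1, 0 \rangle$ must materialize at some finite $m_0$. Setting $n_0 := m_0$ and invoking persistence then yields the conclusion.

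For the persistence step, consider an occurrence of $\langle 0, 1, 0 \rangle$ at positions $i, i+1, i+2$ of $d(\overrightarrow{s_m})$: this corresponds to four consecutive score-vector entries $\gamma+1, \gamma+1, \gamma, \gamma$. Within this block the only admissible interior positions are the slot between the two $\gamma+1$'s and the slot between the two $\gamma$'s, since the middle boundary $\gamma+1 \to \gamma$ is not admissible (the adjacent scores differ). Inserting between the two $\gamma+1$'s produces $\gamma+1, \gamma+1, \gamma+1, \gamma, \gamma$, whose last four entries still form $\gamma+1, \gamma+1, \gamma, \gamma$; inserting between the two $\gamma$'s produces $\gamma+1, \gamma+1, \gamma, \gamma, \gamma$, whose first four entries still form the required block; insertions outside the block leave it completely intact. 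An immediate induction then shows that $\langle 0, 1, 0 \rangle$ is $\langle 0, 1, 0 \rangle$-contaminated at every $m' \geq m$.

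For the emergence step, fix an occurrence of $\langle 1, 0, 1 \rangle$ guaranteed by the hypothesis at positions $i, i+1, i+2$ of some $d(\overrightarrow{s_m})$, corresponding to the score-vector block $\beta+2, \beta+1, \beta+1, \beta$. The flat sub-block $\beta+1, \beta+1$ at its centre is the \emph{only} admissible interior position of this block, while the drops $\beta+2 \to \beta+1$ and $\beta+1 \to \beta$ sit on either side. I will perform a case analysis on the admissible insertions along the smooth growth sequence $\overrightarrow{s_m} \to \overrightarrow{s_{m+1}} \to \ldots$, classifying each insertion by whether it lies inside the flat sub-block, elsewhere inside the vector, or at the top/bottom endpoint. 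The goal is to show, by tracking how the flat sub-block $\beta+1, \beta+1$ interacts with the mandatory flat tail of zeros forced by normalization and with the admissibility constraints governing subsequent insertions, that within a bounded number of growth steps the score vector must contain a block of the form $\gamma+1, \gamma+1, \gamma, \gamma$ for some $\gamma$, which is precisely a $\langle 0, 1, 0 \rangle$ occurrence.

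The hard part will be the emergence argument. Persistence is a local analysis that closely mirrors Proposition 2's carry-forward argument and should be routine. Emergence, in contrast, requires bookkeeping of how the admissible positions evolve in a smooth extension, and must rule out growth sequences that seem to preserve $\langle 1, 0, 1 \rangle$ indefinitely without ever generating $\langle 0, 1, 0 \rangle$. I plan to control these pathological trajectories by invoking the monotonicity guarantees of Propositions~\ref{prop:diff-borda-like} and~\ref{prop:11-contaminated}, which constrain the structure of $d(\overrightarrow{s_m})$ around each $\langle 1, 0, 1 \rangle$ occurrence, and by exploiting the flat bottom of normalized score vectors to force the creation of a second adjacent flat pair of the form $\gamma, \gamma$ flanking the $\beta+1, \beta+1$ sub-block after finitely many admissible insertions.
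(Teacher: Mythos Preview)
Your persistence step for the pattern $\langle 0, 1, 0 \rangle$ is correct and is, in fact, the entire content of the paper's proof. The statement as printed carries a typo in the hypothesis: it should read ``not $\langle 0, 1, 0 \rangle$-difference-free'' rather than ``not $\langle 1, 0, 1 \rangle$-difference-free''. The paper's own proof opens with ``If $s$ is not $\langle 0, 1, 0 \rangle$-difference-free\ldots'' and then argues only that the block $(\alpha+1,\alpha+1,\alpha,\alpha)$ is carried forward because the position between $\alpha+1$ and $\alpha$ is not admissible --- exactly your first step.

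Your emergence step, by contrast, attempts to prove the literal (mis-stated) claim, and that claim is false. Consider the smooth normalized scoring rule with $\overrightarrow{s_m} = (m-2,\, m-3,\, \ldots,\, 2,\, 1,\, 1,\, 0)$ for $m \geq 4$, obtained from $\overrightarrow{s_{m-1}}$ by inserting the value $m-2$ at the top endpoint (always admissible). Its difference vector is $(1, 1, \ldots, 1, 0, 1)$ for every $m \geq 4$: it is $\langle 1, 0, 1 \rangle$-contaminated at every such $m$, hence not $\langle 1, 0, 1 \rangle$-difference-free, yet no three consecutive entries of the difference vector are ever $0, 1, 0$, so it is never $\langle 0, 1, 0 \rangle$-contaminated. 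Thus no emergence argument can succeed. In particular, your ``flat bottom'' heuristic fails because normalization guarantees only $\alpha_1 = 0$, not a run of zeros, and Propositions~\ref{prop:diff-borda-like} and~\ref{prop:11-contaminated} give no leverage toward producing a $\langle 0, 1, 0 \rangle$ block here.
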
\longversion{

\begin{proof}
If $s$ is not $\langle 0, 1, 0 \rangle$-difference-free, then there exists some $\ell \in \NB^+$ for which $s$ is $\langle 0, 1, 0 \rangle$-contaminated at $\ell$. In particular, this implies that the score vector admits the pattern $(\alpha, \alpha, \alpha+1, \alpha+1)$. Let the positions (counted from the bottom) for these scores be $i$, $i-1$, $i-2$ and $i-3$, respectively. Now note that $i-2$ is not an admissible position, and it follows that any score vector $s_{\ell+1}$ obtained from $s_\ell$ will therefore continue to be $\langle 0, 1, 0 \rangle$-contaminated. Repeating this argument for all $m \geq \ell$ gives us the desired claim. 
\end{proof}
}

We now state the final result in this section. It is easily checked that the result accounts for all smooth, Borda-like scoring rules that are  $\langle 1,1 \rangle$-difference-free. 

\begin{restatable}{theorem}{FourMissingPairs}\shortversion{[$\star$]}
\label{thm:four-missing-pairs}
Let $s$ be a smooth, Borda-like scoring rule that is $\langle 1,1 \rangle$-difference-free. Then we have the following. 

\begin{enumerate}
\item If $s$ is $\langle 0,1,0 \rangle$-contaminated, then the \PW problem for $s$ is \NPC{} even when every vote has at most $4$ undetermined pairs of candidates. 
\item If $s$ is equivalent to $(2,1,\ldots,1,0)$, then \PW is \NPC{} even when the number of undetermined pairs of candidates in every vote is at most $m-1$. 
\item If $s$ is equivalent to $(2,1,\ldots,1,0)$ and the number of undetermined pairs of candidates is strictly less than $m-1$, then \PW is in \Pb.
\item If $s$ is neither $\langle 0,1,0 \rangle$-contaminated nor equivalent to $(2,1,\ldots,1,0)$, then $s$ is equivalent to either the plurality or veto scoring rules and \PW is in \Pb for these cases. 
\end{enumerate}
\end{restatable}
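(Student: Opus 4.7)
For \textbf{Part 1} (hardness for $\langle 0,1,0 \rangle$-contaminated rules with $4$ undetermined pairs), I would reduce from \TDM, in the same spirit as the reductions in \Cref{thm:two-missing-pairs} and \Cref{thm:three-missing-pairs}. By the $\langle 0,1,0 \rangle$-contamination assumption together with \Cref{prop:010-contaminated}, for every sufficiently large $m$ the score vector contains four consecutive coordinates of the form $(\alpha+1, \alpha+1, \alpha, \alpha)$. Given a \TDM instance $(\XX \cup \YY \cup \ZZ, \SS)$, for each triple $\sss = (x,y,z) \in \SS$ I would introduce a vote placing $x, y, d_\sss, z$ at these four positions (where $d_\sss$ is a dummy candidate), leaving four carefully chosen pairs among them undetermined so that the admissible extensions correspond to the two natural states ``$\sss$ selected'' and ``$\sss$ not selected''. \Cref{score_gen} then sets baseline scores so that $c$ co-wins iff the selected sub-collection forms an exact matching; the bidirectional correctness argument follows the \TDM-style templates of the earlier theorems.

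For \textbf{Part 3} (polynomial-time algorithm for $(2,1,\ldots,1,0)$ with at most $m-2$ undetermined pairs), the key structural observation is that $T_v \cap B_v = \emptyset$, where $T_v$ (resp.\ $B_v$) is the set of candidates that can be placed first (resp.\ last) in the partial vote $v$. Indeed, any candidate in $T_v \cap B_v$ is incomparable to every other candidate, using up all $m-1$ possible undetermined pairs. Since $(2,1,\ldots,1,0)$ depends only on who is first and last in each vote, and any choice of $t_v \in T_v$ and $b_v \in B_v$ extends to a valid linear order by topological sort, completing the profile amounts to selecting such a pair per vote. Place $c^*$ optimally in every vote to fix its final score $S_{c^*}$; for each $c \neq c^*$ the winning condition becomes $X_c - Y_c \le K_c := S_{c^*} - n$, where $X_c$ (resp.\ $Y_c$) is the number of votes in which $c$ is chosen as top (resp.\ bottom). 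Model this as a flow network with source $s$ and sink $t$: for each vote $v$, nodes $v^T, v^B$ with edges $s \to v^T$ and $s \to v^B$ of capacity $1$; edges $v^T \to c_1$ of capacity $1$ for each $c \in T_v \setminus \{c^*\}$ and $v^B \to c_2$ of capacity $1$ for each $c \in B_v \setminus \{c^*\}$; and per non-$c^*$ candidate a split into $c_1, c_2$ with $c_1 \to t$ of capacity $K_c$, $c_2 \to c_1$ of capacity $\infty$, and $c_2 \to t$ of capacity $\infty$. The edge $c_2 \to c_1$ allows every bottom assignment to $c$ to ``credit'' the top budget by one, exactly encoding $X_c \le K_c + Y_c$; thus $c^*$ co-wins iff the max flow saturates all outgoing capacity from $s$.

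For \textbf{Part 2} (hardness for $(2,1,\ldots,1,0)$ with $m-1$ pairs), I would reduce from \TDM, exploiting the fact that $T_v \cap B_v \neq \emptyset$ becomes possible once a vote may contain a single ``wild'' candidate that is incomparable to every other candidate, using up exactly $m-1$ undetermined pairs. For each triple $\sss \in \SS$, introduce a vote with such a wild candidate whose placement (top, middle, or bottom) encodes the selection of $\sss$; \Cref{score_gen} again tunes the baseline scores so that $c$ co-wins precisely when the selection is an exact matching. For \textbf{Part 4}, under the hypotheses, the normalized difference vector of $s$ can contain at most one $1$ (since $\langle 1,1\rangle$ is forbidden) and that $1$ must occur at an endpoint (since $\langle 0,1,0\rangle$ is forbidden and $s \not\equiv (2,1,\ldots,1,0)$); this forces $s$ to be equivalent to plurality or veto, both of which are in \Pb for \PW by results cited in the Introduction.

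The main obstacle is the flow construction in Part 3: the coupled per-candidate constraint $X_c - Y_c \le K_c$ is not directly expressible as a single edge capacity, and the split-node device with the credit edge $c_2 \to c_1$ (converting a bottom assignment into additional top-budget) is the crucial trick that makes the feasibility check reduce to max flow. The hardness reductions in Parts 1 and 2 should be comparatively routine adaptations of the \TDM-patterns already established in \Cref{thm:two-missing-pairs} and \Cref{thm:three-missing-pairs}, with the pair-budget per vote calibrated to the structural flexibility that the score pattern (or the wild-candidate trick) affords.
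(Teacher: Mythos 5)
The central piece of your proposal, the flow gadget in Part~3, does not encode the constraint you claim. With $c_1\to t$ of capacity $K_c$, $c_2\to c_1$ of capacity $\infty$ and $c_2\to t$ of capacity $\infty$, conservation at $c_1$ gives $X_c + f(c_2\to c_1) = f(c_1\to t)\le K_c$, so flow arriving from bottom assignments can only \emph{consume} top budget, never credit it; and since $c_2\to t$ is uncapacitated, a maximum flow routes all bottom flow straight to $t$, so the network enforces $X_c\le K_c$ and ignores $Y_c$ altogether. That is not equivalent to the winning condition $X_c-Y_c\le K_c$: when $K_c<0$ (a candidate whose baseline score $n$ already exceeds $S_{c^*}$, which can happen when $c^*$ is forced into last positions) the candidate \emph{must} be placed last in a minimum number of votes, a lower-bound requirement your network cannot express, and even with $K_c\ge 0$ there are instances where every valid completion puts some $c$ on top more than $K_c$ times and compensates with bottom placements, which your network would wrongly reject. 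The paper sidesteps the signed bookkeeping entirely: it measures flow as nonnegative increments over each candidate's \emph{minimum} possible score, supplying one unit per vote for the choice of the top candidate and $|B(\ppp)|-1$ units per vote for the members of $B(\ppp)$ that are \emph{not} placed last, and accepts iff there is a saturating flow of value $t+\sum_{\ppp}(|B(\ppp)|-1)$; being placed last is the default, so neither credits nor lower bounds are needed.

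Part~2 is also not established. Under $(2,1,\ldots,1,0)$, a vote with one wild candidate has only three score-relevant completions---wild on top (the displaced top candidate loses $1$), wild in the middle (no change), wild at the bottom (the displaced bottom candidate gains $1$)---so a single such vote changes the score of at most one non-wild candidate, and your sketch never explains how the selection of a triple $(x,y,z)$ can simultaneously certify coverage of all three elements. The paper instead reduces from \SAT, in the spirit of \Cref{thm:one-missing-pair}: two wild-candidate votes per variable share a dummy $w_i$ at their tops whose score must drop, forcing $b_i$ or $b_i^\pr$ to the top and thereby exhausting its slack, while clause candidates are handled by one-undetermined-pair votes; the satisfying assignment is read off from which of $b_i,b_i^\pr$ keeps the last position. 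For Part~1 the paper simply observes that the known construction of Xia and Conitzer already uses at most four undetermined pairs per vote whenever the score vector contains $(\alpha+1,\alpha+1,\alpha,\alpha)$, so your alternative \TDM sketch is unnecessary, but as written (``four carefully chosen pairs'') it is not a proof either. Part~4 matches the paper, though your justification is off: $\langle 1,1\rangle$-freeness only forbids adjacent ones in the difference vector, and you need $\langle 0,1,0\rangle$-freeness to exclude interior ones, the two-endpoint case being exactly $(2,1,\ldots,1,0)$.
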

\longversion{
\begin{proof} The proof of this theorem is described in four parts corresponding to the four statements above.

 {\bf Proof of Part 1.} A careful reading of the proof of Theorem 2 in \cite{XiaC11} reveals that the \PW winner problem is \NPC even when every vote has at most $4$ undetermined pairs for any scoring rule for which there exists an index $i$ such that $\alpha_{i+3} = \alpha_{i+2} = \alpha_{i+1} + 1 = \alpha_{i} + 1$. Hence, our result follows immediately.

 {\bf Proof of Part 2.} The reduction is similar in spirit to the construction used in the proof of Theorem~\ref{thm:one-missing-pair}. We describe it in detail for the sake of completeness. As before, we reduce from an instance of $(3,B2)$-SAT. Let $\II$ be an instance of $(3,B2)$-SAT, over the variables $\VV = \{x_1, \ldots, x_n\}$ and with clauses $\TT = \{c_1, \ldots, c_t\}$. 

To construct the reduced instance $\II^\pr$, we introduce four candidates for every variable, and one candidate for every clause, one special candidate $w$, and a dummy candidate $g$ to achieve desirable score differences. Notationally, we will use $w_i, d_i$, $b_i$  and $b_i^\prime$ to refer to the candidates based on the variable $x_i$ and $e_j$ to refer to the candidate based on the clause $c_j$. Among these candidates, the $w_i$'s and $d_i$'s are ``dummy'' candidates, while the $b_i$'s correspond to $x_i$ and $b_i^\pr$ corresponds to $\overline{x_i}$. To recap, the set of candidates are given by:
$$\CC = \{ w_i, d_i, b_i, b_i^\pr ~|~ x_i \in \VV\} \cup \{e_j ~|~ c_j \in \TT \} \cup \{w, g\}.$$

Consider an arbitrary but fixed ordering over $\CC$, such as the lexicographic order. In this proof, the notation $\overrightarrow{\CC^\pr}$ for any $\CC^\pr \subseteq \CC$ will be used to denote the lexicographic ordering restricted to the subset $\CC^\pr$. Let $m$ denote $|\CC| = 3n + t + 2$, and let $\overrightarrow{s_m}=(2, 1, \ldots, 1, 0)\in\mathbb{N}^m$. 

For every variable $x_i \in \VV$, we introduce the following complete and partial votes.
$$\aaa_i := w_i \succ \overrightarrow{\CC \setminus \{w_i,b_i,d_i\}} \succ d_i \succ b_i \mbox{ and } \aaa_i^\pr := \aaa_i \setminus \{(b_i,c) ~|~ \mbox{ for all } c \in \CC \setminus \{b_i\}\}$$
$$\bbb_i := w_i \succ \overrightarrow{\CC \setminus \{w_i,b_i^\pr,d_i\}} \succ d_i \succ b_i^\pr \mbox{ and } \bbb_i^\pr := \bbb_i \setminus \{(b_i^\pr,c) ~|~ \mbox{ for all } c \in \CC \setminus \{b_i^\pr\}\}$$

We use $\ell^\star$ to refer to the candidate $b_j$ if the literal is positive and $b_j^\pr$ if the literal is negated. For every clause $c_j \in \TT$ given by $c_j = \{\ell_1, \ell_2, \ell_3\}$, we introduce the following complete and partial votes.
$$\qqq_{j,1} := \overrightarrow{\CC \setminus \{e_j,\ell_1^\star\}} \succ e_j \succ \ell_1^\star \mbox{ and } \qqq_{j,1}^\pr := \qqq_{j,1} \setminus \{(e_j,\ell_1^\star)\}$$
$$\qqq_{j,2} := \overrightarrow{\CC \setminus \{e_j,\ell_2^\star\}} \succ e_j \succ \ell_2^\star \mbox{ and } \qqq_{j,2}^\pr := \qqq_{j,2} \setminus \{(e_j,\ell_2^\star)\}$$
$$\qqq_{j,3} := \overrightarrow{\CC \setminus \{e_j,\ell_3^\star\}}\succ e_j \succ \ell_3^\star \mbox{ and } \qqq_{j,3}^\pr := \qqq_{j,3} \setminus \{(e_j,\ell_3^\star)\}$$

Let us define the following sets of votes:
$$\PP = \left( \bigcup_{i=1}^n \aaa_i \right) \cup \left( \bigcup_{i=1}^n \bbb_i \right) \cup \left(\bigcup_{\substack{1 \leq j \leq t,}\\\substack{1\leq b \leq 3}} \qqq_{j,b} \right)$$

and 
$$\PP^\pr = \left( \bigcup_{i=1}^n \aaa_i^\pr \right) \cup \left( \bigcup_{i=1}^n \bbb_i^\pr \right)  \cup \left(\bigcup_{\substack{1 \leq j \leq t,}\\ \substack{1\leq b \leq 3}} \qqq_{j,b}^\pr \right) $$ 

There exists a set of complete votes \WW of size polynomial in $m$ with the following properties due to \Cref{score_gen}. Let $s^+: \CC \longrightarrow\NB$ be a function mapping candidates to their scores from the set of votes $\PP \cup\WW$. Then $\WW$ can be constructed to ensure that the following hold.

\begin{itemize}
	\item $s^+(w_i) = s^+(w) + 1$ for all $1 \leq i \leq n$. 
	\item $s^+(e_j) = s^+(w) + 1$ for all $1 \leq j \leq t$. 
	\item $s^+(b_i) = s^+(w) - 2$ for all $1 \leq i \leq n$. 
	\item $s^+(b_i^\pr) = s^+(w) - 2$ for all $1 \leq i \leq n$. 
	\item $s^+(g) < s^+(w)$ and $s^+(d_i) < s^+(w)$ for all $1 \leq i \leq n$.
\end{itemize}

We now define the instance $\II^\pr$ of \PW to be $(\CC, \PP^\pr \cup \WW, w)$. This completes the description of the reduction. Observe that all the partial votes either have at most $m-1$ undetermined pairs, as required. We now turn to a proof of the equivalence. Before we begin making our arguments, observe that since $w$ does not participate in any undetermined pairs of the votes in $\PP^\pr$, it follows that the score of $w$ continues to be $s^+(w)$ in any completion of $\PP^\pr$. The intuition for the construction, described informally, is as follows. The score of every ``clause candidate'' needs to decrease by at least one, which can be achieved by pushing it down against its literal partner in the $\qqq_j$-votes. Also, the score of every $w_i$ must also decrease by at least one, and the only way to achieve this is to push either $b_i$ or $b_i^\pr$ to the top in the two votes corresponding to the variable $x_i$. This causes the candidate $b_i$ (or $b_i^\pr$, as the case may be) to gain a score of two, leading to a tie with $w$, and rendering it impossible for us to use it to ``fix'' the situation for a clause candidate. Therefore, in any successful completion, whether $b_i$ or $b_i^\pr$ retains the zero-position works as a signal for how the corresponding variable should be set by a satisfying assignment.

We now turn to a formal proof. In the forward direction, let $\tau: \VV \rightarrow \{0,1\}$ be a satisfying assignment for $\II$. Then we have the following completions of the votes in $\PP^\pr$. To begin with, for all $1\leq i \leq n$, we have: 
\begin{equation*}
  \aaa_i^{\pr\pr} := \left\{
  \begin{array}{rl}
w_i \succ \overrightarrow{\CC \setminus \{w_i,b_i,d_i\}} \succ d_i \succ b_i & \text{if } \tau(x_i) = 1,\\ 
b_i \succ \overrightarrow{\CC \setminus \{w_i,b_i,d_i\}} \succ d_i \succ w_i & \text{if } \tau(x_i) = 0.
\end{array} \right.	
\end{equation*}

and also:
\begin{equation*}
  \bbb_i^{\pr\pr} := \left\{
  \begin{array}{rl}
w_i \succ \overrightarrow{\CC \setminus \{w_i,b_i^\pr,d_i\}} \succ d_i \succ b_i^\pr & \text{if } \tau(x_i) = 0,\\ 
b_i^\pr \succ \overrightarrow{\CC \setminus \{w_i,b_i^\pr,d_i\}} \succ d_i \succ w_i & \text{if } \tau(x_i) = 1.
\end{array} \right.	
\end{equation*}

For a clause $c_j = \{\ell_1, \ell_2, \ell_3\}$, suppose $\tau(\ell_1) = 1$. Then we have the following completions for the votes $\qqq_{j,b}$, $1 \leq b \leq 3$:
$$\qqq^{\pr\pr}_{j,1} := \overrightarrow{\CC \setminus \{e_j,\ell_1^\star\}} \succ \ell_1^\star \succ e_j, $$
$$\qqq^{\pr\pr}_{j,2} := \overrightarrow{\CC \setminus \{e_j,\ell_2^\star\}} \succ e_j \succ \ell_2^\star, $$
$$\qqq^{\pr\pr}_{j,3} := \overrightarrow{\CC \setminus \{e_j,\ell_3^\star\}} \succ e_j \succ \ell_3^\star$$

The completions for the cases when $\tau(\ell_2) = 1$ or $\tau(\ell_3) = 1$ are analogously defined. It is easily checked that $w$ is a co-winner in this completion, because the score of every $b_i$ and $b_i^\pr$ increases by at most two (given that we based the extensions on a satisfying assignment), and the scores of the $w_i$'s and the $e_j$'s decrease by one, as required. 

This completes the forward direction of the argument. In the other direction, let $\PP^{\pr\pr}$ be any completion of the votes in $\PP^\pr$ which makes $w$ a co-winner with respect to $s$. Let $s^\star$ be the function that computes the scores of all the candidates with respect to $\PP^{\pr\pr}$. We define the following assignment to the variables of $\II$ based on $\PP^{\pr\pr}$:
\begin{equation*}
  \tau(x_i) := \left\{
  \begin{array}{rl}
1 & \text{if } w_i \succ \overrightarrow{\CC \setminus \{w_i,b_i,d_i\}} \succ d_i \succ b_i \in \PP^{\pr\pr},
\\ 
0 & \text{if } w_i \succ \overrightarrow{\CC \setminus \{w_i,b_i,d_i\}} \succ d_i \succ b_i^\pr \in \PP^{\pr\pr}.
\end{array} \right.	
\end{equation*}

We claim that $\tau$, as defined above, satisfies every clause in $\II$. Consider any clause $c_j \in \TT$. Observe that the score of the corresponding candidate, $e_j$, must decrease by at least one in any valid completion, since $s^+(e_j) = s^+(w) + 1$. notice that any completion of the votes corresponding to the variables $x_i$ cannot influence the score of $e_j$, because the only candidates that change scores in any completion are $b_i, b_i^\pr, w_i$ and $d_i$. Therefore, in at least one of the votes $\qqq_{j,b}$, $1 \leq b \leq 3$, we must have a completion where $e_j$ appears at the last position. We claim that the literal $\ell$ that consequently appears at position $q$ must be set to one by $\tau$. Indeed, suppose not. Then we have two cases, as follows:

\begin{itemize}
	\item Suppose the literal $\ell$ corresponds to the positive appearance of a variable $x_j$. If $\tau(x_j) = 0$, then this implies that $b_i \succ \overrightarrow{\CC \setminus \{w_i,b_i,d_i\}} \succ d_i \succ w_i \in \PP^{\pr\pr}$ (if not, then the score of $w_i$ remains unchanged, a contradiction). However, this implies that $b_i$ has gained a score of three altogether, which is also a contradiction.
	\item Suppose the literal $\ell$ corresponds to the negated appearance of a variable $x_j$. If $\tau(x_j) = 1$, then this implies that $b_i^\pr \succ \overrightarrow{\CC \setminus \{w_i,b_i,d_i\}} \succ d_i \succ w_i \in \PP^{\pr\pr}$ (if not, then the score of $w_i$ remains unchanged, a contradiction). However, this implies that $b_i^\pr$ has gained a score of three altogether, which is also a contradiction.
\end{itemize}

 {\bf Proof of Part 3.} 
Now we turn to the proof of the polynomial time solvable case. Let the input instance of \PW be $(\CC, \PP, c)$ where every partial vote in \PP has at most $m-2$ pairs of candidates whose ordering is undetermined. For any $\ppp \in \PP$, let  $A(\ppp) \subseteq \CC$ denote the set of candidates $x$ for which $(y \succ x) \notin \ppp$ for any $y \in \CC$. Note that in any valid extension of $\ppp$, the candidate who occupies the first position (thereby getting a score of two) belongs to $A(\ppp)$. Similarly, let $B(\ppp) \subseteq \CC$ denote the set of candidates $x$ for which $(x \succ y) \notin \ppp$ for any $y \in \CC$. Note that in any valid extension of $\ppp$, the candidate who occupies the last position (thereby getting a score of zero) belongs to $B(\ppp)$. Also, since there are at most $m-2$ missing pairs, note that $A(\ppp) \cap B(\ppp) = \emptyset$. 

In every partial vote in \PP we place the candidate $c$ as high as possible. Suppose in a partial vote \ppp in \PP, one undetermined pair of candidates appears at positions $i$ and $i+1$ (from the bottom), where $i+1$ is not the top position and $i$ is not the bottom position. Then we fix the ordering of the undetermined pair of candidates in \ppp arbitrarily. Let us call the resulting profile $\PP^\pr$. It is easy to see that $(\CC, \PP, c)$ is a \YES instance if and only if $(\CC, \PP^\pr, c)$ is a \YES instance. Notice that the position of $c$ in every vote in $\PP^\pr$ is fixed and thus we know the score of $c$; let it be $s(c)$. Also we can compute the minimum score that every candidate receives over all extensions of $\PP^\pr$. Let $s(w)$ be the minimum score of candidate $w$. If there exists a candidate $z$ such that $s(z)>s(c)$, then we output \NO. 

Otherwise, we construct the following flow graph $\GG=(\VV, \EE)$. For every partial vote \ppp in $\PP^\pr$, we add two vertices $a_\ppp$ and  $b_\ppp$ in \VV. We also add a vertex $v_w$ in \VV for every candidate $w$ other than $c$. We also add two special vertices $s$ and $t$ in \VV. We add an edge from $s$ to $a_\ppp$ for every $\ppp\in\PP^\pr$ for which $A(\ppp)$ is non-empty, and the capacity of this edge is $1$. We also add an edge from $s$ to $b_\ppp$ for all $\ppp \in \PP^\pr$ for which $B(\ppp)$ is non-empty, and the capacity of these edges is equal to $|B(\ppp)| - 1$. For every vote $\ppp$, we add an edge from the vertex $a_\ppp$ to all vertices in $A(v_\ppp)$, and an edge from the vertex $b_\ppp$ to all vertices in $B(v_\ppp)$. All these edges have a capacity of one. Finally, we an edge from $v_w$ to $t$ of capacity $s(c)-s(w)$ for every candidate $w$ other than $c$. 

Let the number of votes in $\PP^\pr$ which are not complete be $t$. Now it is easy to see that the $(\CC, \PP^\pr, c)$ is a \YES instance if and only if there is a flow of size $t + \sum_{\ppp \in \VV^\pr} (|B(\ppp)| - 1)$ in \GG, where $\VV^\pr$ denotes the subset of votes who admit a non-empty $B$-set.
  
 {\bf Proof of Part 4.} Observe that if the difference vector has at least three $1$s, then the scoring rule is always either $\langle 1,1 \rangle$-contaminated or $\langle 0,1,0 \rangle$-contaminated. If the difference vector has at least two $1$s, then the scoring rule is either $\langle 0,1,0 \rangle$-contaminated or it is equivalent to $(2,1,\ldots,1,0)$. If the scoring rule has one $1$, then it is either plurality or veto or $k$-approval for some $1 < k < m-1$. Now the results follows from the fact that the $k$-approval voting rule is $\langle 0,1,0 \rangle$-contaminated for every $1 < k < m-1$.
\end{proof}
}

\subsection{Copeland$^\alpha$ Voting Rule}

We now turn to the Copeland$^\alpha$ voting rule. We show in \Cref{thm:copeland_hard_two} below that the \PW problem is \NPC for the Copeland$^\alpha$ voting rule even when every vote has at most $2$ undetermined pairs of candidates for every $\alpha\in[0,1]$.

\begin{restatable}{theorem}{CopelandTwo}\shortversion{[$\star$]}
\label{thm:copeland_hard_two}
 The \PW problem is \NPC for the Copeland$^\alpha$ voting rule even if the number of undetermined pairs of candidates in every vote is at most $2$ for every $\alpha\in[0,1]$.
\end{restatable}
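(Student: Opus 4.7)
For the hardness direction, I plan to reduce from \TDM{}, mirroring the structural template of the reduction used in \Cref{thm:two-missing-pairs}. Given a \TDM{} instance $(X \cup Y \cup Z, S)$ with $|X|=|Y|=|Z|=t$, I introduce the candidate set $\mathcal{C} = X \cup Y \cup Z \cup \{c, d\}$. Using the pairwise analogue of \Cref{score_gen} (a standard McGarvey-style construction), I first add a polynomial-size set $\mathcal{W}$ of complete filler votes that fixes a baseline pairwise margin graph with the following properties: $c$ beats $d$ decisively, $c$ has a prescribed Copeland$^{\alpha}$ score $s^{+}$, every $a \in X \cup Y \cup Z$ has Copeland$^{\alpha}$ score exactly $s^{+}+1$ in the baseline, and $d$'s score is far below $s^{+}$. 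Since $c$ will not appear in any undetermined pair of the partial votes introduced next, its Copeland$^{\alpha}$ score will remain $s^{+}$ in every completion.

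For each set $\sss = \{x,y,z\} \in S$, I add one partial vote $\ppp_\sss^{\pr}$ built exactly in the style of \Cref{thm:two-missing-pairs}: a fully specified backbone $\overrightarrow{\mathcal{C} \setminus \mathcal{C}_\sss} \succ x \succ y \succ z \succ \overrightarrow{\mathcal{C}_\sss}$ with exactly the two pairs $(x,y)$ and $(x,z)$ undetermined. The two canonical completions are the identity extension $x \succ y \succ z$ and the ``swap'' extension $y \succ z \succ x$; the third transitively-valid completion $y \succ x \succ z$ can be neutralized either by a symmetric baseline calibration or by inserting a single dummy candidate into the backbone between $x$ and $y$ (absorbing the extra flexibility into a pair that does not affect $c$'s standing). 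The swap extension decreases $D(x,y)$ and $D(x,z)$ each by $2$ and leaves $D(y,z)$ untouched, which, after calibrating the baseline margins within $X\cup Y\cup Z$, causes $x$'s Copeland$^{\alpha}$ score to drop by exactly one while keeping $y$'s and $z$'s scores within acceptable bounds.

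The equivalence then runs as follows. Given an exact cover $S^{\pr} \subseteq S$, use the swap extension for each $\sss \in S^{\pr}$ and the identity extension otherwise; each element of $X \cup Y \cup Z$ then loses exactly one pairwise comparison against one of its co-members, bringing its Copeland$^{\alpha}$ score down to $s^{+}$, so $c$ co-wins. Conversely, if $c$ co-wins in some completion $\bar{\PP}$, every element of $X\cup Y\cup Z$ must have incurred at least one score-reducing flip; a counting argument, using the fact that each $\ppp_\sss^{\pr}$ can contribute flips only among $x,y,z$, forces the set of ``swap-extended'' votes to form a sub-collection $S^{\pr} \subseteq S$ of size exactly $t$ covering $X \cup Y \cup Z$, i.e., an exact cover.

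The principal obstacle I foresee is handling $\alpha \in [0,1]$ uniformly: since ties contribute $\alpha$ to the Copeland$^{\alpha}$ score, a flip crossing the boundary between a tie and a strict win-loss changes candidate scores by $\pm(1-\alpha)$, which degenerates at $\alpha = 1$. I plan to sidestep this by engineering the baseline so that every margin whose sign can change under some completion of a $\ppp_\sss^{\pr}$ vote transitions between strictly positive and strictly negative integer values (never passing through zero from either end), so that each completion contributes an integer Copeland$^{\alpha}$ shift that is independent of $\alpha$. The endpoint cases $\alpha \in \{0,1\}$ need slightly different baseline calibrations, and I expect the cleanest presentation to split into $\alpha\in\{0,1\}$ (handled separately, as already reflected in the table with \Cref{thm:copeland_hard_two} covering them) and $\alpha\in(0,1)$ (where \Cref{thm:copeland_hard_one} will provide a stronger $t\ge 1$ bound, so the $t=2$ reduction here just has to be consistent with it).
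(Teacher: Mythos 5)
Your high-level plan matches the paper's: a reduction from \TDM with one partial vote per set $\sss=(x,y,z)$ having exactly the two undetermined pairs $(x,y)$ and $(x,z)$, an exact cover corresponding to choosing the ``swap'' completion $y\succ z\succ x$, and $\alpha$-independence obtained by forcing every relevant pairwise margin to be an odd integer so that no tie ever arises (the paper does exactly this: all margins lie in $\{-1,+1\}$ and the number of voters is odd, so the value of $\alpha$ never enters). However, your score calibration contains a genuine error that breaks both directions of the equivalence. You set every $a\in\XX\cup\YY\cup\ZZ$ at baseline score $s^{+}+1$ and claim that after the swap ``each element of $\XX\cup\YY\cup\ZZ$ loses exactly one pairwise comparison.'' But the only comparisons a completion of $\ppp_\sss^\pr$ can change are $(x,y)$ and $(x,z)$, and in every completion $y$ and $z$ can only move \emph{up} relative to $x$: their Copeland scores can only increase, never decrease. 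Starting them at $s^{+}+1$ therefore makes it impossible for $c$ to ever co-win. Moreover, the swap flips \emph{both} $D(x,y)$ and $D(x,z)$ across zero (each goes from $+1$ to $-1$ if margins are $\pm1$), so $x$ drops by two, not one. The paper's calibration is necessarily asymmetric: candidates in $\XX$ start at $s^{+}+2$ and must lose two wins (one to a $y$, one to a $z$), while candidates in $\YY\cup\ZZ$ start at $s^{+}-1$ and can each absorb exactly one gained win; this asymmetry is also what makes the reverse-direction counting argument go through and what automatically excludes the third transitive completion $y\succ x\succ z$ (it spends $y$'s unit budget while leaving $x$ two short of... one short of its required drop), so no dummy insertion or ``symmetric calibration'' is needed there.

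A secondary but real issue is your candidate set $\XX\cup\YY\cup\ZZ\cup\{c,d\}$ with a single dummy. Copeland scores are outdegrees of a tournament, so the degree sequence you prescribe must be realizable together with the structural constraints $D(x,y)=D(x,z)=+1$ for co-occurring pairs and $c$'s score pinned at $s^{+}$; a handshake count on $3t+2$ vertices leaves essentially no room to place $d$ ``far below'' while keeping everyone else within $O(1)$ of $s^{+}$. The paper sidesteps this by adding a pool $\GG$ of $10m$ dummy candidates arranged in a near-regular tournament among themselves, and realizes each target score by choosing how many dummies each candidate beats ($c$ beats all of $\GG$, each $x$ beats $8m+1$ of them, each $y,z$ beats $10m-2$ of them). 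You will need such a padding gadget, or an explicit realizability argument, before the McGarvey step can be invoked. Finally, your closing remark that $\alpha\in\{0,1\}$ needs a separate calibration contradicts your own (correct) odd-margin device: once no completion can create or destroy a tie, a single construction covers all $\alpha\in[0,1]$ simultaneously, which is precisely what the statement requires.
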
\longversion{

\begin{proof}
 The \PW problem for the Copeland$^\alpha$ voting rule is clearly in \NP. To prove \NP-hardness of \PW, we reduce \PW from \TDM. Let $\II = (\XX\cup\YY\cup\ZZ, \SS)$ be an arbitrary instance of \TDM. Let $|\XX|=|\YY|=|\ZZ|= m$. We construct an instance $\II^\pr$ of \PW from \II as follows. 
 \[\text{Set of candidates: } \CC = \XX\cup\YY\cup\ZZ\cup\{c\}\cup\GG, \text{ where } \GG = \{g_1, \ldots, g_{10m}\} \]
 
 For every $\sss=(x, y, z)\in\SS$, let us consider the following vote $\ppp_\sss$. 
 \[ \ppp_\sss = \overrightarrow{(\CC\setminus\{x, y, z\})_\sss} \suc x \suc y \suc z, \text{where } \overrightarrow{(\CC\setminus\{x, y, z\})_\sss} \text{ is any fixed ordering of } \CC\setminus\{x, y, z\} \]
 
 Using $\ppp_\sss$, we define a partial vote $\ppp_\sss^\pr$ as follows. 
 \[ \ppp_\sss^\pr = \ppp_\sss \setminus \{(x,y), (x,z)\} \]
 
 Let us define $\PP = \cup_{\sss\in\SS} \ppp_\sss$ and $\PP^\pr = \cup_{\sss\in\SS} \ppp_\sss^\pr$. There exists a set of complete votes \QQ of size polynomial in $m$ with the following properties~\cite{mcgarvey1953theorem}.
 
 \begin{itemize}
  \item $\DD_{\PP\cup\QQ} (x,y) = \DD_{\PP\cup\QQ} (x, z) = 1, \forall x\in\XX, y\in\YY, z\in\ZZ$
  \item $\DD_{\PP\cup\QQ} (x, g_i) = 1, \DD_{\PP\cup\QQ} (g_j, x) = 1, \forall x\in\XX, i\in[8m+1], j\in[10m]\setminus[8m+1]$
  \item $\DD_{\PP\cup\QQ} (y, g_i) = \DD_{\PP\cup\QQ} (g_j, y) =\DD_{\PP\cup\QQ} (z, g_i) = \DD_{\PP\cup\QQ} (g_j, z)= 1, \forall y\in\YY, z\in\ZZ, i\in [10m-2], j\in\{10m-1, 10m\}$
  \item $\DD_{\PP\cup\QQ} (x, c) = \DD_{\PP\cup\QQ} (y, c) = \DD_{\PP\cup\QQ} (z, c) = \DD_{\PP\cup\QQ} (c, g) = 1, \forall x\in\XX, y\in\YY, z\in\ZZ, g\in\GG$
  \item $\DD_{\PP\cup\QQ} (g_j, g_i) = 1, \forall i\in[5m], j\in\{i+1, i+2, \ldots, i+\lfloor\nfrac{(10m-1)}{2}\rfloor\}$
 \end{itemize}
 
 All the pairwise margins which are not specified above is any integer in $\{-1, 1\}$. We summarize the Copeland score of every candidate in \CC from $\PP \cup \QQ$ in \Cref{tbl:cop_initial}. We now define the instance $\II^\pr$ of \PW to be $(\CC, \PP^\pr \cup \QQ, c)$. Notice that the number of undetermined pairs of candidates in every vote in $\II^\pr$ is at most $2$. This finishes the description of the \PW instance $\II^\pr$. Notice that since the number of voters in $\II^\pr$ is odd (since the pairwise margins are odd integers), the actual value of $\alpha$ does not play any role since no two candidates tie. Hence, in the rest of the proof, we omit $\alpha$ while mentioning the voting rule. We claim that \II and $\II^\pr$ are equivalent.
 
 \begin{table}[!htbp]
  \centering
  \begin{tabular}{|ccc|}\hline\hline
   Candidates & Copeland score & Winning against\\\hline
   $c$ & $10m$ & \GG\\
   $x\in\XX$ & $10m+2$ & $c$, \YY, \ZZ, $\{g_i: i\in [8m+1]\}$\\
   $y\in\YY, z\in\ZZ$ & $10m-1$ & $c$, $\{g_i: i\in [10m-2]\}$\\
   $g_i\in\GG$ & $<9m$ & $\subseteq\CC\setminus\{g_j : j\in\{i+1, i+2, \ldots, i+\lfloor\nfrac{10m-1)}{2}\rfloor\}\}$\\\hline
  \end{tabular}
  \caption{Summary of initial Copeland scores of the candidates}\label{tbl:cop_initial}
 \end{table}
 
 In the forward direction, suppose that \II be a \YES instance of \TDM. Then there exists a collection of $m$ sets $\SS^\pr\subset\SS$ in \SS such that $\cup_{\AA\in\SS^\pr} \AA = \XX\cup\YY\cup\ZZ$. We extend the partial vote $\ppp_\sss^\pr$ to complete vote $\bar{\ppp}_\sss$ as follows for every $\sss\in\SS$. 
 $$
 \bar{\ppp}_\sss = 
 \begin{cases}
  \overrightarrow{(\CC\setminus\{x, y, z\})_\sss} \suc y \suc z \suc x & \sss\in\SS^\pr\\
  \overrightarrow{(\CC\setminus\{x, y, z\})_\sss} \suc x \suc y \suc z & \sss\notin\SS^\pr
 \end{cases}
 $$
 
 We consider the extension of $\PP^\pr$ to $\bar{\PP} = \cup_{\sss\in\SS} \bar{\ppp}_\sss$. We observe that $c$ is a co-winner in the profile $\bar{\PP}\cup\QQ$ since the Copeland score of $c$, every $x\in\XX, y\in\YY$, and $z\in\ZZ$ in $\bar{\PP}\cup\QQ$ is $10m$ and the Copeland score of every candidate in \GG in $\bar{\PP}\cup\QQ$ is strictly less than $9m$.
 
 In the reverse direction we suppose that the \PW instance $\II^\pr$ be a \YES instance. Then there exists an extension of the set of partial votes $\PP^\pr$ to a set of complete votes $\bar{\PP}$ such that $c$ is a co-winner in $\bar{\PP}\cup\QQ$. Let us call the extension of the partial vote $\ppp_\sss^\pr$ in $\bar{\PP}$ $\bar{\ppp}_\sss$. First we notice that the Copeland score of $c$ in $\bar{\PP}\cup\QQ$ is $10m$ since the relative ordering of $c$ with respect to every other candidate is already fixed in $\PP^\pr\cup\QQ$. Now we observe that, in $\PP\cup\QQ$, the Copeland score of every candidate in \XX is $2$ more than the Copeland score of $c$, whereas the Copeland score of every candidate in \YY and \ZZ is $1$ less than the Copeland score of $c$. Hence, the only way for $c$ to co-win the election is as follows: every candidate in \XX loses against exactly one candidate in \YY and exactly one candidate in \ZZ. This in turn is possible only if, for every $x\in\XX$, there exists a unique $\sss = (x, y, z)\in\SS$ such that $\bar{\ppp}_\sss = \overrightarrow{(\CC\setminus\{x, y, z\})_\sss} \suc y \suc z \suc x$; we call that unique \sss corresponding to every $x\in\XX$ $\sss_x$. We now claim that $\TT = \{\sss_x: x\in\XX\}$ forms a three dimensional matching of $\II^\pr$. First notice that, $|\TT|=m$ since there is exactly one $\sss_x$ for every $x\in\XX$. If \TT does not form a three dimensional matching of $\II^\pr$, then there exists a candidate in $\YY\cup\ZZ$ whose Copeland score is strictly more than the Copeland score of $c$ (which is $10m$). However, this contradicts our assumption that $c$ is a co-winner in $\bar{\PP}\cup\QQ$. Hence \TT forms a three dimensional matching of $\II$ and thus $\II$ is a \YES instance.
\end{proof}
}

We prove in \Cref{thm:copeland_poly_one} that the number of undetermined pairs of candidates in \Cref{thm:copeland_hard_two} is tight for the Copeland$^0$ and Copeland$^1$ voting rules.

\begin{restatable}{theorem}{CopelandPoly}\shortversion{[$\star$]}
\label{thm:copeland_poly_one}
 The \PW problem is in \Pb for the Copeland$^0$ and Copeland$^1$ voting rules if the number of undetermined pairs of candidates in every vote is at most $1$.
\end{restatable}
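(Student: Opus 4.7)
The plan is to reduce the problem to a max-flow computation, much in the spirit of the polynomial-time cases in \Cref{thm:one-missing-pair,thm:two-missing-pairs,thm:three-missing-pairs}, but using the pairwise-margin structure of Copeland rather than positional scores. Let $(\CC, \PP, c)$ be an input instance in which every partial vote has at most one undetermined pair. The crucial structural fact is that, because each vote contributes at most one undetermined pair, the completions of different pairs $\{x,y\}$ are \emph{independent}: the sign of $\DD(x,y)$ can be steered by the votes whose undetermined pair is exactly $\{x,y\}$, and those votes play no role for any other pair.

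First I would \emph{maximize the score of $c$}. For each undetermined pair of the form $\{c,y\}$, complete it with $c\succ y$; this maximizes $\DD(c,y)$ simultaneously for every $y\ne c$ and hence maximizes $c$'s Copeland$^\alpha$ score, call it $s(c)$. A short exchange argument shows this is without loss of generality: altering such a choice can only (weakly) lower $s(c)$ while leaving the contribution of the $\{c,y\}$-pair to $y$'s score weakly higher, so it can never help $c$ co-win. After this step, $c$'s score is fixed and every margin $\DD(c,y)$ is decided.

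Next, for every pair $\{x,y\}$ with $x,y\ne c$, I would classify the pair according to the achievable outcomes implied by the unfixed votes whose undetermined pair is $\{x,y\}$ (either $x$-wins, $y$-wins, or tie, based on the achievable range of $\DD(x,y)$). Define $F(x)$ to be $x$'s ``baseline'' Copeland score, namely $x$'s contribution from the pair $(c,x)$ plus the contributions of all pairs $\{x,y\}$ ($y\ne c$) whose outcome is already forced (including forced ties, which contribute $1$ in the Copeland$^1$ case). If $F(x)>s(c)$ for some $x$, output \NO. Otherwise, the pairs that remain are those where a binary choice of winner is still possible; in Copeland$^0$ we first default every pair where a tie is achievable to a tie (this is always weakly best for both candidates), leaving only the pairs where tie is impossible, and in Copeland$^1$ we always prefer a strict winner over a tie (a tie contributes $1$ to both sides, and any win contributes $1$ to only one side).

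This leaves a collection $\UU$ of \emph{must-decide} pairs, each of which must be assigned to one of its two endpoints (that endpoint gaining one Copeland point). I would encode this as a max-flow instance: a source $s$, a sink $t$, a node $p_{\{x,y\}}$ for every pair in $\UU$, and a node $v_x$ for every $x\ne c$; edges $s\to p_{\{x,y\}}$ of capacity $1$, edges $p_{\{x,y\}}\to v_x$ and $p_{\{x,y\}}\to v_y$ of capacity $1$, and edges $v_x\to t$ of capacity $s(c)-F(x)$. Then $(\CC,\PP,c)$ is a \YES instance iff this network admits a flow of value $|\UU|$. The only place where I expect the argument to require care is the per-pair case analysis distinguishing Copeland$^0$ from Copeland$^1$: in particular, verifying that in Copeland$^1$ ``avoiding a tie'' is always at least as good as tying even when one of the winner outcomes is not achievable (so that a forced-winner case is pushed into $F(\cdot)$ rather than into $\UU$), and in Copeland$^0$ that ``tying whenever possible'' is dominant. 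Once this dichotomy is cleanly handled, correctness of the flow reduction follows from the independence of the per-pair choices together with the standard argument that a bipartite assignment of the must-decide pairs to candidates respecting per-candidate capacities exists iff the flow saturates the source.
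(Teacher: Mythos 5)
Your proposal is correct and follows essentially the same route as the paper's proof: place $c$ as high as possible, greedily resolve pairs where a tie is achievable (forcing ties for Copeland$^0$ and avoiding them for Copeland$^1$), fold forced outcomes into a per-candidate baseline, reject if any baseline exceeds $s(c)$, and route the remaining genuinely two-way pairs through a unit-capacity max-flow network with sink capacities $s(c)$ minus the baseline. Your explicit per-pair dominance analysis for Copeland$^1$ is a welcome elaboration of the step the paper dismisses as ``similar'' to the $\alpha=0$ case, but the underlying argument is the same.
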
\longversion{

\begin{proof}
 Let us prove the result for $\alpha=0$. The proof for $\alpha=1$ case is similar. Let the input instance of \PW be $(\CC, \PP, c)$ where every partial vote in \PP has at most one pair of candidates whose ordering is undetermined. We consider an extension $\PP^\pr$ of \PP where the candidate $c$ is placed as high as possible. For every two candidates $x, y\in\CC$, let $\VV_{\{x, y\}}$ be the set of partial votes in $\PP^\pr$ where the ordering of $x$ and $y$ is undetermined. Let $\BB$ be the set of pairs of vertices $\{x,y\}$ for which it is possible to make $x$ tie with $y$ by fixing the ordering of $x$ and $y$ in the votes in $\VV_{\{x,y\}}$. For every $\{x, y\}\in\BB$, we also fix the orderings of $x$ and $y$ in $\PP^\pr$ in such a way that $x$ and $y$ tie. We first observe that the \PW instance $(\CC, \PP, c)$ is a \YES instance if and only if $(\CC, \PP^\pr, c)$ is a \YES instance since every vote in \PP has at most one pair of candidates whose ordering is undetermined. Let the Copeland score of $c$ in $\PP^\pr$ be $s(c)$. We put every unordered pair of candidates $\{x, y\}\subset\CC\setminus\{c\}$ in a set \AA if setting $x$ preferred over $y$ in every vote $\VV_{\{x, y\}}$ makes $x$ defeat $y$ and setting $y$ preferred over $x$ in every vote in $\VV_{\{x,y\}}$ makes $y$ defeat $x$ in pairwise election. Note that \AA can be computed in polynomial amount of time. Now we construct the following instance $\II = (\GG = (\UU, \EE), s, t)$ of the maximum $s-t$ flow problem. The vertex set \UU of \GG consists of two special vertices $s$ and $t$, one vertex $u_{\{x,y\}}$ for every $\{x,y\}$ in \AA, one vertex $u_a$ for every candidate $a\in\CC$. For every candidate $x\in\CC\setminus\{c\}$, let $n_x$ be the number of candidates in \CC whom $x$ defeats pairwise in every extension of $\PP^\pr$. Observe that $n_x$ can be computed in polynomial amount of time. We answer \NO if there exists a $x\in\CC\setminus\{c\}$ whose $n_x > s(c)$ since the Copeland score of $x$ is more than the Copeland score of $c$ in every extension of $\PP^\pr$ and thus $c$ cannot co-win. For every $\{x, y\}\in\AA$, we add one edge from $u_{\{x,y\}}$ to $x$, one edge from $u_{\{x,y\}}$ to $y$, and one edge from $s$ to $u_{\{x,y\}}$ each with capacity $1$. For every vertex $u_x$ with $n_x < s(c)$, we add an edge from $x$ to $t$ with capacity $s(c)-n_x$. We claim that the \PW instance $(\CC, \PP^\pr, c)$ is a \YES instance if and only if there is a flow from $s$ to $t$ in \GG of size $\sum_{x\in\CC} (n_x-s(c))$. The proof of correctness follows easily from the construction of \GG.
\end{proof}
}

We show next that the \PW problem is \NPC for the Copeland$^\alpha$ voting rule even if the number of undetermined pairs of candidates in every vote is at most $1$ for $\alpha\in(0,1)$. We break the proof into two parts --- \Cref{lem:copeland_hard_alpha_zero_half} proves the result for every $\alpha\in(0,\nfrac{1}{2}]$ and \Cref{lem:copeland_hard_alpha_half_one} proves for every $\alpha\in[\nfrac{1}{2}, 1)$.

\begin{restatable}{lemma}{CopelandZeroHalf}\shortversion{[$\star$]}
\label{lem:copeland_hard_alpha_zero_half}
 The \PW problem is \NPC for the Copeland$^\alpha$ voting rule even if the number of undetermined pairs in every vote is at most $1$ for every $\alpha\in(0,\nfrac{1}{2}]$.
\end{restatable}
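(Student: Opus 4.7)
The plan is to reduce from $(3, B2)$-\textsc{SAT}, mirroring the structure of \Cref{thm:one-missing-pair} but engineered for the pairwise-comparison semantics of Copeland$^\alpha$. Given an instance with variables $\{x_1, \ldots, x_n\}$ and clauses $\{c_1, \ldots, c_t\}$, I would build the candidate set $\CC = \{b_i, b_i^\pr : i \in [n]\} \cup \{e_j : j \in [t]\} \cup \{c\} \cup \GG$, with $\GG$ a pool of dummy candidates used to tune pairwise margins via McGarvey's theorem. Using a backbone of complete votes $\QQ$, I would arrange the base pairwise margins so that the initial Copeland$^\alpha$ scores satisfy $s(c) = T$, $s(e_j) = T + (1 - \alpha)$ for every $j$, and $s(b_i) = s(b_i^\pr) = T - 1$ for every $i$, while every dummy in $\GG$ has score strictly less than $T$.

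For the variable gadget, for each $i$ I would introduce one partial vote $\ppp_i$ with undetermined pair $\{b_i, b_i^\pr\}$, with backbone contributions chosen so that completing $\ppp_i$ as $b_i \suc b_i^\pr$ (interpreted as $x_i = F$) raises $s(b_i)$ from $T - 1$ to $T$, while completing it as $b_i^\pr \suc b_i$ (interpreted as $x_i = T$) raises $s(b_i^\pr)$ to $T$. For the clause gadget, for each clause $c_j$ with literals $\ell_{j,1}, \ell_{j,2}, \ell_{j,3}$, I would add three partial votes $\qqq_{j,k}$ with undetermined pair $\{e_j, \ell_{j,k}^\star\}$, where $\ell_{j,k}^\star \in \{b_i, b_i^\pr\}$ is the candidate corresponding to the literal. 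The backbone $\QQ$ would be tuned so that the default orientation $e_j \suc \ell_{j,k}^\star$ yields pairwise margin $+2$ while the flipped orientation yields margin $0$ (a tie), so that a single flip of $\qqq_{j,k}$ decreases $s(e_j)$ by $1 - \alpha$ and increases $s(\ell_{j,k}^\star)$ by $\alpha$.

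The equivalence splits into two directions. For the forward direction, given a satisfying assignment $\tau$ I would complete each $\ppp_i$ according to $\tau(x_i)$ and, for each clause, pick exactly one witness literal $\ell_{j,k}$ made true by $\tau$ and flip exactly $\qqq_{j,k}$. A direct computation yields $s(e_j) = T$, and $s(b_i), s(b_i^\pr) \leq T - 1 + 2\alpha \leq T$, using that each variable appears in at most two clauses per polarity and $\alpha \leq \nfrac{1}{2}$; dummies stay strictly below $T$, so $c$ co-wins. For the reverse direction, if $c$ co-wins then $s(e_j) \leq T$ forces at least one $\qqq_{j,k}$ to be flipped for every clause, yielding a witness literal. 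The constraints $s(b_i), s(b_i^\pr) \leq T$ combined with the $\ppp_i$-induced offsets then force every witness to be consistent with the assignment read off from $\ppp_i$: otherwise the witness literal's candidate starts at $T$ and gains an extra $\alpha$, pushing its score strictly above $s(c)$.

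\textbf{Main obstacle.} The critical arithmetic is $k \alpha \leq 1$ where $k \leq 2$ is the per-polarity variable-occurrence bound of $(3, B2)$-\textsc{SAT}; this is precisely where the hypothesis $\alpha \leq \nfrac{1}{2}$ is sharp, and the complementary regime $\alpha \geq \nfrac{1}{2}$ (handled in \Cref{lem:copeland_hard_alpha_half_one}) will require a dual setup in which losses rather than ties are the costly operation. A secondary technical matter is arranging the McGarvey backbone so that the stipulated Copeland scores and pairwise margins are simultaneously realizable: this is routine given McGarvey's theorem, but requires careful bookkeeping of the default contributions already made by the partial votes so that $\QQ$ is tuned to compensate exactly, pair by pair.
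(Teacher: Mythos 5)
Your high-level architecture is the same as the paper's: a reduction from \SAT in which each clause candidate $e_j$ starts $(1-\alpha)$ above $c$, a flip of a clause vote turns a win of margin $2$ into a tie (costing $e_j$ exactly $1-\alpha$ and paying the literal candidate $\alpha$), and the bound ``at most two occurrences per polarity'' together with $2\alpha\le 1$ gives the forward direction. That part matches the paper's clause gadget essentially verbatim. The gap is in your variable gadget, and it is not a bookkeeping issue that McGarvey's theorem can absorb: you ask for a single partial vote on the pair $\{b_i,b_i^\pr\}$ whose two completions make $b_i$ beat $b_i^\pr$ in one case and $b_i^\pr$ beat $b_i$ in the other. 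In any profile in which every vote is (in every completion) a complete linear order over \CC, all pairwise margins have the same parity as the total number of votes. Your clause gadget requires an even total (margins $2$ and $0$ must be realizable), so the $b_i$-vs-$b_i^\pr$ margin is even in every completion; since flipping one vote changes it by exactly $2$, the two completions give margins such as $\{+2,0\}$ or $\{0,-2\}$ --- a win can toggle to a tie, but never to a loss. Hence ``exactly one of $b_i,b_i^\pr$ rises from $T-1$ to $T$, depending on the orientation'' is unrealizable, and the asymmetric repair (win versus tie, i.e.\ $+1$ versus $+\alpha$ to each) breaks your reverse direction, because then a false literal retains enough slack to absorb the $\alpha$ it gains as a clause witness. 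You also cannot escape by making the total number of votes odd: then no ties can ever arise, $\alpha$ becomes irrelevant, and the instances you build are solvable in \Pb by the paper's maxflow algorithm for Copeland$^{0}$ with one undetermined pair per vote, so no \NPC reduction of that shape can work.

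The paper's construction supplies exactly the missing mechanism: each variable gets an auxiliary candidate $d_i$ whose Copeland$^\alpha$ score is pinned one unit below $c$'s, together with \emph{two} partial votes with undetermined pair $(x_i,d_i)$ and two with $(\bar x_i,d_i)$ (still one undetermined pair per vote). Flipping both votes on one side converts a margin-$2$ win of that literal over $d_i$ into a loss, demoting the literal by $1$ and promoting $d_i$ by $1$; the budget on $d_i$ forbids doing this on both sides, and the side that is fully flipped encodes the truth value, giving the demoted (true) literal the slack of $1\ge 2\alpha$ needed to absorb clause flips while the other literal has none. Some device of this kind --- a budgeted auxiliary candidate and duplicated votes, rather than a head-to-head toggle between $b_i$ and $b_i^\pr$ --- is what your proposal is missing, and without it the reduction as described cannot be instantiated.
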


\begin{proof}
 The \PW problem for the Copeland$^\alpha$ voting rule is clearly in \NP. To prove \NP-hardness of \PW, we reduce \PW from \SAT. Let $\II$ be an instance of \SAT, over the variables $\VV = \{x_1, \ldots, x_n\}$ and with clauses $\TT = \{c_1, \ldots, c_m\}$. We construct an instance $\II^\pr$ of \PW from \II as follows. 
 \[\text{Set of candidates: } \CC = \{x_i, \bar{x}_i, d_i: i\in[n]\}\cup\{c_i:i\in[m]\}\cup\{c\}\cup\GG, \text{ where } \GG = \{g_1, \ldots, g_{mn}\} \]
 
 For every $i\in[n]$, let us consider the following votes $\ppp_{x_i}^1, \ppp_{x_i}^2, \ppp_{\bar{x}_i}^1, \ppp_{\bar{x}_i}^2$. 
 \[ \ppp_{x_i}^1, \ppp_{x_i}^2: x_i\suc d_i\suc \text{others}~,~ \ppp_{\bar{x}_i}^1, \ppp_{\bar{x}_i}^2: \bar{x}_i\suc d_i\suc \text{others} \]
 
 Using $\ppp_{x_i}^1, \ppp_{x_i}^2, \ppp_{\bar{x}_i}^1, \ppp_{\bar{x}_i}^2$, we define the partial votes $\ppp_{x_i}^{1\pr}, \ppp_{x_i}^{2\pr}, \ppp_{\bar{x}_i}^{1\pr}, \ppp_{\bar{x}_i}^{2\pr}$ as follows. 
 \[ \ppp_{x_i}^{1\pr}, \ppp_{x_i}^{2\pr}: \ppp_{x_i}^1 \setminus \{(x_i, d_i)\}~,~ \ppp_{\bar{x}_i}^{1\pr}, \ppp_{\bar{x}_i}^{2\pr}: \ppp_{\bar{x}_i}^1 \setminus \{(\bar{x}_i, d_i)\}\]
 
 Let a clause $c_j$ involves the literals $\el_j^1, \el_j^2, \el_j^3$. For every $j\in[m]$, let us consider the following votes $\qqq_j({\el_j^1}), \qqq_j({\el_j^2}), \qqq_j({\el_j^3})$. 
 \[ \qqq_j({\el_j^k}): c_j\suc \el_j^k\suc \text{others}, \forall k\in[3] \]
 
 Using $\qqq_j({\el_j^1}), \qqq_j({\el_j^2}), \qqq_j({\el_j^3})$, we define the partial votes $\qqq_j^\pr({\el_j^1}), \qqq_j^\pr({\el_j^2}), \qqq_j^\pr({\el_j^3})$ as follows. 
 \[ \qqq_j^\pr({\el_j^k}): \qqq_j({\el_j^k})\setminus\{(c_j, \el_j^k)\}, \forall k\in[3] \]
 
 Let us define $$\PP = \cup_{i\in[n]} \{\ppp_{x_i}^1, \ppp_{x_i}^2, \ppp_{\bar{x}_i}^1, \ppp_{\bar{x}_i}^2\} \cup_{j\in[m]} \{\qqq_j({\el_j^1}), \qqq_j({\el_j^2}), \qqq_j({\el_j^3})\}$$ and $$\PP^\pr = \cup_{i\in[n]} \{\ppp_{x_i}^{1\pr}, \ppp_{x_i}^{2\pr}, \ppp_{\bar{x}_i}^{1\pr}, \ppp_{\bar{x}_i}^{2\pr}\} \cup_{j\in[m]} \{\qqq_j^\pr({\el_j^1}), \qqq_j^\pr({\el_j^2}), \qqq_j^\pr({\el_j^3})\}.$$ \shortversion{There exists a set of complete votes \QQ of size polynomial in $n$ and $m$ which realizes \Cref{tbl:cop_initial_alpha_zero_half}~\cite{mcgarvey1953theorem}. All the wins and defeats in \Cref{tbl:cop_initial_alpha_zero_half} are by a margin of $2$.}\longversion{There exists a set of complete votes \QQ of size polynomial in $n$ and $m$ with the following properties~\cite{mcgarvey1953theorem}.
 
 \begin{itemize}
  \item Let $G_{m}, G_{\nfrac{3mn}{4}}\subset\GG$ such that $|G_{m}|=m, G_{\nfrac{3mn}{4}}=\nfrac{3mn}{4},$ and $ G_{m}\cap G_{\nfrac{3mn}{4}}=\emptyset$. Then we have $\forall i\in[n], \DD_{\PP\cup\QQ} (x_i, x_j) = \DD_{\PP\cup\QQ} (x_i, \bar{x}_k) = \DD_{\PP\cup\QQ} (x_i, c) = \DD_{\PP\cup\QQ} (x_i, g) = 0, \forall j\in[n]\setminus\{i\} \forall k\in[n] \forall g\in G_m, \DD_{\PP\cup\QQ} (x_i, d_j) = \DD_{\PP\cup\QQ} (x_i, g^\pr) = 2, \DD_{\PP\cup\QQ} (x_i, g^\prr) = -2, \forall j\in[n] \forall g^\pr\in G_{\nfrac{3mn}{4}} \forall g^\prr\in\GG\setminus(G_m\cup G_{\nfrac{3mn}{4}}) $
  
  \item Let $G_{m}, G_{\nfrac{3mn}{4}}\subset\GG$ such that $|G_{m}|=m, G_{\nfrac{3mn}{4}}=\nfrac{3mn}{4},$ and $ G_{m}\cap G_{\nfrac{3mn}{4}}=\emptyset$. Then we have $\forall i\in[n], \DD_{\PP\cup\QQ} (\bar{x}_i, \bar{x}_j) = \DD_{\PP\cup\QQ} (\bar{x}_i, x_k) = \DD_{\PP\cup\QQ} (\bar{x}_i, c) = \DD_{\PP\cup\QQ} (\bar{x}_i, g) = 0, \forall j\in[n]\setminus\{i\} \forall k\in[n] \forall g\in G_m, \DD_{\PP\cup\QQ} (\bar{x}_i, d_j) = \DD_{\PP\cup\QQ} (\bar{x}_i, g^\pr) = 2, \DD_{\PP\cup\QQ} (\bar{x}_i, g^\prr) = -2, \forall j\in[n] \forall g^\pr\in G_{\nfrac{3mn}{4}} \forall g^\prr\in\GG\setminus(G_m\cup G_{\nfrac{3mn}{4}}) $
  
  \item Let $G_{n+\nfrac{3mn}{4}}\subset\GG$ such that $|G_{m}|=n+\nfrac{3mn}{4}$. Then we have $\DD_{\PP\cup\QQ} (c, x_i) = \DD_{\PP\cup\QQ} (c, \bar{x}_i) = \DD_{\PP\cup\QQ} (c,c_j) = 0, \forall i\in[n] \forall j\in[m], \DD_{\PP\cup\QQ} (c, g) = \DD_{\PP\cup\QQ} (g^\pr, c) = 2, \forall g\in G_{n + \nfrac{3mn}{4}} \forall g^\pr\in\GG\setminus G_{n + \nfrac{3mn}{4}}$
  
  \item Let $G_{2n-1}, G_{\nfrac{3mn}{4}-n+1}\subset\GG$ such that $|G_{2n-1}|=2n-1, G_{\nfrac{3mn}{4}-n+1}=\nfrac{3mn}{4}-n+1, G_{2n-1}\cap G_{\nfrac{3mn}{4}-n+1}=\emptyset$. Then we have $\forall i\in[m], \DD_{\PP\cup\QQ} (c_i, c_j) = \DD_{\PP\cup\QQ} (c_i, c) = \DD_{\PP\cup\QQ} (c_i, g) = 0, \forall j\in[m]\setminus\{i\} \forall g\in G_{2n-1}, \DD_{\PP\cup\QQ} (c_i, x_k) = \DD_{\PP\cup\QQ} (c_i, \bar{x}_k) = \DD_{\PP\cup\QQ} (d_k, c_j) = \DD_{\PP\cup\QQ} (c_i, g^\pr) = \DD_{\PP\cup\QQ} (g^\prr, c_i) = 2, \forall k\in[n] \forall g^\pr\in G_{\nfrac{3mn}{4}-n+1} \forall g^\prr\in\GG\setminus(G_{2n-1} \cup G_{\nfrac{3mn}{4}-n+1})$
  
  \item Let $G_{2n+m}, G_{\nfrac{3mn}{4}-m+n-2}\subset\GG$ such that $|G_{2n+m}|=2n+m, G_{\nfrac{3mn}{4}-m+n-2}=\nfrac{3mn}{4}-m+n-2, G_{2n+m}\cap G_{\nfrac{3mn}{4}-m+n-2}=\emptyset$. Then we have $\forall i\in[n], \DD_{\PP\cup\QQ} (d_i, g) = 0, \forall g\in G_{2n+m}, \DD_{\PP\cup\QQ} (d_i, g^\pr) = \DD_{\PP\cup\QQ} (g^\prr, d_i) = 2, \forall g^\pr\in G_{\nfrac{3mn}{4}-m+n-2}, g^\prr\in \GG\setminus(G_{2n+m} \cup G_{\nfrac{3mn}{4}-m+n-2}) $
  
  \item $\forall i\in[mn], \DD_{\PP\cup\QQ} (g_j, g_i) = 2 \forall j\in\{i+k: k\in [\lfloor\nfrac{(mn-1)}{2}\rfloor]\}$
 \end{itemize}
 
 All the pairwise margins which are not specified above is $0$. We summarize the Copeland$^\alpha$ score of every candidate in \CC from $\PP \cup \QQ$ in \Cref{tbl:cop_initial_alpha_zero_half}.} We now define the instance $\II^\pr$ of \PW to be $(\CC, \PP^\pr \cup \QQ, c)$. Notice that the number of undetermined pairs of candidates in every vote in $\II^\pr$ is at most $1$. This finishes the description of the \PW instance. \shortversion{We defer the formal proof of equivalence of the two instances and the polynomial time solvable case to the appendix.}\longversion{We claim that \II and $\II^\pr$ are equivalent.}
 \begin{table}[!htbp]
  \centering
  \resizebox{\textwidth}{!}{  
  \begin{tabular}{|ccccc|}\hline
   Candidates & Copeland$^\alpha$ score & Winning against & Losing against & Tie with\\\hline\hline
   
   $c$ & \makecell{$(2n+m)\alpha$\\$ + n + \nfrac{3mn}{4}$} & $G^\pr\subset\GG, |G^\pr| = n + \nfrac{3mn}{4}$ & \makecell{$\GG\setminus G^\pr,|G^\pr|=n + \nfrac{3mn}{4}$ \\$ d_i, \forall i\in[n]$} & \makecell{$x_i, \bar{x}_i\forall i\in[n]$ \\ $c_j \forall j\in[m]$}\\
   
   $x_i, \forall i\in[n]$ & \makecell{$(2n+m)\alpha$\\$ + n + \nfrac{3mn}{4}$} & \makecell{$G^\prr\subset\GG, |G^\prr| = \nfrac{3mn}{4}$ \\$d_i \forall i\in[n]$} & $\GG\setminus (G^\pr\cup G^\prr)$ & \makecell{$c, G^\pr\subset\GG, |G^\pr| = m $ \\$x_j,   
   \forall j\in[n]\setminus\{i\}$\\$\bar{x}_j \forall j\in[n]$} \\
   
   $\bar{x}_i, \forall i\in[n]$ & \makecell{$(2n+m)\alpha$\\$ + n + \nfrac{3mn}{4}$} & \makecell{$G^\prr\subset\GG, |G^\prr| = \nfrac{3mn}{4}$ \\$d_i \forall i\in[n]$} & $\GG\setminus (G^\pr\cup G^\prr)$ & \makecell{$c, G^\pr\subset\GG, |G^\pr| = m $ \\$\bar{x}_j, \forall j\in[n]\setminus\{i\}$\\$x_j \forall j\in[n]$} \\
   
   $c_j, \forall j\in[m]$ & \makecell{$(2n+m-1)\alpha $\\$+ n + \nfrac{3mn}{4}+1$} & \makecell{$x_i, \bar{x}_i\forall i\in[n]$\\$G^\pr\subset\GG, |G^\pr| = \nfrac{3mn}{4}-n+1$} & \makecell{$\GG\setminus (G^\pr\cup G^\prr)$ \\$ d_i, \forall i\in[n]$} & \makecell{$c$\\$c_j \forall j\in[m]\setminus\{i\}$\\$G^\prr\subset\GG, |G^\prr| = 2n-1$} \\
   
   $d_i, i\in[n]$& \makecell{$(2n+m)\alpha$\\$ + n + \nfrac{3mn}{4}-1$} &\makecell{$c, c_j, \forall j\in[m]$\\$G^\prr\subset\GG, |G^\prr| = \nfrac{3mn}{4} - m + n -2$} & \makecell{$x_i, \bar{x}_i\forall i\in[n]$\\$\GG\setminus (G^\pr\cup G^\prr)$} & $G^\pr\subset\GG, |G^\pr| = 2n+m$ \\
   
   $g_i, \forall i\in[mn]$ & $ < \nfrac{3mn}{4}$ &  & $\forall j\in\{i+k: k\in [\lfloor\nfrac{(mn-1)}{2}\rfloor]$ &  \\\hline
  \end{tabular}}
  \caption{Summary of Copeland$^\alpha$ scores of the candidates from $\PP\cup\QQ$. All the wins and defeats in the table are by a margin of $2$.}\label{tbl:cop_initial_alpha_zero_half}
 \end{table}
\longversion{
 In the forward direction, suppose that \II be a \YES instance of \SAT. Then there exists an assignment $x_i^*$ of variables $x_i$ for all $i\in[n]$ to $0$ or $1$ that satisfies all the clauses $c_j, j\in[m]$. For every $i\in[n]$, we extend the partial votes $\ppp_{x_i}^{1\pr}, \ppp_{x_i}^{2\pr}, \ppp_{\bar{x}_i}^{1\pr}, \ppp_{\bar{x}_i}^{2\pr}$ to the complete votes $\bar{\ppp}_{x_i}^{1}, \bar{\ppp}_{x_i}^{2}, \bar{\ppp}_{\bar{x}_i}^{1}, \bar{\ppp}_{\bar{x}_i}^{2}$ as follows.
 $$
 \bar{\ppp}_{x_i}^{1}, \bar{\ppp}_{x_i}^{2} = 
 \begin{cases}
  x_i\suc d_i\suc \text{others} & x_i^*=0\\
  d_i\suc x_i\suc \text{others} & x_i^*=1
 \end{cases}~;~  \bar{\ppp}_{\bar{x}_i}^{1}, \bar{\ppp}_{\bar{x}_i}^{2} = 
 \begin{cases}
  \bar{x}_i\suc d_i\suc \text{others} & x_i^*=1\\
  d_i\suc \bar{x}_i\suc \text{others} & x_i^*=0
 \end{cases}
 $$
 
 Let $c_j$ be a clause involving literals $\el_j^1, \el_j^2, \el_j^3$ and let us assume, without loss of generality, that the assignment $\{x_i^*\}_{i\in[n]}$ makes the literal $\el_j^3$ $1$. For every $j\in[m]$, we extend the partial votes $\qqq_j^\pr({\el_j^1}), \qqq_j^\pr({\el_j^2}), \qqq_j^\pr({\el_j^3})$ to the complete votes $\bar{\qqq}_j({\el_j^1}), \bar{\qqq}_j({\el_j^2}), \bar{\qqq}_j({\el_j^3})$ as follows. 
 \[ \bar{\qqq}_j({\el_j^3}) = \el_j^3 \suc c_j\suc \text{others}, \bar{\qqq}_j({\el_j^k}) = c_j\suc \el_j^k\suc \text{others}, \forall k\in[2] \]
 
 We consider the extension of $\PP^\pr$ to $\bar{\PP} = \cup_{i\in[n]} \{\bar{\ppp}_{x_i}^1, \bar{\ppp}_{x_i}^2, \bar{\ppp}_{\bar{x}_i}^1, \bar{\ppp}_{\bar{x}_i}^2\} \cup_{j\in[m]} \{\bar{\qqq}_j({\el_j^1}), \bar{\qqq}_j({\el_j^2}), \bar{\qqq}_j({\el_j^3})\}$. We observe that $c$ is a co-winner in the profile $\bar{\PP}\cup\QQ$ since the Copeland$^\alpha$ score of $c$, $d_i$ for every $i\in[n]$, and $c_j$ for every $j\in[m]$ in $\bar{\PP}\cup\QQ$ is $(2n+m)\alpha + n + \nfrac{3mn}{4}$, the Copeland$^\alpha$ score of $x_i$ and $\bar{x_i}$ for every $i\in[n]$ is at most $(2n+m)\alpha + n + \nfrac{3mn}{4}$ since every literal appears in at most two clauses and $\alpha\le \nfrac{1}{2}$, and the Copeland$^\alpha$ score of the candidates in \GG in $\bar{\PP}\cup\QQ$ is strictly less than $\nfrac{3mn}{4}$.
 
 In the reverse direction we suppose that the \PW instance $\II^\pr$ be a \YES instance. Then there exists an extension of the set of partial votes $\PP^\pr$ to a set of complete votes $\bar{\PP}$ such that $c$ is a co-winner in $\bar{\PP}\cup\QQ$. Let us call the extension of the partial votes $\ppp_{x_i}^{1\pr}, \ppp_{x_i}^{2\pr}, \ppp_{\bar{x}_i}^{1\pr}, \ppp_{\bar{x}_i}^{2\pr}$ in $\bar{\PP}$ $\bar{\ppp}_{x_i}^{1}, \bar{\ppp}_{x_i}^{2}, \bar{\ppp}_{\bar{x}_i}^{1}, \bar{\ppp}_{\bar{x}_i}^{2}$ and the extension of the partial votes $\qqq_j^\pr({\el_j^1}), \qqq_j^\pr({\el_j^2}), \qqq_j^\pr({\el_j^3})$ in $\bar{\PP}$ $\bar{\qqq}_j({\el_j^1}), \bar{\qqq}_j({\el_j^2}), \bar{\qqq}_j({\el_j^3})$. Now we notice that the Copeland$^\alpha$ score of $c$ in $\bar{\PP}\cup\QQ$ is $(2n+m)\alpha + n + \nfrac{3mn}{4}$ since the relative ordering of $c$ with respect to every other candidate is already fixed in $\PP^\pr\cup\QQ$. We observe that the Copeland$^\alpha$ score of $d_i$ for every $i\in[n]$ can increase by at most $1$ from $\PP \cup \QQ$ without defeating $c$. Hence it cannot be the case that $d_i$ is preferred over $x_i$ in both $\bar{\ppp}_{x_i}^{1}$ and $\bar{\ppp}_{x_i}^{2}$ and $d_i$ is preferred over $\bar{x}_i$ in both $\bar{\ppp}_{\bar{x}_i}^{1}$ and $\bar{\ppp}_{\bar{x}_i}^{2}$. We define $x_i^*$ to be $1$ if $d_i$ is preferred over $x_i$ in both $\bar{\ppp}_{x_i}^{1}$ and $\bar{\ppp}_{x_i}^{2}$ and $0$ otherwise. We claim that $\{x_i^*\}_{i\in[n]}$ is a satisfying assignment to all the clauses in \TT. Suppose not, then there exists a clause $c_i$ which is not satisfied by the assignment$\{x_i^*\}_{i\in[n]}$. Hence, for $c$ to co-win in $\bar{\PP}\cup\QQ$, the Copeland$^\alpha$ score of $c_j$ for every $j\in[m]$ must decrease by at least $(1-\alpha)$ from $\PP \cup \QQ$. Now let us consider the candidate $c_i$. Hence there must be a candidate $\el_i$ such that the literal $\el_i$ appear in the clause $c_i$ and the candidate $\el_i$ is preferred over the candidate $c_i$ in $\bar{\qqq}_i({\el_i})$. However, this increases the score of $\el_i$ by $\alpha$. Also, since the assignment $\{x_i^*\}_{i\in[n]}$ makes $\el_i$ false (by our assumption, the clause $c_i$ is not satisfied), the Copeland$^\alpha$ score of $\el_i$ in $\bar{\PP}\cup\QQ$ is strictly more than $(2n+m)\alpha + n + \nfrac{3mn}{4}$ since $\alpha>0$. This contradicts our assumption that $c$ co-wins in $\bar{\PP}\cup\QQ$. Hence $\{x_i^*\}_{i\in[n]}$ is a satisfying assignment of the clause in \TT and thus \II is a \YES instance.
 }
\end{proof}

Next we present \Cref{lem:copeland_hard_alpha_half_one} which resolves the complexity of the \PW problem for the Copeland$^\alpha$ voting rule for every $\alpha\in[\nfrac{1}{2},1)$ when every partial vote has at most one undetermined pair of candidates.

\begin{restatable}{lemma}{CopelandHalfOne}\shortversion{[$\star$]}
\label{lem:copeland_hard_alpha_half_one}
 The \PW problem is \NPC for the Copeland$^\alpha$ voting rule even if the number of undetermined pairs in every vote is at most $1$ for every $\alpha\in[\nfrac{1}{2},1)$.
\end{restatable}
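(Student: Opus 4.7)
The plan is to reduce from $(3,B2)$-SAT, mirroring the overall blueprint of the preceding lemma but dualizing the clause gadget so that the analysis extends to $\alpha \ge \nfrac{1}{2}$. The obstacle in re-using the earlier construction is that a true literal appearing in two clauses would gain $2\alpha$ Copeland$^\alpha$ points by tying with both clause-candidates, and for $\alpha > \nfrac{1}{2}$ this strictly exceeds $c$'s fixed score $s^\ast$. The fix is to calibrate the set $\QQ$ of background votes (via McGarvey's theorem, as in Lemma~\ref{lem:copeland_hard_alpha_zero_half}) so that the initial pairwise margin $\DD_{\PP\cup\QQ}(c_j,\ell_j^k)$ is $0$ rather than $2$. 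Then flipping the single undetermined pair $(c_j,\ell_j^k)$ in $\qqq_j^\pr(\ell_j^k)$ converts the pairwise tie into a pairwise win of $\ell_j^k$ over $c_j$, costing $c_j$ a score of $\alpha$ and crediting $\ell_j^k$ with $1-\alpha$. Since $2(1-\alpha)\le 1$ whenever $\alpha \ge \nfrac{1}{2}$, the maximum total gain of any literal from clause-flips now fits within budget.

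I would keep the candidate set $\{x_i,\bar{x}_i,d_i\}_{i\in[n]} \cup \{c_j\}_{j\in[m]} \cup \{c\} \cup \GG$ and the partial-vote structure of the earlier proof (two variable-votes per side with undetermined pair $(x_i,d_i)$ respectively $(\bar{x}_i,d_i)$; three clause-votes per clause with undetermined pair $(c_j,\ell_j^k)$), preserving the variable gadget's initial $\DD(x_i,d_i) = \DD(\bar{x}_i,d_i) = 2$ so that a full two-vote flip on one side converts the literal's pairwise win over $d_i$ into a loss (transferring one Copeland unit to $d_i$) while a single-vote ``partial flip'' converts the win into a tie (transferring $\alpha$). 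The initial Copeland$^\alpha$ scores, realized via McGarvey following an analogue of \Cref{tbl:cop_initial_alpha_zero_half}, are set so that $c$ sits at $s^\ast$, each $c_j$ at $s^\ast+\alpha$ (absorbable by exactly one clause-flip), each literal $x_i,\bar{x}_i$ at $s^\ast-(1-\alpha)$, each $d_i$ at $s^\ast-1$, and all garbage candidates comfortably below $s^\ast$.

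The forward direction is routine: given a satisfying assignment $\{x_i^\ast\}$, I fully flip both $(x_i,d_i)$-pairs on the true side of every variable, and for each clause $c_j$ pick one true literal $\ell$ and flip $(c_j,\ell^\star)$ in $\qqq_j^\pr(\ell^\star)$. A direct arithmetic check against the prescribed initial scores shows that every candidate ends at a score at most $s^\ast$. To do this cleanly one needs to route each clause to a true literal with each true literal chosen at most twice; that such a routing always exists for $(3,B2)$-SAT follows from a short Hall-theorem argument exploiting that each literal appears in at most two clauses, so in any set of $k$ clauses the (true) literal incidences already number at least $k$. The reverse direction is the main obstacle: given any extension making $c$ co-win, $d_i$'s budget of $1$ forces that at most one side of each variable gadget is flipped (indeed, for $\alpha > \nfrac{1}{2}$ a partial flip on each side would already cost $2\alpha > 1$), which unambiguously defines $x_i^\ast$; the excess $\alpha$ on each $c_j$ forces at least one clause-flip per clause; and a careful case analysis on the partial-versus-full variable-flip status of the flipping literal shows that each clause-flip must originate from a literal whose side is actually flipped, hence from a true literal, proving satisfiability.

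The hardest step is the reverse-direction case analysis, which must exploit the tight interaction between the literal budget $1-\alpha$, the $d_i$ budget of $1$, and the partial/full variable-flip dichotomy in order to rule out false-literal clause-flips; the numerical calibration of initial scores is then fine-tuned to make each of these inequalities strict at the critical configurations, exactly as in the bookkeeping behind \Cref{tbl:cop_initial_alpha_zero_half}.
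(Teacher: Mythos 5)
Your overall blueprint is the same as the paper's: keep the variable/clause gadgets and the single undetermined pair per vote, and recalibrate the McGarvey votes so that $\DD_{\PP\cup\QQ}(c_j,\ell_j^k)=0$, so that one clause-flip costs $c_j$ exactly $\alpha$ and credits the literal only $1-\alpha$, which is why $\alpha\ge\nfrac{1}{2}$ becomes the tractable regime. However, your score calibration introduces a fatal gap: you place each literal candidate at $s^\ast-(1-\alpha)$ instead of at $s^\ast$ (the paper puts every $x_i,\bar{x}_i$ exactly at $c$'s score). With your numbers, a literal whose variable gadget is left completely untouched has a free budget of $1-\alpha$, which is precisely the gain from one clause-flip; after absorbing one flip it ends at exactly $s^\ast$, which is harmless in the co-winner setting the paper works in. Consequently the clause-flips are no longer coupled to any assignment. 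Worse, the very Hall-type argument you invoke for the forward direction (each clause has three literals, each literal lies in exactly two clauses, so any $k$ clauses meet at least $\lceil 3k/2\rceil\ge k$ distinct literals) shows that one can always match every clause to a distinct literal. Flipping only those matched clause votes, and none of the variable votes, makes every clause candidate drop to $s^\ast$, leaves every $d_i$ at $s^\ast-1$, and raises each literal by at most $1-\alpha$, i.e.\ to at most $s^\ast$. Hence \emph{every} produced instance is a \YES instance of \PW, regardless of whether the \SAT instance is satisfiable, and no amount of case analysis in the reverse direction can rescue the reduction.

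The repair is exactly the paper's choice: set the literals' initial Copeland$^\alpha$ scores equal to $s^\ast$ (tied with $c$), keeping $c_j$ at $s^\ast+\alpha$ and $d_i$ at $s^\ast-1$. Then any gain of $1-\alpha$ from a clause-flip must be offset by flipping the literal's own variable votes (a partial flip buys $1-\alpha$, a full flip buys $1$, covering the at most two clause occurrences since $2(1-\alpha)\le 1$), and the budget of $1$ on $d_i$ forbids flips on both sides of a variable, so the flipped sides define a consistent assignment and each clause-flip must come from a literal on a flipped side, i.e.\ a true literal. Your remaining observations (the $d_i$ accounting, the forward-direction routing, the role of $\alpha\ge\nfrac{1}{2}$) are fine once this recalibration is made.
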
\longversion{

\begin{proof}
 The \PW problem for the Copeland$^\alpha$ voting rule is clearly in \NP. To prove \NP-hardness of \PW, we reduce \PW from \SAT. Let $\II$ be an instance of \SAT, over the variables $\VV = \{x_1, \ldots, x_n\}$ and with clauses $\TT = \{c_1, \ldots, c_m\}$. We construct an instance $\II^\pr$ of \PW from \II as follows. 
 \[\text{Set of candidates: } \CC = \{x_i, \bar{x}_i, d_i: i\in[n]\}\cup\{c_i:i\in[m]\}\cup\{c\}\cup\GG, \text{ where } \GG = \{g_1, \ldots, g_{mn}\} \]
 
 For every $i\in[n]$, let us consider the following votes $\ppp_{x_i}^1, \ppp_{x_i}^2, \ppp_{\bar{x}_i}^1, \ppp_{\bar{x}_i}^2$. 
 \[ \ppp_{x_i}^1, \ppp_{x_i}^2: x_i\suc d_i\suc \text{others} \]
 \[ \ppp_{\bar{x}_i}^1, \ppp_{\bar{x}_i}^2: \bar{x}_i\suc d_i\suc \text{others} \]
 
 Using $\ppp_{x_i}^1, \ppp_{x_i}^2, \ppp_{\bar{x}_i}^1, \ppp_{\bar{x}_i}^2$, we define the partial votes $\ppp_{x_i}^{1\pr}, \ppp_{x_i}^{2\pr}, \ppp_{\bar{x}_i}^{1\pr}, \ppp_{\bar{x}_i}^{2\pr}$ as follows. 
 \[ \ppp_{x_i}^{1\pr}, \ppp_{x_i}^{2\pr}: \ppp_{x_i}^1 \setminus \{(x_i, d_i)\}\] 
 \[ \ppp_{\bar{x}_i}^{1\pr}, \ppp_{\bar{x}_i}^{2\pr}: \ppp_{\bar{x}_i}^1 \setminus \{(\bar{x}_i, d_i)\}\]
 
 Let a clause $c_j$ involves the literals $\el_j^1, \el_j^2, \el_j^3$. For every $j\in[m]$, let us consider the following votes $\qqq_j({\el_j^1}), \qqq_j({\el_j^2}), \qqq_j({\el_j^3})$. 
 \[ \qqq_j({\el_j^k}): c_j\suc \el_j^k\suc \text{others}, \forall k\in[3] \]
 
 Using $\qqq_j({\el_j^1}), \qqq_j({\el_j^2}), \qqq_j({\el_j^3})$, we define the partial votes $\qqq_j^\pr({\el_j^1}), \qqq_j^\pr({\el_j^2}), \qqq_j^\pr({\el_j^3})$ as follows. 
 \[ \qqq_j^\pr({\el_j^k}): \qqq_j({\el_j^k})\setminus\{(c_j, \el_j^k)\}, \forall k\in[3] \]
 
 Let us define $\PP = \cup_{i\in[n]} \{\ppp_{x_i}^1, \ppp_{x_i}^2, \ppp_{\bar{x}_i}^1, \ppp_{\bar{x}_i}^2\} \cup_{j\in[m]} \{\qqq_j({\el_j^1}), \qqq_j({\el_j^2}), \qqq_j({\el_j^3})\}$ and $\PP^\pr = \cup_{i\in[n]} \{\ppp_{x_i}^{1\pr}, \ppp_{x_i}^{2\pr}, \ppp_{\bar{x}_i}^{1\pr}, \ppp_{\bar{x}_i}^{2\pr}\} \cup_{j\in[m]} \{\qqq_j^\pr({\el_j^1}), \qqq_j^\pr({\el_j^2}), \qqq_j^\pr({\el_j^3})\}$. There exists a set of complete votes \QQ of size polynomial in $n$ and $m$ with the following properties~\cite{mcgarvey1953theorem}. 
 \begin{itemize}
  \item Let $G_{\nfrac{3mn}{4}}\subset\GG$ such that $G_{\nfrac{3mn}{4}}=\nfrac{3mn}{4}$. Then we have $\forall i\in[n], \DD_{\PP\cup\QQ} (x_i, x_j) = \DD_{\PP\cup\QQ} (x_i, \bar{x}_k) = \DD_{\PP\cup\QQ} (x_i, c) = \DD_{\PP\cup\QQ} (x_i, c_{j^\pr}) = 0, \forall j\in[n]\setminus\{i\} \forall k\in[n], \forall j^\pr\in[m], \DD_{\PP\cup\QQ} (x_i, g) = 2, \DD_{\PP\cup\QQ} (x_i, d_k) = 2,  \DD_{\PP\cup\QQ} (x_i, g^\pr) = -2, \forall k\in[n] \forall g\in\GG_{\nfrac{3mn}{4}} \forall g^\pr\in\GG\setminus G_{\nfrac{3mn}{4}}$
  
  \item Let $G_{\nfrac{3mn}{4}}\subset\GG$ such that $G_{\nfrac{3mn}{4}}=\nfrac{3mn}{4}$. Then we have $\forall i\in[n], \DD_{\PP\cup\QQ} (\bar{x}_i, \bar{x}_j) = \DD_{\PP\cup\QQ} (\bar{x}_i, x_k) = \DD_{\PP\cup\QQ} (\bar{x}_i, c) = \DD_{\PP\cup\QQ} (\bar{x}_i, c_j) = 0, \forall j\in[n]\setminus\{i\} \forall k\in[n], \forall j\in[m], \DD_{\PP\cup\QQ} (\bar{x}_i, g) = 2, \DD_{\PP\cup\QQ} (\bar{x}_i, d_k) = 2,  \DD_{\PP\cup\QQ} (\bar{x}_i, g^\pr) = -2, \forall k\in[n] \forall g\in\GG_{\nfrac{3mn}{4}} \forall g^\pr\in\GG\setminus G_{\nfrac{3mn}{4}}$
  
  \item Let $G_{n+\nfrac{3mn}{4}}\subset\GG$ such that $|G_{n+\nfrac{3mn}{4}}|=n+\nfrac{3mn}{4}$. Then we have $\DD_{\PP\cup\QQ} (c, x_i) = \DD_{\PP\cup\QQ} (c, \bar{x}_i) = \DD_{\PP\cup\QQ} (c,c_j) = 0, \forall i\in[n] \forall j\in[m], \DD_{\PP\cup\QQ} (c, g) = \DD_{\PP\cup\QQ} (g^\pr, c) = 2, \forall g\in G_{n + \nfrac{3mn}{4}}, g^\pr\in\GG\setminus G_{n + \nfrac{3mn}{4}}$
  
  \item Let $G_{n+\nfrac{3mn}{4}}\subset\GG$ such that $|G_{n+\nfrac{3mn}{4}}|=n+\nfrac{3mn}{4}$, $\bar{g}\in\GG\setminus G_{n+\nfrac{3mn}{4}}$. Then we have $\forall i\in[m], \DD_{\PP\cup\QQ} (c_i, c_j) = \DD_{\PP\cup\QQ} (c_i, c) = \DD_{\PP\cup\QQ} (c_i, x_k) = \DD_{\PP\cup\QQ} (c_i, \bar{x}_k) = \DD_{\PP\cup\QQ} (c_i, \bar{g}) = 0, \forall j\in[m]\setminus\{i\} \forall k\in[n], \DD_{\PP\cup\QQ} (c_i, g) = \DD_{\PP\cup\QQ} (g^\prr, c_i) = \DD_{\PP\cup\QQ} (d_j, c_i) = 2, \forall g^\pr\in G_{n+\nfrac{3mn}{4}} \forall g^\prr\in\GG\setminus G_{n+\nfrac{3mn}{4}} \forall j\in[n]$
  
  \item Let $G_{2n+m}, G_{\nfrac{3mn}{4}-m+n-2}\subset\GG$ such that $|G_{2n+m}|=2n+m, G_{\nfrac{3mn}{4}-m+n-2}=\nfrac{3mn}{4}-m+n-2, G_{2n+m}\cap G_{\nfrac{3mn}{4}-m+n-2}=\emptyset$. Then we have $\forall i\in[n], \DD_{\PP\cup\QQ} (d_i, g) = 0, \forall g\in G_{2n+m}, \DD_{\PP\cup\QQ} (d_i, g^\pr) = \DD_{\PP\cup\QQ} (g^\prr, d_i) = 2, \forall g^\pr\in G_{\nfrac{3mn}{4}-m+n-2}, g^\prr\in \GG\setminus(G_{2n+m} \cup G_{\nfrac{3mn}{4}-m+n-2}) $
  
  \item $\forall i\in[mn], \DD_{\PP\cup\QQ} (g_j, g_i) = 2 \forall j\in\{i+k: k\in [\lfloor\nfrac{(mn-1)}{2}\rfloor]\}$
 \end{itemize}
 
 All the pairwise margins which are not specified above is $0$. We summarize the Copeland$^\alpha$ score of every candidate in \CC from $\PP \cup \QQ$ in \Cref{tbl:cop_initial_alpha_half_one}. We now define the instance $\II^\pr$ of \PW to be $(\CC, \PP^\pr \cup \QQ, c)$. Notice that the number of undetermined pairs of candidates in every vote in $\II^\pr$ is at most $1$. This finishes the description of the \PW instance. We claim that \II and $\II^\pr$ are equivalent.
 
 \begin{table}[!htbp]
  \centering
  \resizebox{\textwidth}{!}{  
  \begin{tabular}{|ccccc|}\hline
   Candidates & Copeland$^\alpha$ score & Winning against & Losing against & Tie with\\\hline\hline
   
   $c$ & \makecell{$(2n+m)\alpha$\\$ + n + \nfrac{3mn}{4}$} & $G^\pr\subset\GG, |G^\pr| = n + \nfrac{3mn}{4}$ & \makecell{$\GG\setminus G^\pr,|G^\pr|=n + \nfrac{3mn}{4}$ \\$ d_i, \forall i\in[n]$} & \makecell{$x_i, \bar{x}_i\forall i\in[n]$ \\ $c_j \forall j\in[m]$}\\
   
   $x_i, \forall i\in[n]$ & \makecell{$(2n+m)\alpha$\\$ + n + \nfrac{3mn}{4}$} & \makecell{$G^\prr\subset\GG, |G^\prr| = \nfrac{3mn}{4}$ \\$d_i \forall i\in[n]$} & $\GG\setminus (G^\pr\cup G^\prr)$ & \makecell{$c, c_j \forall j\in[m]$ \\$x_j,   
   \forall j\in[n]\setminus\{i\}$\\$\bar{x}_j \forall j\in[n]$} \\
   
   $\bar{x}_i, \forall i\in[n]$ & \makecell{$(2n+m)\alpha$\\$ + n + \nfrac{3mn}{4}$} & \makecell{$G^\prr\subset\GG, |G^\prr| = \nfrac{3mn}{4}$ \\$d_i \forall i\in[n]$} & $\GG\setminus (G^\pr\cup G^\prr)$ & \makecell{$c, c_j \forall j\in[m]$ \\$\bar{x}_j, \forall j\in[n]\setminus\{i\}$\\$x_j \forall j\in[n]$} \\
   
   $c_j, \forall j\in[m]$ & \makecell{$(2n+m+1)\alpha $\\$+ n + \nfrac{3mn}{4}$} & \makecell{$G^\pr\subset\GG, |G^\pr| = \nfrac{3mn}{4}+n$} & \makecell{$\GG\setminus (G^\pr\cup G^\prr)$ \\$ d_i, \forall i\in[n]$} & \makecell{$c, x_i, \bar{x}_i\forall i\in[n]$\\$c_j \forall j\in[m]\setminus\{i\}$\\$G^\prr\subset\GG, |G^\prr| = 1$} \\
   
   $d_i, i\in[n]$& \makecell{$(2n+m)\alpha$\\$ + n + \nfrac{3mn}{4}-1$} &\makecell{$c, c_j, \forall j\in[m]$\\$G^\prr\subset\GG, |G^\prr| = \nfrac{3mn}{4} - m + n -2$} & \makecell{$x_i, \bar{x}_i\forall i\in[n]$\\$\GG\setminus (G^\pr\cup G^\prr)$} & $G^\pr\subset\GG, |G^\pr| = 2n+m$ \\
   
   $g_i, \forall i\in[mn]$ & $ < \nfrac{3mn}{4}$ &  & $\forall j\in\{i+k: k\in [\lfloor\nfrac{(mn-1)}{2}\rfloor]$ &  \\\hline
  \end{tabular}}
  \caption{Summary of initial Copeland$^\alpha$ scores of the candidates}\label{tbl:cop_initial_alpha_half_one}
 \end{table}
 
 In the forward direction, suppose that \II be a \YES instance of \SAT. Then there exists an assignment $x_i^*$ of variables $x_i$ for all $i\in[n]$ to $0$ or $1$ that satisfies all the clauses $c_j, j\in[m]$. For every $i\in[n]$, we extend the partial votes $\ppp_{x_i}^{1\pr}, \ppp_{x_i}^{2\pr}, \ppp_{\bar{x}_i}^{1\pr}, \ppp_{\bar{x}_i}^{2\pr}$ to the complete votes $\bar{\ppp}_{x_i}^{1}, \bar{\ppp}_{x_i}^{2}, \bar{\ppp}_{\bar{x}_i}^{1}, \bar{\ppp}_{\bar{x}_i}^{2}$ as follows. 
 $$
 \bar{\ppp}_{x_i}^{1}, \bar{\ppp}_{x_i}^{2} = 
 \begin{cases}
  x_i\suc d_i\suc \text{others} & x_i^*=0\\
  d_i\suc x_i\suc \text{others} & x_i^*=1
 \end{cases}
 $$ 
 $$
 \bar{\ppp}_{\bar{x}_i}^{1}, \bar{\ppp}_{\bar{x}_i}^{2} = 
 \begin{cases}
  \bar{x}_i\suc d_i\suc \text{others} & x_i^*=1\\
  d_i\suc \bar{x}_i\suc \text{others} & x_i^*=0
 \end{cases}
 $$
 
 Let $c_j$ be a clause involving literals $\el_j^1, \el_j^2, \el_j^3$ and let us assume, without loss of generality, that the assignment ${x_i^*}_{i\in[n]}$ makes the literal $\el_j^3$ $1$. For every $j\in[m]$, we extend the partial votes $\qqq_j^\pr({\el_j^1}), \qqq_j^\pr({\el_j^2}), \qqq_j^\pr({\el_j^3})$ to the complete votes $\bar{\qqq}_j({\el_j^1}), \bar{\qqq}_j({\el_j^2}), \bar{\qqq}_j({\el_j^3})$ as follows. 
 \[ \bar{\qqq}_j({\el_j^3}) = \el_j^3 \suc c_j\suc \text{others}, \bar{\qqq}_j({\el_j^k}) = c_j\suc \el_j^k\suc \text{others}, \forall k\in[2] \]
 
 We consider the extension of $\PP^\pr$ to $\bar{\PP} = \cup_{i\in[n]} \{\bar{\ppp}_{x_i}^1, \bar{\ppp}_{x_i}^2, \bar{\ppp}_{\bar{x}_i}^1, \bar{\ppp}_{\bar{x}_i}^2\} \cup_{j\in[m]} \{\bar{\qqq}_j({\el_j^1}), \bar{\qqq}_j({\el_j^2}), \bar{\qqq}_j({\el_j^3})\}$. We observe that $c$ is a co-winner in the profile $\bar{\PP}\cup\QQ$ since the Copeland$^\alpha$ score of $c$, $d_i$ for every $i\in[n]$, and $c_j$ for every $j\in[m]$ in $\bar{\PP}\cup\QQ$ is $(2n+m)\alpha + n + \nfrac{3mn}{4}$, the Copeland$^\alpha$ score of $x_i, \bar{x_i}$ for every $i\in[n]$ is at most $(2n+m)\alpha + n + \nfrac{3mn}{4}$ since every literal appears in at most two clauses and $1-\alpha\le \nfrac{1}{2}$, and the Copeland$^\alpha$ score of the candidates in \GG in $\bar{\PP}\cup\QQ$ is strictly less than $\nfrac{3mn}{4}$.
 
 In the reverse direction suppose the \PW instance $\II^\pr$ be a \YES instance. Then there exists an extension of the set of partial votes $\PP^\pr$ to a set of complete votes $\bar{\PP}$ such that $c$ is a co-winner in $\bar{\PP}\cup\QQ$. Let us call the extension of the partial votes $\ppp_{x_i}^{1\pr}, \ppp_{x_i}^{2\pr}, \ppp_{\bar{x}_i}^{1\pr}, \ppp_{\bar{x}_i}^{2\pr}$ in $\bar{\PP}$ $\bar{\ppp}_{x_i}^{1}, \bar{\ppp}_{x_i}^{2}, \bar{\ppp}_{\bar{x}_i}^{1}, \bar{\ppp}_{\bar{x}_i}^{2}$ and the extension of the partial votes $\qqq_j^\pr({\el_j^1}), \qqq_j^\pr({\el_j^2}), \qqq_j^\pr({\el_j^3})$ in $\bar{\PP}$ $\bar{\qqq}_j({\el_j^1}), \bar{\qqq}_j({\el_j^2}), \bar{\qqq}_j({\el_j^3})$. Now we notice that the Copeland$^\alpha$ score of $c$ in $\bar{\PP}\cup\QQ$ is $(2n+m)\alpha + n + \nfrac{3mn}{4}$ since the relative ordering of $c$ with respect to every other candidate is already fixed in $\PP^\pr\cup\QQ$. We observe that the Copeland$^\alpha$ score of $d_i$ for every $i\in[n]$ can increase by at most $1$ from $\PP \cup \QQ$. Hence it cannot be the case that $d_i$ is preferred over $x_i$ in both $\bar{\ppp}_{x_i}^{1}$ and $\bar{\ppp}_{x_i}^{2}$ and $d_i$ is preferred over $\bar{x}_i$ in both $\bar{\ppp}_{\bar{x}_i}^{1}$ and $\bar{\ppp}_{\bar{x}_i}^{2}$. We define $x_i^*$ to be $1$ if $d_i$ is preferred over $x_i$ in both $\bar{\ppp}_{x_i}^{1}$ and $\bar{\ppp}_{x_i}^{2}$ and $0$ otherwise. We claim that $\{x_i^*\}_{i\in[n]}$ is a satisfying assignment to all the clauses in \TT. Suppose not, then there exists a clause $c_i$ which is not satisfied by the assignment$\{x_i^*\}_{i\in[n]}$. The Copeland$^\alpha$ score of $c_j$ for every $j\in[m]$ in $\bar{\PP}\cup\QQ$ is $(2n+m+1)\alpha + n + \nfrac{3mn}{4}$. Hence, for $c$ to co-win in $\bar{\PP}\cup\QQ$, the Copeland$^\alpha$ score of $c_j$ for every $j\in[m]$ must decrease by at least $\alpha$ from $\PP \cup \QQ$. Now let us consider the candidate $c_i$. There must be a candidate $\el_i$ such that the literal $\el_i$ appear in the clause $c_i$ and $\el_i$ is preferred over the candidate $c_i$ in $\bar{\qqq}_i({\el_i})$. However, this increases the score of $\el_i$ by $\alpha$. Also, since the assignment $\{x_i^*\}_{i\in[n]}$ makes $\el_i$ false (since by assumption, the clause $c_i$ is not satisfied), the Copeland$^\alpha$ score of $\el_i$ in $\bar{\PP}\cup\QQ$ is strictly more than $(2n+m)\alpha + n + \nfrac{3mn}{4}$ since $\alpha<0$. This contradicts the assumption that $c$ co-wins in $\bar{\PP}\cup\QQ$. Hence $\{x_i^*\}_{i\in[n]}$ is a satisfying assignment of the clause in \TT and thus \II is a \YES instance.
\end{proof}
}

We get the following result for the Copeland$^\alpha$ voting rule from \Cref{lem:copeland_hard_alpha_zero_half,lem:copeland_hard_alpha_half_one}.

\begin{theorem}\label{thm:copeland_hard_one}
 The \PW problem is \NPC for the Copeland$^\alpha$ voting rule even if the number of undetermined pairs of candidates in every vote is at most $1$ for every $\alpha\in(0,1)$.
\end{theorem}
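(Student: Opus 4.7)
The plan is to observe that the theorem is an immediate corollary of the two preceding lemmas, whose ranges of $\alpha$ together cover all of $(0,1)$. \Cref{lem:copeland_hard_alpha_zero_half} establishes \NPC hardness of the \PW problem under the Copeland$^\alpha$ rule with at most one undetermined pair per vote for every $\alpha\in(0,\nfrac{1}{2}]$, and \Cref{lem:copeland_hard_alpha_half_one} does the same for $\alpha\in[\nfrac{1}{2},1)$. Thus for any $\alpha\in(0,1)$, at least one of the two lemmas applies, and the common endpoint $\alpha=\nfrac{1}{2}$ is covered by either.

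First I would fix an arbitrary $\alpha\in(0,1)$ and invoke the appropriate lemma to import the \NP-hardness conclusion. Membership of \PW in \NP for Copeland$^\alpha$ is routine: given a proposed completion of every partial vote, one can compute the Copeland$^\alpha$ score of every candidate in polynomial time and check whether the distinguished candidate is a co-winner. Therefore the theorem follows directly without any new reduction.

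The substantive difficulty was already handled in the two lemmas, where two separate reductions from \SAT had to be engineered. The reason for splitting the argument is that the Copeland$^\alpha$ rule rewards a win with $1$ and a tie with $\alpha$, so when a literal candidate is pushed above a clause candidate in a clause gadget, the literal's score goes up by $1-\alpha$ (tie becomes win) while the clause candidate's score goes down by $\alpha$ (tie becomes loss); these two quantities behave asymmetrically around $\alpha=\nfrac{1}{2}$. The first lemma leverages $2\alpha\le 1$, exploiting that a literal occurring in at most two clauses can absorb a cumulative gain of at most $2\alpha$ without overtaking the distinguished candidate; the second lemma uses the dual inequality $2(1-\alpha)\le 1$ in the symmetric manner, rearranging the tie/win structure of the helper profile \QQ accordingly. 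The main obstacle, therefore, was designing these two separate pairwise-margin profiles — at the level of the present theorem, no additional work is needed beyond assembling the two ranges.
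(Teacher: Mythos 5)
Your proposal is correct and matches the paper exactly: the paper also derives this theorem directly as the union of Lemma~\ref{lem:copeland_hard_alpha_zero_half} (covering $\alpha\in(0,\nfrac{1}{2}]$) and Lemma~\ref{lem:copeland_hard_alpha_half_one} (covering $\alpha\in[\nfrac{1}{2},1)$), with no further argument needed. (Your informal aside about which inequality each lemma exploits swaps the roles of the win-to-tie and tie-to-loss transitions between the two reductions, but this does not affect the validity of the deduction.)
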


\subsection{Maximin and Bucklin Voting Rules}

%

To prove our hardness result for the maximin voting rule, we reduce the \PW problem from the $d$--\MIS problem which is defined as below. $d$--\MIS is known to be \NPC (for example, see this \cite{DBLP:books/sp/CyganFKLMPPS15}). We denote arbitrary instance of $d$--\MIS by $\left(\VV = \cup_{i=1}^k \VV_k, \EE \right)$.

\begin{definition}[$d$--\MIS]
 Given a $d$-regular graph $\GG = (\VV, \EE)$, an integer $k$, and a partition of the set of vertices \VV into $k$ independent sets $\VV_1, \ldots, \VV_k$, that is $\VV = \cup_{i\in[k]}\VV_i$ and $\VV_i$ is an independent set for every $i\in[k]$, does there exists an independent set $\SS\subset \VV$ in \GG such that $|\SS\cap\VV_i|=1$ for every $i\in[k]$.
\end{definition}

Now we prove our hardness result for the \PW problem for the maximin voting rule in \Cref{thm:maximin_hard}.

\begin{theorem}\label{thm:maximin_hard}
 The \PW problem is \NPC for the maximin voting rule even if the number of undetermined pairs of candidates in every vote is at most $2$.
\end{theorem}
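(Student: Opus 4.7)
The plan is to reduce from $d$--\MIS, exploiting the $d$-regularity of the graph to make the arithmetic of pairwise margins balance uniformly across vertex candidates. Given an instance $(\VV = \cup_{i=1}^{k} \VV_i, \EE)$ of $d$--\MIS, we build a \PW instance whose candidate set consists of a target $c$, one candidate per vertex $v \in \VV$, a ``color'' candidate $w_i$ per class $\VV_i$, an ``edge'' candidate $a_e$ per $e \in \EE$, and a pool of dummy candidates used to calibrate margins. Using McGarvey's theorem~\cite{mcgarvey1953theorem}, we first build a complete-vote profile $\QQ$ that sets the baseline pairwise margins so that $c$'s maximin score equals a fixed threshold $\mu$; each vertex candidate $v \in \VV_i$ has a single ``critical'' margin $D_\QQ(v, w_i)$ strictly above $\mu$ with all other margins comfortably larger; each edge candidate $a_e$ for $e = (u,v)$ has exactly two critical margins, with $u$ and with $v$, both above $\mu$; and dummies have maximin far below $\mu$. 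The $d$-regularity is what allows a single baseline template to fit all vertex candidates at once.

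On top of this baseline we add two kinds of partial votes, each with at most two undetermined pairs. First, for each color class $\VV_i$, a family of selection votes whose joint completions encode ``pick exactly one vertex from $\VV_i$'': the chosen $v^\star$ has its critical margin $D(v^\star, w_i)$ pulled down to $\mu$, while the unpicked $v \in \VV_i$ keep theirs above $\mu$ unless rescued by an edge gadget. Second, for each edge $e = (u, v)$ we introduce a ``chain''-shaped vote, for instance the partial order $u \suc a_e \suc v$ (embedded inside a longer fixed ranking) with the two pairs $(u, a_e)$ and $(a_e, v)$ undetermined. This admits exactly three linear extensions, which realise: (A) reduce $D(a_e, u)$ while simultaneously nudging $D(u, w_{i_u})$ upward so that $u$ can no longer be validly ``selected'' by its color gadget; (B) the symmetric operation for $v$; or (C) leave both critical margins of $a_e$ above $\mu$, which must therefore be forbidden in any winning completion.

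For the forward direction, given a multicolored independent set $\SS$, we select $v^\star := \SS \cap \VV_i$ in every color class and, for each edge $e = (u,v) \in \EE$, choose completion (A) or (B) so that the endpoint being un-selected is the one not in $\SS$ (at least one exists since $\SS$ is independent). Summing the margin shifts, every non-$c$ candidate's maximin drops to at most $\mu$, so $c$ becomes a co-winner. Conversely, from any completion that makes $c$ a co-winner, we read off $\SS$ as the set of vertices whose critical margins were actually pulled to $\mu$ by the color-class gadgets; the edge gadgets then guarantee $|\SS \cap \{u, v\}| \le 1$ for every $e = (u, v)$, because otherwise neither (A) nor (B) would be consistent with both endpoints remaining selected, and (C) is ruled out by $a_e$'s maximin constraint.

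The main obstacle is designing the edge gadget within only two undetermined pairs so that its three extensions induce \emph{precisely} the margin shifts described, and then selecting the baseline offsets so that the color-class and edge gadgets compose additively without any spurious reduction that would decouple the two constraints. The delicate accounting is in the $d$-regular structure: each vertex candidate receives exactly $d$ potential pushes from its incident edge gadgets, which we balance against a single uniform baseline so that the ``conservation of margin'' works out identically for every vertex, reducing the combinatorial condition on completions to the multicolored independent set condition on $\GG$.
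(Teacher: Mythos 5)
Your high-level strategy --- reduce from $d$--\MIS{}, use a per-class gadget to force the selection of one vertex per color class and a per-edge gadget with two undetermined pairs to force one endpoint of each edge to be ``spent'', then balance the margins using $d$-regularity --- is exactly the strategy of the paper's proof. However, the concrete gadgets you describe do not implement it. The first gap: your edge vote has undetermined pairs $(u, a_e)$ and $(a_e, v)$ only, so no completion of that vote can change any margin involving the color candidate $w_{i_u}$; the claimed side effect of option (A), ``nudging $D(u, w_{i_u})$ upward'', is therefore unattainable. This coupling between rescuing the edge candidate and spending the endpoint's budget against its color gadget is the crux of the whole reduction. The paper obtains it by giving the edge vote the fixed suffix $e \suc g_i^\pr \suc u$ with undetermined pairs $(e,u)$ and $(g_i^\pr, u)$: promoting $u$ above $e$ transitively forces $u$ above the guard $g_i^\pr$, the very candidate against which $u$'s selection budget is measured.

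The second gap is that your margin accounting does not close. You set every vertex candidate's critical margin $D_\QQ(v, w_i)$ strictly above $\mu$, so every unpicked vertex beats $c$; yet the only effect your edge gadget has on vertex candidates pushes their margins further up, so nothing can ever bring the unpicked vertices down to $\mu$ and $c$ could never be a co-winner. The paper inverts the roles: vertex candidates start far \emph{below} the threshold (at $-(\lambda+2d)$ against the guard $g_i^\pr$), and the class-level threat is carried by a separate guard $g_i$ whose deficit of exactly $2d$ can only be closed by a single vertex $u\in\VV_i$ being promoted in all $d$ of its votes --- partial promotions of different vertices do not combine, because maximin requires one single opponent to be pushed below the threshold. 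That promotion simultaneously brings $u$ to exactly the threshold against $g_i^\pr$, leaving it no slack to rescue any incident edge, which is the independence condition. Relatedly, your ``family of selection votes whose joint completions encode pick exactly one vertex from $\VV_i$'' is asserted rather than constructed; with at most two undetermined pairs per vote this is precisely where the difficulty lies, and ``exactly one'' is not enforceable by local completions --- the paper enforces (and only needs) ``at least one fully promoted vertex per class, and no fully promoted vertex rescues an edge.''
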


\begin{proof}
 The \PW problem for the maximin voting rule is clearly in \NP. To prove \NP-hardness of \PW, we reduce \PW from $d$--\MIS. Let $\II = \left(\VV = \cup_{i=1}^k \VV_k, \EE \right)$ be an arbitrary instance of $d$--\MIS. We construct an instance $\II^\pr$ of \PW from \II as follows. 
 \[\text{Set of candidates: } \CC = \VV\cup\EE\cup\{c\}\cup\{g_i, g_i^\pr: i\in[k]\} \]
 
 For every $u\in\VV_i$ and $\el\in[d]$, let us consider the following vote $\ppp_u$. 
 \[ \ppp_u^\el = \overrightarrow{(\CC\setminus\{u, g_i, g_i^\pr\})_u} \suc g_i \suc g_i^\pr \suc u, \text{where } \overrightarrow{(\CC\setminus\{u, g_i, g_i^\pr\})_u} \text{ is any fixed ordering of } \CC\setminus\{u, g_i, g_i^\pr\} \]
 
 Using $\ppp_u^\el$, we define a partial vote $\ppp_u^{\pr\el}$ as follows. 
 \[ \ppp_u^{\pr\el} = \ppp_u^\el \setminus \{(g_i, u), (g_i^\pr, u)\} \]
 
 For every edge $e=(u_i, u_j)$ where $u_i\in\VV_i$ and $u_j\in\VV_j$, let us consider the following votes $\ppp_{e,u_i}$ and $\ppp_{e,u_j}$. 
 \[ \ppp_{e,u_i} = \overrightarrow{(\CC\setminus\{u_i, g_i^\pr, e\})} \suc e \suc g_i^\pr \suc u_i~,~ \ppp_{e,u_j} = \overrightarrow{(\CC\setminus\{u_j, g_j^\pr, e\})} \suc e \suc g_j^\pr \suc u_j \]
 
 Using $\ppp_{e,u_i}$ and $\ppp_{e,u_j}$, we define the partial votes $\ppp_{e,u_i}^\pr$ and $\ppp_{e,u_j}^\pr$ as follows. 
 \[ \ppp_{e,u_i}^\pr = \ppp_{e,u_i} \setminus \{(e,u_i), (g_i^\pr, u_i)\}~,~ \ppp_{e,u_j}^\pr = \ppp_{e,u_j} \setminus \{(e,u_j), (g_j^\pr, u_j)\} \]
 
 Let us call $\ppp_e = \{\ppp_{e,u_i}, \ppp_{e,u_j}\}$ and $\ppp_e^\pr = \{\ppp_{e,u_i}^\pr, \ppp_{e,u_j}^\pr\}$. Let us define $\PP = \cup_{u\in\VV,\el\in[d]} \ppp_u^\el \cup_{e\in\EE} \ppp_e$ and $\PP^\pr = \cup_{u\in\VV,\el\in[d]} \ppp_u^{\pr\el} \cup_{e\in\EE} \ppp_e^\pr$. There exists a set of complete votes \QQ of size polynomial in $|\VV|$ and $|\EE|$ with the pairwise margins as in \Cref{tbl:maximin_extra_votes}~\cite{mcgarvey1953theorem}. Let $\lambda > 3d$ be any positive even integer.
%
%
 \begin{table}[!htbp]
 \centering
  \begin{tabular}{|cc|}\hline
   $\forall e\in\EE, \DD_{\PP\cup\QQ}(e,c) = \lambda$ & $\forall i\in[k], \forall u\in\VV_i, \DD_{\PP\cup\QQ} (u, g_i) = \lambda - 2d $\\
   $\forall i\in[k], \forall u\in\VV_i, \DD_{\PP\cup\QQ} (g_i^\pr, u) = \lambda + 2d$ & $\forall i\in[k], e\in\EE, \DD_{\PP\cup\QQ} (e, g_i^\pr) = \lambda$ \\
   \multicolumn{2}{|c|}{$\forall e=(u_i, u_j)\in\EE, \DD_{\PP\cup\QQ} (u_i, e) = \DD_{\PP\cup\QQ} (u_j, e) = \lambda-2$}\\\hline
  \end{tabular}
  \caption{Pairwise margins of candidates from $\PP\cup\QQ$.}\label{tbl:maximin_extra_votes}
 \end{table}

 For every pair of candidates $(c_i, c_j)\in\CC\times\CC$ whose pairwise margin is not defined above, we define $\DD_{\PP\cup\QQ}(c_i, c_j) = 0$. We summarize the maximin score of every candidate in $\PP \cup \QQ$ in \Cref{tbl:maximin_initial}. We now define the instance $\II^\pr$ of \PW to be $(\CC, \PP^\pr \cup \QQ, c)$. Notice that the number of undetermined pairs of candidates in every vote in $\II^\pr$ is at most $2$. This finishes the description of the \PW instance. We claim that \II and $\II^\pr$ are equivalent. 
 \begin{table}[!htbp]
  \centering
  \begin{tabular}{|ccc||ccc|}\hline
   Candidates & maximin score & Worst against & Candidates & maximin score & Worst against\\\hline\hline
   $c$ & $-\lambda$ & $e\in\EE$ & $(u_i, u_j)\in\EE$ & $-(\lambda-2)$ & $u_i, u_j$\\
   $u\in\VV_i$ & $-(\lambda+2d)$ & $g_i^\pr$ & $g_i$ & $-(\lambda-2d)$ & $u\in\VV_i$\\
    $g_i^\pr$ & $-\lambda$ & $e\in\EE$&&&\\\hline
  \end{tabular}
  \caption{Summary of initial Copeland scores of the candidates}\label{tbl:maximin_initial}
 \end{table}
 
 In the forward direction, suppose that \II be a \YES instance of $d$--\MIS. Then there exists $u_i\in\VV_i$ for every $i\in[k]$ such that $\UU=\{u_i: i\in[k]\}$ forms an independent set. We extend the partial vote $\ppp_u^{\pr\el}$ for every $u\in\VV_i, i\in[k], \el\in[d]$ to $\bar{\ppp}_u^\el$ as follows. 
 $$
 \bar{\ppp}_u^\el = 
 \begin{cases}
  \overrightarrow{(\CC\setminus\{u, g_i, g_i^\pr\})_u} \suc u \suc g_i \suc g_i^\pr & u\in\UU\\
  \overrightarrow{(\CC\setminus\{u, g_i, g_i^\pr\})_u} \suc g_i \suc g_i^\pr \suc u & u\notin\UU
 \end{cases}
 $$
 
 For every $e=(u_i, u_j)$, we extend $\ppp_{e,u_i}^\pr$ and $\ppp_{e,u_j}^\pr$ to $\bar{\ppp}_{e,u_i}$ and $\bar{\ppp}_{e,u_j}$. Since \UU is an independent set, at least one of $u_i$ and $u_j$ does not belong to \UU. Without loss of generality, let us assume $u_i\notin \UU$. 
 \[ \bar{\ppp}_{e,u_i} = \overrightarrow{(\CC\setminus\{u_i, g_i^\pr, e\})} \suc u_i \suc e \suc g_i^\pr~,~ \bar{\ppp}_{e,u_j} = \overrightarrow{(\CC\setminus\{u_j, g_j^\pr, e\})} \suc e \suc g_j^\pr \suc u_j\]
 
 Let us call $\bar{\ppp}_e = \{\bar{\ppp}_{e,u_i}, \bar{\ppp}_{e,u_j}\}$. We consider the extension of \PP to $\bar{\PP} = \cup_{u\in\VV,\el\in[d]} \bar{\ppp}_u^\el \cup_{e\in\EE} \bar{\ppp}_e$.  We claim that $c$ is a co-winner in the profile $\bar{\PP}\cup\QQ$ since the maximin score of $c$, $g_i, g_i^\pr$ for every $i\in[k]$, $u\in\VV$, and $e\in\EE$ in $\bar{\PP}\cup\QQ$ is $-\lambda$.
 
 In the reverse direction suppose the \PW instance $\II^\pr$ be a \YES instance. Then there exists an extension of the set of partial votes $\PP^\pr$ to a set of complete votes $\bar{\PP}$ such that $c$ is a co-winner in $\bar{\PP}\cup\QQ$. Let us call the extension of $\ppp_u^{\pr\el}$ in $\bar{\PP}$ $\bar{\ppp}_u^\el$, $\ppp_{e,u_i}^\pr$ and $\ppp_{e,u_j}^\pr$ in $\bar{\PP}$ $\bar{\ppp}_{e,u_i}$ and $\bar{\ppp}_{e,u_j}$ respectively. First we notice that the maximin score of $c$ in $\bar{\PP}\cup\QQ$ is $-\lambda$ since the relative ordering of $c$ with respect to every other candidate is already fixed in $\PP^\pr\cup\QQ$. Now we observe that, in $\PP\cup\QQ$, the maximin score of $g_i$ for every $i\in[k]$ is $-(\lambda-2d)$. Hence, for $c$ to co-win, there must exists at least one $u_i^*\in\VV_i$ for every $i\in[k]$ such that $u_i^*\suc g_i\suc g_i^\pr$ in $\bar{\ppp}_{u_i^*}^\el$ for every $\el\in[d]$. We claim that $\UU = \{u_i^* : i\in[k]\}$ is an independent set in \II. If not, then suppose there exists an edge $e$ between $u_i^*$ and $u_j^*$ for some $i,j\in[k]$. Now notice that, for $c$ to co-win either $u_i^*\suc e\suc g_i^\pr$ in $\bar{\ppp}_{e,u_i^*}$ or $u_j^*\suc e\suc g_j^\pr$ in $\bar{\ppp}_{e,u_j^*}$. However, this makes the maximin score of either $u_i^*$ or $u_j^*$ strictly more than $-\lambda$ contradicting our assumption that $c$ co-wins the election. Hence, \UU forms an independent set in \II.
\end{proof}

We next prove in \Cref{thm:maximin_poly_one} that the maximum number of undetermined pairs of candidates in \Cref{thm:maximin_hard} is tight.

\begin{restatable}{theorem}{MaximinPoly}\shortversion{[$\star$]}
\label{thm:maximin_poly_one}
 The \PW problem is in \Pb for the maximin voting rule if the number of undetermined pairs of candidates in every vote is at most $1$.
\end{restatable}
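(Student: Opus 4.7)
The plan is to reduce the problem to a maximum flow computation, in the spirit of the polynomial-time arguments used for scoring rules and Copeland. Let $(\CC, \PP, c)$ be the input. The first step is to place $c$ as high as possible in every partial vote, i.e., to resolve every undetermined pair of the form $\{c, y\}$ in $c$'s favour, leaving pairs that do not involve $c$ untouched. This is without loss of generality by a standard exchange argument: pushing $c$ up only weakly increases every $D_\EE(c, y)$ and hence $c$'s maximin score, while weakly decreasing every $D_\EE(y, c)$ and hence the maximin score of every other candidate. After this step the position of $c$ is fully fixed in every vote, so $D_\EE(c, y)$ is known for every $y \neq c$, and so is the target value $s(c) := \min_{y \neq c} D_\EE(c, y)$. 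The remaining task is to orient the leftover undetermined pairs so that every $x \neq c$ has some \emph{witness} $y$ with $D_\EE(x, y) \leq s(c)$.

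The crucial structural observation is that because each partial vote has at most one undetermined pair (and after the first step, none of these involve $c$), the orientation of the undetermined pair $\{u, v\}$ in a vote changes only $D_\EE(u, v) = -D_\EE(v, u)$. Hence the orientation decisions for different unordered pairs $\{u, v\} \subseteq \CC \setminus \{c\}$ are independent, and for each such pair the achievable values of $D_\EE(u, v)$ form an arithmetic progression $\{D^-_{uv}, D^-_{uv} + 2, \ldots, D^+_{uv}\}$, where $D^-_{uv}$ (resp.\ $D^+_{uv}$) is obtained by orienting every undetermined $\{u,v\}$-vote against (resp.\ in favour of) $u$. Consequently $y$ can be made a witness for $x$ iff $D^-_{xy} \leq s(c)$, and $y$ can serve as a witness for $x$ while $x$ simultaneously serves as a witness for $y$ iff both $D^-_{xy} \leq s(c)$ and $D^+_{xy} \geq -s(c)$.

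I would then build the following flow network with a source $\sigma$ and a sink $\tau$. The vertex set contains, in addition, a node $u_x$ for each $x \in \CC \setminus \{c\}$ and a node $p_{\{x, y\}}$ for each unordered pair $\{x, y\} \subseteq \CC \setminus \{c\}$. Add an edge $\sigma \to u_x$ of capacity $1$ for every $x$; an edge $u_x \to \tau$ of capacity $1$ whenever $D_\EE(x, c) \leq s(c)$, representing the option of using $c$ itself as the witness; an edge $u_x \to p_{\{x, y\}}$ of capacity $1$ whenever $D^-_{xy} \leq s(c)$; and an edge $p_{\{x, y\}} \to \tau$ of capacity $2$ if both $D^-_{xy} \leq s(c)$ and $D^+_{xy} \geq -s(c)$, and capacity $1$ otherwise. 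I claim the instance is a YES-instance iff the maximum $\sigma$-$\tau$ flow has value $|\CC| - 1$. Forward: any valid extension assigns each $x \neq c$ a witness $f(x)$, and routing $\sigma \to u_x \to \tau$ when $f(x) = c$ and $\sigma \to u_x \to p_{\{x, f(x)\}} \to \tau$ otherwise yields such a flow; the capacity-$2$ edges suffice because whenever both $f(x) = y$ and $f(y) = x$, the value $D_\EE(x, y)$ in the extension lies in $[-s(c), s(c)]$. Reverse: an integral max flow of value $|\CC|-1$ prescribes a witness for each $x$, and orienting each pair carrying one unit of flow to minimize $D_\EE$ on the appropriate side, each pair carrying two units of flow to some value in $[-s(c), s(c)] \cap [D^-_{xy}, D^+_{xy}]$, and all remaining pairs arbitrarily yields the required extension.

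The one subtlety I expect is showing that whenever the capacity-$2$ condition $D^-_{xy} \leq s(c)$ and $D^+_{xy} \geq -s(c)$ holds, the set $[-s(c), s(c)] \cap [D^-_{xy}, D^+_{xy}]$ actually contains an integer of the correct parity (since the achievable values of $D_\EE(x,y)$ step by $2$). This is resolved by observing that every pairwise margin and also $s(c)$ has parity equal to $|\VV| \bmod 2$, so the endpoints of the window $[-s(c), s(c)]$ match the parity of the progression; hence whenever the intersection is non-empty it contains a valid integer. The network has $O(|\CC|^2)$ vertices and edges, so its maximum flow can be computed in polynomial time, giving the claimed polynomial-time algorithm.
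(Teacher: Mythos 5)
Your overall strategy --- push $c$ as high as possible, note that the leftover undetermined pairs decouple because each vote contains at most one of them, and decide via a maximum-flow computation whether every $x \neq c$ can be assigned a witness $y$ with $\DD_\EE(x,y) \leq s(c)$ --- is the same as the paper's. However, the capacity-$2$ edges make your network unsound. In the only non-trivial case we have $s(c) < 0$ (if $s(c) \geq 0$ then $c$ is a weak Condorcet winner and the answer is trivially \YES, which your network does recover via the $u_x \to \tau$ edges). For $x$ and $y$ to serve as witnesses for each other in a single extension you would need $\DD_\EE(x,y) \leq s(c)$ and $\DD_\EE(y,x) = -\DD_\EE(x,y) \leq s(c)$ simultaneously, i.e.\ $-s(c) \leq \DD_\EE(x,y) \leq s(c)$; since $s(c) < 0$ this interval is empty, so no orientation of the pair $\{x,y\}$ can ever kill both candidates. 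Your capacity-$2$ condition ($D^-_{xy} \leq s(c)$ and $D^+_{xy} \geq -s(c)$) is nevertheless satisfiable --- it only requires roughly $|s(c)|$ undetermined votes on that pair --- and it is precisely the condition under which both edges $u_x \to p_{\{x,y\}}$ and $u_y \to p_{\{x,y\}}$ are present. Hence on an instance where $y$ is the only possible witness for $x$ and vice versa, and the pair $\{x,y\}$ carries enough undetermined votes, your network admits a flow of value $|\CC|-1$ by routing two units through $p_{\{x,y\}}$ even though no valid extension exists; the algorithm answers \YES on a \NO instance. Your own ``subtlety'' paragraph is where this should have surfaced: the set $[-s(c),s(c)] \cap [D^-_{xy}, D^+_{xy}]$ is not merely at risk of having the wrong parity --- it is always empty when $s(c)<0$, and your parity argument only covers the case ``whenever the intersection is non-empty,'' which never occurs.

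The repair is immediate and lands you essentially on the paper's construction: each pair node must be allowed to pass at most one unit of flow in total (the paper enforces this with a capacity-$1$ edge incident to every pair vertex, justified by exactly the observation that at most one of $\DD_\EE(x,y)$ and $\DD_\EE(y,x)$ can be $\leq s(c) < 0$). With that single change the rest of your argument --- reducing each pair to the achievable window $[D^-_{xy}, D^+_{xy}]$, reading a witness assignment off an integral maximum flow, and completing all unused pairs arbitrarily --- goes through and matches the paper's proof up to the (immaterial) reversal of the network's orientation.
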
\longversion{

\begin{proof}
Let the input instance of \PW be $(\CC, \PP, c)$ where every partial vote in \PP has at most one pair of candidates whose ordering is undetermined. We consider an extension $\PP^\pr$ of \PP where the candidate $c$ is placed as high as possible. Notice that the maximin score of $c$ in every extension of $\PP^\pr$ is same and known since the relative ordering of $c$ with other candidates is fixed in $\PP^\pr$. Let the maximin score of $c$ in $\PP^\pr$ be $s(c)$. We now observe that, if $c$ is a weak Condorcet winner, that is $s(c)\ge 0$, then $c$ is a co-winner in every extension of $\PP^\pr$ and thus $(\CC, \PP, c)$ is a \YES instance. Otherwise, let us assume $s(c)<0$. For any two candidates $x, y\in\CC\setminus\{c\}$, let $\VV_{x,y}$ be the set of partial votes in $\PP^\pr$ where the ordering between the candidates $x$ and $y$ is undetermined. Since every partial vote in $\PP^\pr$ has at most one undetermined pair, for every $x_1, x_2, y_1, y_2\in\CC\setminus\{c\}$, $\VV_{x_1,x_2} \cap \VV_{y_1, y_2} = \emptyset$. 

We construct the following flow graph \GG. We have a vertex for every subset $\{x,y\}\subseteq\CC\setminus\{c\}$ of candidates other than $c$ of size two, a vertex for every candidate other than $c$, and two special vertces $s$ and $t$. If making $x$ prefer over $y$ in every $\VV_{x,y}$ makes $\DD(x,y) < s(c)$, then we add a directed edge from the vertex $\{x,y\}$ to $x$ of capacity one. Similarly, if making $y$ prefer over $x$ in every $\VV_{x,y}$ makes $\DD(y,x) < s(c)$, then we add a directed edge from the vertex $\{x,y\}$ to $y$ of capacity one. We add an edge of capacity one from $s$ to every vertex corresponding to the vertex $\{x,y\}$ for every $\{x,y\}\subseteq\CC\setminus\{c\}$. Let $\bar{\CC}\subseteq\CC\setminus\{c\}$ be the set of candidates $x$ in $\CC$ such that there exists a candidate $y\in\CC\setminus\{x\}$ such that $\DD(x,y) \le s(c)$ in every extension of $\PP^\pr$; note that this is easy to check by guessing the candidate $y$. We add an edge of capacity one from the vertex corresponding to every candidate in $\CC\setminus\bar{\CC}$ to $t$. We claim that the $(\CC, \PP^\pr, c)$ is a \YES instance if and only if there is a $s-t$ flow in \GG of amount $|\CC\setminus\bar{\CC}|$.

Suppose $(\CC, \PP^\pr, c)$ is a \YES instance. Consider an extension $\bar{\PP}$ of $\PP^\pr$ where $c$ co-wins. Let the extension of $\VV_{x,y}$ in $\bar{\PP}$ be $\bar{\VV}_{x,y}$. For every candidate $x\in\CC\setminus\bar{\CC}$, there exists a candidate $y\in\CC\setminus\{c\}$ such that $\DD(x,y)\le s(c)$; we call that candidate $d(x)$ (if there are more than one such $y$, we pick any one). Then we assign a flow of unit one along the path $s\rightarrow \{x,d(x)\}\rightarrow x\rightarrow t$. For any two candidates $x,y\in\CC\setminus\bar{\CC}$, since at most one of $\DD(x,y)$ and $\DD(y,x)$ be less than $0$ (and thus at most one of them can be $\le s(c)$), we never assign flows to both the paths $s\rightarrow \{x,y\}\rightarrow x\rightarrow t$ and $s\rightarrow \{x,y\}\rightarrow y\rightarrow t$. Hence the flow is valid. Since every vertex in $\CC\setminus\bar{\CC}$ sends exactly one unit of flow to $t$, the total amount of flow in \GG is $|\CC\setminus\bar{\CC}|$.

In the reverse direction, suppose there exists a $s-t$ flow $f$ in \GG of amount $|\CC\setminus\bar{\CC}|$. Then, for every vertex $x\in\CC\setminus\bar{\CC}$, there exists a $d(x)\in\CC\setminus\{c,x\}$ such that $f$ assigns a one unit of flow from the vertex $\{x,d(x)\}$ to $x$. For every candidate $x\in\CC\setminus\bar{\CC}$, we make $d(x)\suc x$ in the completion of every vote in $\VV_{x,d(x)}$. We fix the ordering of all other pairs of candidates arbitrarily. Let us call the resulting completion of $\PP^\pr$ $\bar{\VV}$. By construction of \GG, $c$ is a co-winner in $\bar{\VV}$.
\end{proof}
}


Finally, we state our results for the Bucklin voting rule.

\begin{restatable}{theorem}{BucklinPoly}\shortversion{[$\star$]}
\label{thm:bucklin_hard} The \PW problem is \NPC for the Bucklin voting rule even if the number of undetermined pairs of candidates in every vote is at most $2$, and is in \Pb if the number of undetermined pairs of candidates in every vote is at most $1$.
\end{restatable}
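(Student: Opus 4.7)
My plan is to prove hardness by a reduction from \TDM{} in the spirit of \Cref{thm:two-missing-pairs}, and to obtain the polynomial-time algorithm via a family of maximum-flow computations, one per candidate Bucklin round, inspired by \Cref{thm:maximin_poly_one}. The overall difficulty in both directions is Bucklin's dependence on the \emph{first} round $\el$ at which some candidate reaches strict majority.

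For the hardness direction I would reduce from \TDM. Given $(\XX \cup \YY \cup \ZZ, \SS)$ with $|\XX|=|\YY|=|\ZZ|=m$, take $\CC = \XX \cup \YY \cup \ZZ \cup \{c\} \cup \DD$ with a pool of calibration candidates $\DD$, and fix a target Bucklin round $\el$. For each $\sss = (x, y, z) \in \SS$, introduce a partial vote placing the three candidates of $\sss$ around position $\el$ with two undetermined pairs; the precise arrangement is chosen so that in the ``selected'' completion $x$ drops out of the top $\el$ while $y, z$ enter it, and in the ``unselected'' completion $x$ stays in and $y, z$ remain out. A McGarvey-style padding, realizable in polynomial size for top-$\el$ counts, calibrates the baseline so that: $c$'s top-$\el$ tally is exactly at the strict-majority threshold, each $x \in \XX$ exceeds $c$ by $1$, each $y \in \YY, z \in \ZZ$ falls short by $1$, every dummy is far below, and no candidate can attain majority at any round $\el' < \el$ in any completion. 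Then $c$ co-wins iff each $x \in \XX$ is knocked out of the top $\el$ by exactly one partial vote and each $y, z$ is lifted in exactly one, which is precisely a perfect 3-dimensional matching in $\SS$.

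For the polynomial case ($t \le 1$), iterate over all candidate Bucklin rounds $\el \in \{1, \ldots, m\}$. The decisive observation is that with at most one undetermined pair per vote, that pair affects the top-$j$ tally for exactly one value of $j$ (the one at which it straddles positions $j$ and $j+1$), so completion decisions relevant to different rounds decouple entirely. For each $\el$, place $c$ as high as possible in every partial vote (canonical, since $c$ lies in at most one pair per vote) to fix its top-$\el$ count $T_c$, and then build a max-flow network in the style of \Cref{thm:one-missing-pair}---source, one node per partial vote whose pair straddles $\el$, one node per candidate other than $c$, sink---with capacities enforcing (i) every $x \ne c$ ends with top-$\el$ count at most $T_c$ and (ii) at least one candidate reaches the majority. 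A separate max-flow for each $\el' < \el$ verifies that no candidate is forced to majority at round $\el'$. If these tests succeed for some $\el$, output YES. The main obstacle is the cross-round coupling intrinsic to Bucklin; the iteration over $\el$ together with the $t \le 1$ decoupling circumvents it, while two undetermined pairs in a single vote can couple top-$\el$ and top-$(\el+1)$ outcomes---the very correlation that the 3DM reduction above exploits.
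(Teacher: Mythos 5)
Your polynomial-time argument for at most one undetermined pair per vote is essentially the paper's (the paper fixes $c$ as high as possible, identifies the first round $k$ at which $c$ attains majority, and reduces to the $k$-approval machinery; your round-by-round decoupling observation is the same idea spelled out slightly more), so that direction is fine. The problem is in the hardness construction.

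The invariant you ask your gadget to satisfy --- ``in the selected completion $x$ drops out of the top $\el$ while $y,z$ enter it, and in the unselected completion $x$ stays in and $y,z$ remain out'' --- is impossible. In every linear extension of a vote, exactly $\el$ candidates occupy the top $\el$ positions, so between any two completions the number of candidates entering the top-$\el$ set equals the number leaving it. Since the only candidates whose positions differ across completions are $x$, $y$, $z$ (the two undetermined pairs both involve $x$), at most one candidate can leave when $x$ leaves, so $y$ and $z$ cannot both enter. Concretely, with the block $x \succ y \succ z$ straddling positions $\el-1,\el,\el+1$ and pairs $(x,y),(x,z)$ undetermined, the full swap $y \succ z \succ x$ moves $x$ out of the top $\el$ and $z$ in, but $y$ was already inside and its top-$\el$ count is \emph{invariant} across all completions. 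Consequently your single-round bookkeeping (``each $y,z$ falls short of $c$ by $1$ at round $\el$ and must be lifted exactly once'') gives no mechanism at all to prevent a completion from reusing the same $y$ in many selected triples, and the reverse direction of the reduction fails: one could knock every $x$ out using a sub-collection of \SS that covers \XX and \ZZ but not \YY.

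The paper's construction resolves exactly this by working with \emph{two} adjacent rounds: the candidate $z$ is policed at round $3m-1$ (its top-$(3m-1)$ tally may rise past $c$'s), while $y$ is policed at round $3m-2$ --- lifting $y$ into position $3m-2$ twice pushes it to a strict majority at round $3m-2$, so the Bucklin round terminates \emph{before} $\el$ and $c$ loses. Note that this is the opposite of your calibration requirement that ``no candidate can attain majority at any round $\el'<\el$ in any completion'': the threat of an early majority for $y$ is precisely the enforcement mechanism, and building it out of the construction removes the only handle on the \YY-candidates. To repair your proof you would need to drop that requirement, set $y$'s top-$(\el-1)$ count one below the early-majority threshold, and argue the matching property from the pair of rounds $\el-1$ and $\el$ jointly, which is what the paper does.
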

\longversion{
\begin{proof}
 The \PW problem for the Bucklin voting rule is clearly in \NP. To prove \NP-hardness of \PW, we reduce \PW from \TDM. Let $\II = (\XX\cup\YY\cup\ZZ, \SS)$ be an arbitrary instance of \TDM. Let $|\XX|=|\YY|=|\ZZ|= m$, $|\SS|=t$, and $\UU = \XX\cup\YY\cup\ZZ$. For every $a\in\UU$, let $f_a$ be the number of sets in \SS where $a$ belongs, that is $f_a = |\{\sss\in\SS : a\in\sss\}|$. We assume, without loss of generality, that $f_a\le 3$ for every $a\in\UU$ since \TDM is \NPC even with this restriction \cite{kann1991maximum}. We also assume without loss of generality that $t>3m$ (otherwise we duplicate the sets in \SS). We construct an instance $\II^\pr$ of \PW from \II as follows.
 \[\text{Set of candidates: } \CC = \XX\cup\YY\cup\ZZ\cup\{c\}\cup\GG_1\cup\GG_2\cup\GG_3, \text{ where } |\GG_1| = |\GG_2| = |\GG_3| = 3m \]
 
 For every $\sss=(x, y, z)\in\SS$, let us consider the following vote $\ppp_\sss$. 
 \[ \ppp_\sss = (\UU\setminus\{x,y,z\}) \suc x \suc y \suc z \suc \text{others} \]
 
 Using $\ppp_\sss$, we define a partial vote $\ppp_\sss^\pr$ as follows. 
 \[ \ppp_\sss^\pr = \ppp_\sss \setminus \{(x,y), (x,z)\} \]
 
 Let us define $\PP = \cup_{\sss\in\SS} \ppp_\sss$ and $\PP^\pr = \cup_{\sss\in\SS} \ppp_\sss^\pr$. For $i\in[3]$ and $j\in[3m]$, let $\GG_i^j$ denote an arbitrary subset of $\GG_i$ of size $j$. We add the following set \QQ of complete votes as in \Cref{tbl:more_votes_bucklin}.
%
 \begin{table}[!htbp]
 \centering
 \resizebox{\textwidth}{!}{
  \begin{tabular}{|c|c|}\hline
   \makecell{$\forall z\in\ZZ$, $f_z-1$ copies of $c\suc \GG_1^{3m-4}\suc z \suc \text{others}$\\$1$ copy of $c\suc \GG_1^{3m-3}\suc z\suc \text{others}$}&$\forall y\in\YY$, $f_y$ copies of $\GG_1^{3m-3}\suc y\suc c\suc \text{others}$  \\
   $\forall x\in\XX$, $3$ copies of $(\XX\setminus\{x\})\suc\YY\suc\GG_2^{m-1} \suc x\suc \text{others}$ & $t-3m$ copies of $\XX\suc\YY\suc \GG_2^m\suc \text{others}$\\
   $t-1$ copies of $\ZZ\suc \GG_2^{2m} \suc \text{others}$ & $1$ copy of $\ZZ\suc \XX \suc \GG_2^m \suc \text{others}$\\
   $t$ copies of $c\suc \XX\suc \YY\suc \GG_2^m \suc \text{others}$ & $t-1$ copies of $c\suc \YY \suc \ZZ \suc \GG_3^{m} \suc \text{others}$\\
   $1$ copy of $c\suc \ZZ\suc \XX \suc \GG_3^{m} \suc \text{others}$ & $t-2$ copies of $\ZZ\suc \XX\suc \GG_3^m \suc \text{others}$\\
   $2$ copies of $c\suc \ZZ\suc \XX \suc \GG_3^m \suc \text{others}$ & $1$ copy of $\ZZ\suc \XX\suc \GG_3^m \suc \text{others}$\\\hline
  \end{tabular}}
  \caption{We add the following set of complete votes \QQ.}\label{tbl:more_votes_bucklin}
 \end{table}

 We summarize the number of times every candidate gets placed within top $3m-1$ and $3m-2$ positions in $\PP \cup \QQ$ in \Cref{tbl:buck_initial}. We now define the instance $\II^\pr$ of \PW to be $(\CC, \PP^\pr \cup \QQ, c)$. The total number of votes in $\II^\pr$ is $8t+1$. Notice that the number of undetermined pairs of candidates in every vote in $\II^\pr$ is at most $2$. This finishes the description of the \PW instance. We claim that \II and $\II^\pr$ are equivalent. 
 \begin{table}[!htbp]
  \centering
  \begin{tabular}{|ccc|}\hline
   Candidates & Top $3m-1$ positions & Top $3m-2$ positions\\\hline\hline
   $c$ & $4t+2$ & $3t+2$\\
   $x\in\XX$ & $4t+3$ & $4t$\\
   $y\in\YY$ & $\le 4t+2$ & $4t-1$\\
   $z\in\ZZ$ & $4t+1$ & $4t$\\
   $g\in\GG_1\cup\GG_2\cup\GG_3$ & $<4t$ & $<4t$\\\hline
  \end{tabular}
  \caption{Number of times every candidate is initially placed within top $3m-1$ and $3m-2$ positions.}\label{tbl:buck_initial}
 \end{table}
  
 In the forward direction, suppose that \II be a \YES instance of \TDM. Then there exists a collection of $m$ sets $\SS^\pr\subset\SS$ in \SS such that $\cup_{\AA\in\SS^\pr} \AA = \XX\cup\YY\cup\ZZ$. We extend the partial vote $\ppp_\sss^\pr$ to $\bar{\ppp}_\sss$ as follows for $\sss\in\SS$. 
 $$
 \bar{\ppp}_\sss = 
 \begin{cases}
  (\UU\setminus\{x,y,z\}) \suc y \suc z \suc x \suc \text{others} & \sss\in\SS^\pr\\
  (\UU\setminus\{x,y,z\}) \suc x \suc y \suc z \suc \text{others} & \sss\notin\SS^\pr
 \end{cases}
 $$
 
 We consider the extension of \PP to $\bar{\PP} = \cup_{\sss\in\SS} \bar{\ppp}_\sss$. We claim that $c$ is a co-winner in the profile $\bar{\PP}\cup\QQ$ since $c$ gets majority within top $3m-1$ positions with $4t+2$ votes, whereas no candidate gets majority within top $3m-2$ positions and every candidate in \CC is placed at most $4t+2$ times within top $3m-1$ positions.
 
 In the reverse direction suppose the \PW instance $\II^\pr$ be a \YES instance. Then there exists an extension of the set of partial votes $\PP^\pr$ to a set of complete votes $\bar{\PP}$ such that $c$ is a co-winner in $\bar{\PP}\cup\QQ$. Let us call the extension of $\ppp_\sss^\pr$ in $\bar{\PP}$ $\bar{\ppp}_\sss$. First we notice that for $c$ to co-win, every $x\in\XX$ must be placed at positions outside top $3m-1$ since otherwise $x$ will receive more votes that $c$ within top $3m-1$ positions. Also observe that the only way to place $x$ outside the top $3m-1$ positions in the votes in $\bar{\ppp}_\sss$ for some $\sss=(x,y,z)$ is to put $x, y$ and $x$ at $3m$, $3m-2$, and $3m-1$ positions respectively. We consider the subset $\SS^\pr\subseteq\SS$ of \SS whose corresponding vote completions place $x$ at $3m^{th}$ position; that is $\SS^\pr = \{ \sss=(x,y,z)\in\SS: \bar{\ppp}_\sss = (\UU\setminus\{x,y,z\}) \suc y \suc z \suc x \suc \text{others} \}$. From the discussion above, we have $|\SS^\pr|\ge m$. Now we observe that every $y\in\YY$ can be placed at most once at the $(3m-2)^{th}$ position in the votes in $\{\bar{\ppp}_\sss: \sss\in\SS^\pr\}$; otherwise $y$ will get majority within top $3m-2$ positions and $c$ cannot win the election.
 Also every $z\in\ZZ$ can be placed at most once at the $(3m-1)^{th}$ position in the votes in $\{\bar{\ppp}_\sss: \sss\in\SS^\pr\}$; otherwise $z$ will receive strictly more than $4t+2$ votes within top $3m-1$ positions and thus $c$ cannot win. Hence, every $x\in\XX, y\in\YY$ and $z\in\ZZ$ belong in exactly one set in $\SS^\pr$ and thus $\SS^\pr$ forms a three dimensional matching. Hence \II is a \YES instance.
\end{proof}
}
%
\longversion{

\begin{proof}
 Let the input instance of \PW be $(\CC, \PP, c)$ where every partial vote in \PP has at most one pair of candidates whose ordering is undetermined. We consider an extension $\PP^\pr$ of \PP where the candidate $c$ is placed as high as possible. Let $k$ be the minimum integer such that $c$ gets majority within top $k$ positions. Now we can use the polynomial time algorithm for the $k$-approval voting rule to solve this instance.
\end{proof}
}

\section{Conclusion}

We have demonstrated the exact minimum number of undetermined pairs allowed per vote which keeps the \PW winner problem \NPC, and we were able to address a large class of scoring rules, Copeland$^\alpha$, maximin, and Bucklin voting rules. Our results generalize many of the known hardness results in the literature, and show that for many voting rules, we need a surprisingly small number of undetermined pairs (often just one or two) for the \PW problem to be \NPC{}. In the context of scoring rules, it would be interesting to extend these tight results to the class of pure scoring rules, and to extend Theorem~\ref{thm:three-missing-pairs} to account for all smooth scoring rules. 




\bibliography{winner}

\begin{thebibliography}{10}

\bibitem{baumeister2011computational}
Dorothea Baumeister, Magnus Roos, and J{\"o}rg Rothe.
\newblock Computational complexity of two variants of the possible winner
  problem.
\newblock In {\em The 10th International Conference on Autonomous Agents and
  Multiagent Systems (AAMAS)}, pages 853--860, 2011.

\bibitem{BaumeisterR12}
Dorothea Baumeister and J{\"{o}}rg Rothe.
\newblock Taking the final step to a full dichotomy of the possible winner
  problem in pure scoring rules.
\newblock {\em Inf. Process. Lett.}, 112(5):186--190, 2012.
\newblock URL: \url{http://dx.doi.org/10.1016/j.ipl.2011.11.016}, \href
  {http://dx.doi.org/10.1016/j.ipl.2011.11.016}
  {\path{doi:10.1016/j.ipl.2011.11.016}}.

\bibitem{ECCC-TR03-022}
Piotr Berman, Marek Karpinski, and Alex~D. Scott.
\newblock Approximation hardness and satisfiability of bounded occurrence
  instances of {SAT}.
\newblock {\em Electronic Colloquium on Computational Complexity {(ECCC)}},
  10(022), 2003.
\newblock URL:
  \url{http://eccc.hpi-web.de/eccc-reports/2003/TR03-022/index.html}.

\bibitem{betzler2010partial}
Nadja Betzler, Robert Bredereck, and Rolf Niedermeier.
\newblock Partial kernelization for rank aggregation: theory and experiments.
\newblock In {\em International Symposium on Parameterized and Exact
  Computation (IPEC)}, pages 26--37. Springer, 2010.

\bibitem{BetzlerBN14}
Nadja Betzler, Robert Bredereck, and Rolf Niedermeier.
\newblock Theoretical and empirical evaluation of data reduction for exact
  kemeny rank aggregation.
\newblock {\em Auton. Agent Multi Agent Syst.}, 28(5):721--748, 2014.
\newblock URL: \url{http://dx.doi.org/10.1007/s10458-013-9236-y}, \href
  {http://dx.doi.org/10.1007/s10458-013-9236-y}
  {\path{doi:10.1007/s10458-013-9236-y}}.

\bibitem{betzler2009towards}
Nadja Betzler and Britta Dorn.
\newblock Towards a dichotomy of finding possible winners in elections based on
  scoring rules.
\newblock In {\em Mathematical Foundations of Computer Science (MFCS)}, pages
  124--136. Springer, 2009.

\bibitem{betzler2009multivariate}
Nadja Betzler, Susanne Hemmann, and Rolf Niedermeier.
\newblock A {M}ultivariate {C}omplexity {A}nalysis of {D}etermining {P}ossible
  {W}inners given {I}ncomplete {V}otes.
\newblock In {\em Proc. International Joint Conference on Artificial
  Intelligence (IJCAI)}, volume~9, pages 53--58, 2009.

\bibitem{chevaleyre2010possible}
Yann Chevaleyre, J{\'e}r{\^o}me Lang, Nicolas Maudet, and J{\'e}r{\^o}me
  Monnot.
\newblock Possible winners when new candidates are added: The case of scoring
  rules.
\newblock In {\em Proc. International Conference on Artificial Intelligence
  (AAAI)}, 2010.

\bibitem{Cohen}
William~W. Cohen, Robert~E. Schapire, and Yoram Singer.
\newblock Learning to order things.
\newblock {\em J. Artif. Int. Res.}, 10(1):243--270, May 1999.
\newblock URL: \url{http://dl.acm.org/citation.cfm?id=1622859.1622867}.

\bibitem{DBLP:books/sp/CyganFKLMPPS15}
Marek Cygan, Fedor~V. Fomin, Lukasz Kowalik, Daniel Lokshtanov, D{\'{a}}niel
  Marx, Marcin Pilipczuk, Michal Pilipczuk, and Saket Saurabh.
\newblock {\em Parameterized Algorithms}.
\newblock Springer, 2015.
\newblock URL: \url{http://dx.doi.org/10.1007/978-3-319-21275-3}, \href
  {http://dx.doi.org/10.1007/978-3-319-21275-3}
  {\path{doi:10.1007/978-3-319-21275-3}}.

\bibitem{journalsDeyMN16}
Palash Dey, Neeldhara Misra, and Y.~Narahari.
\newblock Kernelization complexity of possible winner and coalitional
  manipulation problems in voting.
\newblock {\em Theor. Comput. Sci.}, 616:111--125, 2016.
\newblock URL: \url{http://dx.doi.org/10.1016/j.tcs.2015.12.023}, \href
  {http://dx.doi.org/10.1016/j.tcs.2015.12.023}
  {\path{doi:10.1016/j.tcs.2015.12.023}}.

\bibitem{faliszewski2014complexity}
Piotr Faliszewski, Yannick Reisch, J{\"o}rg Rothe, and Lena Schend.
\newblock Complexity of manipulation, bribery, and campaign management in
  bucklin and fallback voting.
\newblock In {\em Proc. 13th International Conference on Autonomous Agents and
  Multiagent Systems (AAMAS)}, pages 1357--1358. International Foundation for
  Autonomous Agents and Multiagent Systems, 2014.

\bibitem{JacksonSA08}
Benjamin~G. Jackson, Patrick~S. Schnable, and Srinivas Aluru.
\newblock Consensus genetic maps as median orders from inconsistent sources.
\newblock {\em {IEEE/ACM} Trans. Comput. Biology Bioinform.}, 5(2):161--171,
  2008.
\newblock URL: \url{http://doi.acm.org/10.1145/1371585.1371586}, \href
  {http://dx.doi.org/10.1145/1371585.1371586}
  {\path{doi:10.1145/1371585.1371586}}.

\bibitem{kann1991maximum}
Viggo Kann.
\newblock Maximum bounded 3-dimensional matching is max snp-complete.
\newblock {\em Information Processing Letters}, 37(1):27--35, 1991.

\bibitem{konczak2005voting}
Kathrin Konczak and J{\'e}r{\^o}me Lang.
\newblock Voting procedures with incomplete preferences.
\newblock In {\em Proc. International Joint Conference on Artificial
  Intelligence-05 Multidisciplinary Workshop on Advances in Preference
  Handling}, volume~20, 2005.

\bibitem{lang2012winner}
J{\'e}r{\^o}me Lang, Maria~Silvia Pini, Francesca Rossi, Domenico Salvagnin,
  Kristen~Brent Venable, and Toby Walsh.
\newblock Winner determination in voting trees with incomplete preferences and
  weighted votes.
\newblock {\em Auton. Agent Multi Agent Syst.}, 25(1):130--157, 2012.

\bibitem{lang2007winner}
J{\'e}r{\^o}me Lang, Maria~Silvia Pini, Francesca Rossi, Kristen~Brent Venable,
  and Toby Walsh.
\newblock Winner determination in sequential majority voting.
\newblock In {\em Proc. 20th International Joint Conference on Artificial
  Intelligence (IJCAI)}, volume~7, pages 1372--1377, 2007.

\bibitem{mcgarvey1953theorem}
David~C McGarvey.
\newblock A theorem on the construction of voting paradoxes.
\newblock {\em Econometrica}, pages 608--610, 1953.

\bibitem{moulin2016handbook}
Herv{\'e} Moulin, Felix Brandt, Vincent Conitzer, Ulle Endriss, J{\'e}r{\^o}me
  Lang, and Ariel~D Procaccia.
\newblock {\em Handbook of Computational Social Choice}.
\newblock Cambridge University Press, 2016.

\bibitem{PennockHG00}
David~M. Pennock, Eric Horvitz, and C.~Lee Giles.
\newblock Social choice theory and recommender systems: Analysis of the
  axiomatic foundations of collaborative filtering.
\newblock In {\em Proc. Seventeenth National Conference on Artificial
  Intelligence and Twelfth Conference on on Innovative Applications of
  Artificial Intelligence, July 30 - August 3, 2000, Austin, Texas, {USA.}},
  pages 729--734, 2000.
\newblock URL: \url{http://www.aaai.org/Library/AAAI/2000/aaai00-112.php}.

\bibitem{pini2007incompleteness}
Maria~Silvia Pini, Francesca Rossi, Kristen~Brent Venable, and Toby Walsh.
\newblock Incompleteness and incomparability in preference aggregation.
\newblock In {\em Proc. 20nd International Joint Conference on Artificial
  Intelligence (IJCAI)}, volume~7, pages 1464--1469, 2007.

\bibitem{walsh2007uncertainty}
Toby Walsh.
\newblock Uncertainty in preference elicitation and aggregation.
\newblock In {\em Proc. International Conference on Artificial Intelligence
  (AAAI)}, volume~22, page~3, 2007.

\bibitem{XiaC11}
Lirong Xia and Vincent Conitzer.
\newblock Determining possible and necessary winners given partial orders.
\newblock {\em J. Artif. Intell. Res. {(JAIR)}}, 41:25--67, 2011.
\newblock URL: \url{http://dx.doi.org/10.1613/jair.3186}, \href
  {http://dx.doi.org/10.1613/jair.3186} {\path{doi:10.1613/jair.3186}}.

\bibitem{xia2008determining}
Lirong Xia and Vincent Conitzer.
\newblock Determining possible and necessary winners under common voting rules
  given partial orders.
\newblock volume~41, pages 25--67. AI Access Foundation, 2011.

\end{thebibliography}


\newpage

\section*{Appendix}

\DiffNotBordaLike*

\begin{proof}
	Since $s$ is not Borda-like, there exists some $\ell \in \NB^+$ for which $\Delta(\overrightarrow{s_\ell}) \neq \delta(\overrightarrow{s_\ell})$. We claim that for any $m \geq \ell$, $\Delta(\overrightarrow{s_m}) \neq \delta(\overrightarrow{s_m})$. We prove this by induction. The base case follows directly from the assumption. Suppose the inductive hypothesis is that $\Delta(\overrightarrow{s_m}) \neq \delta(\overrightarrow{s_m})$, for some $m > \ell$, where $\overrightarrow{s_m}=\left(\alpha_m,\alpha_{m-1},\dots,\alpha_1\right)\in\mathbb{N}^m$. Since $\Delta(\overrightarrow{s_m}) \neq \delta(\overrightarrow{s_m})$, there exists $1 \leq j \leq m-1$ for which $\alpha_{j+1} - \alpha_j > 1$ (since $\Delta(\overrightarrow{s_m}) > \delta(\overrightarrow{s_m}) \ge 1$), and in particular, this implies that position $j$ is not admissible.  
	
	Now using the fact that $s$ is a \nice{} scoring rule, we let $\overrightarrow{s_{m+1}}$ be any score vector that can be obtained from $\overrightarrow{s_m}$ by inserting an additional score value at a position $i$, where we recall that $i$ must be an admissible position. Observe that $i \neq j$, so $\Delta(\overrightarrow{s_{m+1}}) \geq \Delta(\overrightarrow{s_m})$. Also inserting a score value cannot increase the smallest non-zero score difference. Therefore, $\delta(\overrightarrow{s_m}) \ge \delta(\overrightarrow{s_{m+1}})$ and the claim follows. 
\end{proof}

\OneMissingPair*

\begin{proof}
 For the hardness result, we reduce from an instance of $(3,B2)$-SAT. Let $\II$ be an instance of $(3,B2)$-SAT, over the variables $\VV = \{x_1, \ldots, x_n\}$ and with clauses $\TT = \{c_1, \ldots, c_t\}$. 

To construct the reduced instance $\II^\pr$, we introduce two candidates for every variable, and one candidate for every clause, one special candidate $w$, and a dummy candidate $g$ to achieve desirable score differences. Notationally, we will use $b_i$ (corresponding to $x_i$) and $b_i^\prime$ (corresponding to $\bar{x}_i$) to refer to the candidates based on the variable $x_i$ and $e_j$ to refer to the candidate based on the clause $c_j$. To recap, the set of candidates are given by:

$$\CC = \{ b_i, b_i^\pr ~|~ x_i \in \VV\} \cup \{e_j ~|~ c_j \in \TT \} \cup \{w, g\}.$$

Consider an arbitrary but fixed ordering over $\CC$, such as the lexicographic order. In this proof, the notation $\overrightarrow{\CC^\pr}$ for any $\CC^\pr \subseteq \CC$ will be used to denote the lexicographic ordering restricted to the subset $\CC^\pr$. Let $m$ denote $|\CC| = 2n + t + 2$, and let $\overrightarrow{s_m}=\left(\alpha_m,\alpha_{m-1},\dots,\alpha_1\right)\in\mathbb{N}^m$. Since $s$ is a \nice{} differentiating scoring rule, we have that there exist $1 \leq p,q \leq m$ such that $|p-q| > 1$ and the following holds:

$$ \alpha_p - \alpha_{p-1} > \alpha_q - \alpha_{q-1} \geq 1 $$

We use $D$ to refer to the larger of the two differences above, namely $\alpha_p - \alpha_{p-1}$ and $d$ to refer to $\alpha_q - \alpha_{q-1}$. We now turn to a description of the votes. Fix an arbitrary subset $\CC_1$ of $(m-p)$ candidates. For every variable $x_i \in \VV$, we introduce the following complete and partial votes.

$$\ppp_i := \overrightarrow{\CC_1} \succ b_i \succ b_i^\pr \succ \overrightarrow{\CC \setminus \CC_1} \mbox{ and } \ppp_i^\pr := \ppp_i \setminus \{(b_i,b_i^\pr)\}$$

We next fix an arbitrary subset $\CC_2\subset\CC$ of $(m-q)$ candidates. Consider a literal $\ell$ corresponding to the variable $x_i$. We use $\ell^\star$ to refer to the candidate $b_j$ if the literal is positive and $b_j^\pr$ if the literal is negated. For every clause $c_j \in \TT$ given by $c_j = \{\ell_1, \ell_2, \ell_3\}$, we introduce the following complete and partial votes.

$$\qqq_{j,1} := \overrightarrow{\CC_2} \succ e_j \succ \ell_1^\star \succ \overrightarrow{\CC \setminus \CC_2} \mbox{ and } \qqq_{j,1}^\pr := \qqq_{j,1} \setminus \{(e_j,\ell_1^\star)\}$$

$$\qqq_{j,2} := \overrightarrow{\CC_2} \succ e_j \succ \ell_2^\star \succ \overrightarrow{\CC \setminus \CC_2} \mbox{ and } \qqq_{j,2}^\pr := \qqq_{j,2} \setminus \{(e_j,\ell_2^\star)\}$$

$$\qqq_{j,3} := \overrightarrow{\CC_2} \succ e_j \succ \ell_3^\star \succ \overrightarrow{\CC \setminus \CC_2} \mbox{ and } \qqq_{j,3}^\pr := \qqq_{j,3} \setminus \{(e_j,\ell_3^\star)\}$$

Let us define the following sets of votes:

$$\PP = \left( \bigcup_{i=1}^n \ppp_i \right) \cup \left(\bigcup_{\substack{1 \leq j \leq t,}\\\substack{1\leq b \leq 3}} \qqq_{j,b} \right)$$

and 

$$\PP^\pr = \left( \bigcup_{i=1}^n \ppp_i^\pr \right) \cup \left(\bigcup_{\substack{1 \leq j \leq t,}\\ \substack{1\leq b \leq 3}} \qqq_{j,b}^\pr \right) $$ 

There exists a set of complete votes \WW of size polynomial in $m$ with the following properties due to \Cref{score_gen}. Let $s^+: \CC \longrightarrow\NB$ be a function mapping candidates to their scores from the set of votes $\PP \cup\WW$. Then $\WW$ can be constructed to ensure that the following hold.

\begin{itemize}
	\item $s^+(e_j) = s^+(w) + d$ for all $1 \leq j \leq t$. 
	\item $s^+(b_i) = s^+(w) + 1 - d$ for all $1 \leq i \leq n$. 
	\item $s^+(b_i^\pr) = s^+(w) + 1 - d - D$ for all $1 \leq i \leq n$. 
	\item $s^+(g) < s^+(w)$
\end{itemize}

We now define the instance $\II^\pr$ of \PW to be $(\CC, \PP^\pr \cup \WW, w)$. This completes the description of the reduction. We now turn to a proof of the equivalence. Before we begin making our arguments, observe that since $w$ does not participate in any undetermined pairs of the votes in $\PP^\pr$, it follows that the score of $w$ continues to be $s^+(w)$ in any completion of $\PP^\pr$. The intuition for the construction, described informally, is as follows. The score of every ``clause candidate'' needs to decrease by $d$, which can be achieved by pushing it down against its literal partner in the $\qqq_j$-votes. However, this comes at the cost of increasing the score of the literals by $2d$ (since every literal appears in at most two clauses). It turns out that this can be compensated appropriately by ensuring that the candidate corresponding to the literal appears in the $(p-1)^{th}$ position among the $\ppp$-votes, which will adjust for this increase. Therefore, the setting of the $(b_i^\pr,b_i)$ pairs in a successful completion of $\ppp_i$ can be read off as a signal for how the corresponding variable should be set by a satisfying assignment.

We now turn to a formal proof. In the forward direction, let $\tau: \VV \rightarrow \{0,1\}$ be a satisfying assignment for $\II$. Then we have the following completions of the votes in $\PP^\pr$. To begin with, for all $1\leq i \leq n$, we have: 

\begin{equation*}
  \ppp_i^{\pr\pr} := \left\{
  \begin{array}{rl}
\overrightarrow{\CC_1} \succ b_i^\pr \succ b_i \succ\overrightarrow{\CC \setminus \CC_1} & \text{if } \tau(x_i) = 1,\\ 
\overrightarrow{\CC_1} \succ b_i \succ b_i^\pr \succ \overrightarrow{\CC \setminus \CC_1} & \text{if } \tau(x_i) = 0.
\end{array} \right.	
\end{equation*}

For a clause $c_j = \{\ell_1, \ell_2, \ell_3\}$, suppose $\tau(\ell_1) = 1$. Then we have the following completions for the votes $\qqq_{j,b}$, $1 \leq b \leq 3$:

$$\qqq^{\pr\pr}_{j,1} := \overrightarrow{\CC_2} \succ \ell_1^\star \succ e_j \succ \overrightarrow{\CC \setminus \CC_2}, $$

$$\qqq^{\pr\pr}_{j,2} := \overrightarrow{\CC_2} \succ e_j \succ \ell_2^\star \succ \overrightarrow{\CC \setminus \CC_2}, $$

$$\qqq^{\pr\pr}_{j,3} := \overrightarrow{\CC_2} \succ e_j \succ \ell_3^\star \succ \overrightarrow{\CC \setminus \CC_2}$$

The completions for the cases when $\tau(\ell_2) = 1$ or $\tau(\ell_3) = 1$ are analogously defined. Now consider the election given by the complete votes described above, which we denote by $\PP^{\pr\pr}$. Let $s^\star: \CC \longrightarrow\NB$ be the function that maps candidates to their scores from the votes $\PP^{\pr\pr} \cup\WW$. Then, we have the following. 

\begin{itemize}
	\item Since $\tau$ is a satisfying assignment, for every $1 \leq j \leq t$, we have that the candidate $e_j$ swaps places with one of its companions in at least one of the votes $\qqq_{j,b}$, $1 \leq b \leq 3$. Therefore, it loses a score of at least $d$, leading to the observation that $s^\pr(e_j) \leq s^+(e_j) - d  = s^+(w)$ for all $1 \leq j \leq t$. 
	\item We now turn to a candidate $b_i$, for some $1 \leq i \leq n$. If $\tau(x_i) = 0$, then notice that the score of $b_i$ does not change, and therefore $s^\star(b_i) = s^+(b_i) = s^+(w) - d + 1 \leq s^+(w)$, since $d \geq 1$. Otherwise, note that it decreases by $D$ and increases by at most $2d$, implying that $s^\star(b_i) = s^+(b_i) + 2d - D =  s^+(w) + 1 - D + d \leq s^+(w)$, as $(D-d) \geq 1$. 
	\item Finally, consider the candidates $b_i^\pr$, for for some $1 \leq i \leq n$. If $\tau(x_i) = 1$, then notice that the score of $b_i^\pr$ increases by $D$, and therefore $s^\star(b_i^\pr) = s^+(b_i^\pr) + D = s^+(w) - d + 1 \leq s^+(w)$, since $d \geq 1$. Otherwise, note that its score increases by at most $2d$, implying that $s^\star(b_i^\pr) = s^+(b_i^\pr) + 2d =  s^+(w) + 1 - D + d \leq s^+(w)$, as $(D-d) \geq 1$. 
\end{itemize}

This completes the forward direction of the argument. In the other direction, let $\PP^{\pr\pr}$ be any completion of the votes in $\PP^\pr$ which makes $w$ a co-winner with respect to $s$. Let $s^\star$ be the function that computes the scores of all the candidates with respect to $\PP^{\pr\pr}$. We define the following assignment to the variables of $\II$ based on $\PP^{\pr\pr}$:

\begin{equation*}
  \tau(x_i) := \left\{
  \begin{array}{rl}
1 & \text{if } \overrightarrow{\CC_1} \succ b_i^\pr \succ b_i \succ\overrightarrow{\CC \setminus \CC_1} \in \PP^{\pr\pr},
\\ 
0 & \text{if }\overrightarrow{\CC_1} \succ b_i \succ b_i^\pr \succ \overrightarrow{\CC \setminus \CC_1} \in \PP^{\pr\pr}.
\end{array} \right.	
\end{equation*}

We claim that $\tau$, as defined above, satisfies every clause in $\II$. Consider any clause $c_j \in \TT$. Observe that the score of the corresponding candidate, $e_j$, must decrease by at least $d$ in any valid completion, since $s^+(e_j) = s^+(w) + d$. Therefore, for at least one of the votes $\qqq_{j,b}$, $1 \leq b \leq 3$, we must have a completion where $e_j$ appears at position $q-1$. We claim that the literal $\ell$ that consequently appears at position $q$ must be set to one by $\tau$. Indeed, suppose not. Then we have two cases, as follows:

\begin{itemize}
	\item Suppose the literal $\ell$ corresponds to the positive appearance of a variable $x_j$. If $\tau(x_j) = 0$, then the score of $b_j$ has increased by $d$, making its final score equal to $s^+(w) + 1$, which is a contradiction.
	\item Suppose the literal $\ell$ corresponds to the negated appearance of a variable $x_j$. If $\tau(x_j) = 1$, then the score of $b_j$ has increased by $D + d$, making its final score equal to $s^+(w) + 1$, which is, again, a contradiction.
\end{itemize}

Now we turn to the proof of the polynomial time solvable case. Let the input instance of \PW be $(\CC, \PP, c)$ where every partial vote in \PP has at most one pair of candidates whose ordering is undetermined. In every partial vote in \PP we place the candidate $c$ as high as possible. Suppose in a partial vote \ppp in \PP, one undetermined pair of candidates appears at positions $i$ and $i+1$ (from the bottom) and $\alpha_i = \alpha_{i+1}$. Then we fix the ordering of the undetermined pair of candidates in \ppp arbitrarily. Let us call the resulting profile $\PP^\pr$. It is easy to see that $(\CC, \PP, c)$ is a \YES instance if and only if $(\CC, \PP^\pr, c)$ is a \YES instance. Notice that the position of $c$ in every vote in $\PP^\pr$ is fixed and thus we know the score of $c$; let it be $s(c)$. Also we can compute the minimum score that every candidate receives over all extensions of $\PP^\pr$. Let $s(w)$ be the minimum score of candidate $w$. If there exists a candidate $z$ such that $s(z)>s(c)$, then we output \NO. Otherwise we construct the following flow graph $\GG=(\VV, \EE)$. For every partial vote \vvv in $\PP^\pr$, we add a vertex $v_\vvv$ in \VV. We also add a vertex $v_w$ in \VV for every candidate $w$ other than $c$. We also add two special vertices $s$ and $t$ in \VV. We add an edge from $s$ to $v_\vvv$ for every $\vvv\in\PP^\pr$ of capacity $1$, an edge from $v_w$ to $t$ of capacity $s(c)-s(w)$ for every candidate $w$ other than $c$. If a vote $\vvv\in\PP^\pr$ has an undetermined pair $(x, y)$ of candidates, we add an edge from $v_\vvv$ to $v_x$ and $v_y$ each of capacity $1$. Let the number of votes in $\PP^\pr$ which are not complete be $t$. Now it is easy to see that the $(\CC, \PP^\pr, c)$ is a \YES instance if and only if there is a flow of size $t$ in \GG.
\end{proof}

\OneOneContaminated*

\begin{proof}
If $s$ is not $\langle 1, 1 \rangle$-difference-free, then there exists some $\ell \in \NB^+$ for which $s$ is $\langle 1, 1 \rangle$-contaminated at $\ell$. In particular, this implies that there exists an index $i$ for which $\alpha_{i+1} - \alpha_i = 1$ and $\alpha_i - \alpha_{i-1} = 1$. We now argue that $s$ is $\langle 1, 1 \rangle$-contaminated at $m$ for every $m \geq \ell$. This follows from the fact that the differences $(\alpha_{i+1} - \alpha_i)$ and $(\alpha_i - \alpha_{i-1})$  are ``carried forward''. In particular since the positions $i-1$ and $i$ are not admissible, it is not possible to diminish these differences in any score vector $s_{\ell+1}$ obtained from $s_\ell$, and repeating this argument for all $m \geq \ell$ gives us the desired claim.
\end{proof}

\TwoMissingPairs*

\begin{proof}
 Since the scoring rule is $\langle 1,1 \rangle$-contaminated, for every $\el\ge N_0$ for some constant $N_0$, there exists an index $i\in[\el-2]$ in the score vector $(\alpha_j)_{j\in[\el]}$ such that $\alpha_{i+2} - \alpha_{i+1} = \alpha_{i+1} - \alpha_i = 1$. We begin with the proof of hardness. The \PW problem is clearly in \NP. To prove \NP-hardness of \PW, we reduce \PW from \TDM. Let $\II = (\XX\cup\YY\cup\ZZ, \SS)$ be an arbitrary instance of \TDM. Let $|\XX|=|\YY|=|\ZZ|= m > N_0$. We construct an instance $\II^\pr$ of \PW from \II as follows. 
 \[ \CC = \XX\cup\YY\cup\ZZ\cup\{c,d\} \]
 
 For every $\sss=(x, y, z)\in\SS$, let us consider the following vote $\ppp_\sss$. 
 \[ \ppp_\sss = \overrightarrow{(\CC\setminus\CC_\sss)} \suc x \suc y \suc z \suc \overrightarrow{\CC_\sss}, \text{ for some fixed } \CC_\sss\subset(\CC\setminus\{x, y, z\}) \text{ with } |\CC_\sss| = i-1 \]
 
 Using $\ppp_\sss$, we define a partial vote $\ppp_\sss^\pr$ as follows. 
 \[ \ppp_\sss^\pr = \ppp_\sss \setminus \{(x,y), (x,z)\} \]
 
 Let us define $\PP = \cup_{\sss\in\SS} \ppp_\sss$ and $\PP^\pr = \cup_{\sss\in\SS} \ppp_\sss^\pr$. There exists a set of complete votes \QQ of size polynomial in $m$ with the scores as in \Cref{tbl:score_one_one_contaminated_appendix} due to \Cref{score_gen}. Let $s_{\PP\cup\QQ}:\CC\longrightarrow\NB$ be a function mapping candidates to their scores from the set of votes $\PP\cup\QQ$.
 
 
 \begin{table}[!htbp]
 \centering
  \begin{tabular}{|c|c|}\hline
   $s_{\PP\cup\QQ} (x) = s_{\PP\cup\QQ} (c) + 2, ~\forall x\in\XX$ & $s_{\PP\cup\QQ} (y) = s_{\PP\cup\QQ} (c) - 1, \forall y\in\YY$\\
   $s_{\PP\cup\QQ} (z) = s_{\PP\cup\QQ} (c) - 1, \forall z\in\ZZ$ & $s_{\PP\cup\QQ} (d) < s_{\PP\cup\QQ} (c)$\\\hline
  \end{tabular}
  \caption{Score of candidates from $\PP \cup\WW$.}\label{tbl:score_one_one_contaminated_appendix}
 \end{table}

 We now define the instance $\II^\pr$ of \PW to be $(\CC, \PP^\pr \cup \QQ, c)$. Notice that the number of undetermined pairs in every vote in $\II^\pr$ is at most $2$. This finishes the description of the \PW instance. We claim that \II and $\II^\pr$ are equivalent. 
 
 In the forward direction, suppose that \II be a \YES instance of \TDM. Then, there exists a collection of $m$ sets $\SS^\pr\subset\SS$ in \SS such that $\cup_{\AA\in\SS^\pr} \AA = \XX\cup\YY\cup\ZZ$. We extend the partial vote $\ppp_\sss^\pr$ to $\bar{\ppp}_\sss$ as follows for $\sss\in\SS$. 
 $$
 \bar{\ppp}_\sss = 
 \begin{cases}
  \overrightarrow{(\CC\setminus\CC_\sss)} \suc y \suc z \suc x \suc \overrightarrow{\CC_\sss} & \sss\in\SS^\pr\\
  \overrightarrow{(\CC\setminus\CC_\sss)} \suc x \suc y \suc z \suc \overrightarrow{\CC_\sss} & \sss\notin\SS^\pr
 \end{cases}
 $$
 
 We consider the extension of \PP to $\bar{\PP} = \cup_{\sss\in\SS} \bar{\ppp}_\sss$. We claim that $c$ is a co-winner in the profile $\bar{\PP}\cup\QQ$ since $s_{\bar{\PP}\cup\QQ} (c) = s_{\bar{\PP}\cup\QQ} (x) = s_{\bar{\PP}\cup\QQ} (y) = s_{\bar{\PP}\cup\QQ} (z) > s_{\bar{\PP}\cup\QQ} (d)$.
 
 For the reverse direction, suppose the \PW instance $\II^\pr$ be a \YES instance. Then there exists an extension of the set of partial votes $\PP^\pr$ to a set of complete votes $\bar{\PP}$ such that, $c$ is a co-winner in $\bar{\PP}\cup\QQ$. Let us call the extension of $\ppp_\sss^\pr$ in $\bar{\PP}$ $\bar{\ppp}_\sss$. We first claim that, for every $x\in\XX$, there exists exactly one $\sss\in\SS$ such that $\bar{\ppp}_\sss = \overrightarrow{(\CC\setminus\CC_\sss)} \suc y \suc z \suc x \suc \overrightarrow{\CC_\sss}$. Notice that, the score of $c$ is same in every extension of $\PP^\pr$. Hence, for $c$ to co-win, every candidate $x\in\XX$ must lose $\alpha_{i+2} - \alpha_i$ points. If there are more than one vote in $\bar{\PP}$ where $x$ is placed after some candidate $y\in\YY$, then the total increase of scores of all the candidates in \YY is more than $m(\alpha_{i+2} - \alpha_{i+1})$ and thus there exists a candidate $y^\pr\in\YY$ whose score has increased by strictly more than $\alpha_{i+2} - \alpha_{i+1}$. However, in such a scenario, the score of $y^\pr$ will be strictly more than the score of $c$ contradicting the fact that $c$ is a co-winner in $\bar{\PP}\cup\QQ$. Now, the claim follows from the observation that, every $x\in\XX$ must lose $\alpha_{i+2} - \alpha_i$ scores in order to $c$ co-win. Let $\SS^\pr\subseteq\SS$ be the collections of sets $\sss\in\SS$ such that $x\in\sss$ is placed after $z\in\sss$ in $\bar{\ppp}_\sss$. From the claim above, we have $|\SS^\pr|=m$. We now claim that, $\cup_\sss\in\SS^\pr = \XX\cup\YY\cup\ZZ$. Indeed, otherwise there exists a candidate $a\in\YY\cup\ZZ$ who does not belong to $\cup_\sss\in\SS^\pr$. But then the score of $a$ is strictly more than the score of $c$ contradicting the fact that $c$ is a co-winner in $\bar{\PP}\cup\QQ$. Hence, $\II^\pr$ is also a \YES instance
 
 The proof for the polynomial time solvable case is similar to the polynomial time solvable case in \Cref{thm:one-missing-pair}.
\end{proof}

\ThreeMissingPairs*

\begin{proof}
 For every $\el\ge N_0$, there exists an index $i\in[\el-2]$ in the score vector $(\alpha_j)_{j\in[\el]}$ such that $\alpha_{i+3} - \alpha_{i+2} = \alpha_{i+1} - \alpha_i = 1$ and $\alpha_{i+2} = \alpha_{i+1}$. Let $\alpha_i = \alpha$. We begin with the proof of hardness. The \PW problem is clearly in \NP. To prove \NP-hardness of \PW, we reduce \PW from \TDM. Let $\II = (\XX\cup\YY\cup\ZZ, \SS)$ be an arbitrary instance of \TDM. Let $|\XX|=|\YY|=|\ZZ|= m > N_0$. We construct an instance $\II^\pr$ of \PW from \II as follows.
 
 \[\text{Set of candidates: } \CC = \XX\cup\YY\cup\ZZ\cup\{c,d\} \]
 
 For every $\sss=(x, y, z)\in\SS$, let us consider the following vote $\ppp_\sss$.
 
 \[ \ppp_\sss = \overrightarrow{(\CC\setminus\CC_\sss)} \suc x \suc y \suc d \suc z \suc \overrightarrow{\CC_\sss}, \text{ for some fixed } \CC_\sss\subset(\CC\setminus\{x, y, z\}) \text{ with } |\CC_\sss| = i-1 \]
 
 Using $\ppp_\sss$, we define a partial vote $\ppp_\sss^\pr$ as follows.
 
 \[ \ppp_\sss^\pr = \ppp_\sss \setminus \{(x,y), (x,d), (x,z)\} \]
 
 Let us define $\PP = \cup_{\sss\in\SS} \ppp_\sss$ and $\PP^\pr = \cup_{\sss\in\SS} \ppp_\sss^\pr$. There exists a set of complete votes \QQ of size polynomial in $m$ with the following properties due to \Cref{score_gen}. Let $s_{\PP\cup\QQ}:\CC\longrightarrow\NB$ be a function mapping candidates to their scores from the set of votes $\PP\cup\QQ$.
 
 \begin{itemize}
  \item $s_{\PP\cup\QQ} (x) = s_{\PP\cup\QQ} (c) + 2, ~\forall x\in\XX$
  \item $s_{\PP\cup\QQ} (y) = s_{\PP\cup\QQ} (c) -1, \forall y\in\YY$
  \item $s_{\PP\cup\QQ} (z) = s_{\PP\cup\QQ} (c) -1, \forall z\in\ZZ$
  \item $s_{\PP\cup\QQ} (d) < s_{\PP\cup\QQ} (c)$
 \end{itemize}
 
 We now define the instance $\II^\pr$ of \PW to be $(\CC, \PP^\pr \cup \QQ, c)$. Notice that the number of undetermined pairs of candidates in every vote in $\II^\pr$ is at most $3$. This finishes the description of the \PW instance. We claim that \II and $\II^\pr$ are equivalent.
 
 In the forward direction, suppose that \II be a \YES instance of \TDM. Then there exists a collection of $m$ sets $\SS^\pr\subset\SS$ in \SS such that $\cup_{\AA\in\SS^\pr} \AA = \XX\cup\YY\cup\ZZ$. We extend the partial vote $\ppp_\sss^\pr$ to $\bar{\ppp}_\sss$ as follows for $\sss\in\SS$.
 
 $$
 \bar{\ppp}_\sss = 
 \begin{cases}
  \overrightarrow{(\CC\setminus\CC_\sss)} \suc y \suc d \suc z \suc x \suc \overrightarrow{\CC_\sss} & \sss\in\SS^\pr\\
  \overrightarrow{(\CC\setminus\CC_\sss)} \suc x \suc y \suc d \suc z \suc \overrightarrow{\CC_\sss} & \sss\notin\SS^\pr
 \end{cases}
 $$
 
 We consider the extension of \PP to $\bar{\PP} = \cup_{\sss\in\SS} \bar{\ppp}_\sss$. We claim that $c$ is a co-winner in the profile $\bar{\PP}\cup\QQ$ since $s_{\bar{\PP}\cup\QQ} (c) = s_{\bar{\PP}\cup\QQ} (x) = s_{\bar{\PP}\cup\QQ} (y) = s_{\bar{\PP}\cup\QQ} (z) > s_{\bar{\PP}\cup\QQ} (d)$.
 
 For the reverse direction, suppose the \PW instance $\II^\pr$ be a \YES instance. Then there exists an extension of the set of partial votes $\PP^\pr$ to a set of complete votes $\bar{\PP}$ such that $c$ is a co-winner in $\bar{\PP}\cup\QQ$. Let us call the extension of $\ppp_\sss^\pr$ in $\bar{\PP}$ $\bar{\ppp}_\sss$. We first claim that, for every $x\in\XX$, there exists exactly one $\sss\in\SS$ such that $\bar{\ppp}_\sss = \overrightarrow{(\CC\setminus\CC_\sss)} \suc y \suc d \suc z \suc x \suc \overrightarrow{\CC_\sss}$. Notice that, the score of $c$ is same in every extension of $\PP^\pr$. Hence, for $c$ to co-win, every candidate $x\in\XX$ must lose $2$ points. If there are more than one vote in $\bar{\PP}$ where $x$ is placed after some candidate $y\in\YY$, then the total increase of scores of all the candidates in \YY is more than $m$ and thus there exists a candidate $y^\pr\in\YY$ whose score has increased by strictly more than $2$. However, in such a scenario, the score of $y^\pr$ will be strictly more than the score of $c$ contradicting the fact that $c$ is a co-winner in $\bar{\PP}\cup\QQ$. Now the claim follows from the observation that, every $x\in\XX$ must lose $2$ scores in order to $c$ co-win. Let $\SS^\pr\subseteq\SS$ be the collections of sets $\sss\in\SS$ such that $x\in\sss$ is placed after $z\in\sss$ in $\bar{\ppp}_\sss$. From the claim above, we have $|\SS^\pr|=m$. We now claim that, $\cup_\sss\in\SS^\pr = \XX\cup\YY\cup\ZZ$. Indeed, otherwise there exists a candidate $a\in\YY\cup\ZZ$ who does not belong to $\cup_\sss\in\SS^\pr$. But then the score of $a$ is strictly more than the score of $c$ contradicting the fact that $c$ is a co-winner in $\bar{\PP}\cup\QQ$. Hence, $\II^\pr$ is also a \YES instance.
 
 We now turn to the polynomial time solvable case. Let the input instance of \PW be $(\CC, \PP, c)$ where every partial vote in \PP has at most $3$ pairs of candidates whose ordering is undetermined. In every partial vote in \PP we place the candidate $c$ as high as possible. Suppose in a partial vote \ppp in \PP, one undetermined pair of candidates appears at positions $i$ and $i+1$ (from the bottom) and $\alpha_i = \alpha_{i+1}$. Then we fix the ordering of the undetermined pair of candidates in \ppp arbitrarily. Let us call the resulting profile $\PP^\pr$. It is easy to see that $(\CC, \PP, c)$ is a \YES instance if and only if $(\CC, \PP^\pr, c)$ is a \YES instance. Notice that the position of $c$ in every vote in $\PP^\pr$ is fixed and thus we know the score of $c$; let it be $s(c)$. Also we can compute the minimum score that every candidate receives over all extensions of $\PP^\pr$. Let $s(w)$ be the minimum score of candidate $w$. If there exists a candidate $z$ such that $s(z)>s(c)$, then we output \NO. Otherwise we construct the following flow graph $\GG=(\VV, \EE)$. For every partial vote \vvv in $\PP^\pr$, we add a vertex $v_\vvv$ in \VV. We also add a vertex $v_w$ in \VV for every candidate $w$ other than $c$. We also add two special vertices $s$ and $t$ in \VV. We add an edge from $v_w$ to $t$ of capacity $s(c)-s(w)$ for every candidate $w$ other than $c$. Consider a partial $\vvv\in\PP^\pr$ where the three undetermined pairs of candidates be $(x_1, x_2), (y_1, y_2), (z_1, z_2)$. We add edges from $v_\vvv$ to $v_w$ for candidate $w$ other than $c$ as follows.
 
 \begin{itemize}
  \item If the sets $\{x_1, x_2\}, \{y_1, y_2\}$, and $\{z_1, z_2\}$ are mutually disjoint, then we add three vertices $v_\vvv(x_1, x_2), v_\vvv(y_1, y_2),$ and $v_\vvv(z_1, z_2)$, add edges from $v_\vvv$ to each of them each of capacity $1$, add edges from $v_\vvv(x_1, x_2)$ to $v_{x_1}$ and $v_{x_2}$, edges from $v_\vvv(y_1, y_2)$ to $v_{y_1}$ and $v_{y_2}$, edges from $v_\vvv(z_1, z_2)$ to $v_{z_1}$ and $v_{z_2}$ each with capacity $1$ and an edge from $s$ to $v_\vvv$ for every $\vvv\in\PP^\pr$ of capacity $3$.
  
  \item If $\{x_1, x_2\}$ and $\{y_1, y_2\}$ are each disjoint with $\{z_1, z_2\}$ and $\{x_1, x_2\}\cap\{y_1, y_2\}=\{x_1\}=\{y_1\}$, then, without loss of generality, let us assume $x_2\ge y_2$ in \vvv. Now observe that since the scoring rule is $\langle 1,0,1 \rangle$-difference-free, exactly one of $x_2$ and $y_2$ gets a score of one in every extension of \vvv; say $x_2$ gets a score of one in every extension of \vvv. Also observe that exactly one of $x_1$ and $y_2$ gets a score of $1$ in any extension of \vvv. Hence, we add an edge from $v_\vvv$ to $x_1$ and another edge from $v_\vvv$ to $y_2$ each with capacity $1$.
  
  \item If $|\{x_1, x_2\}\cup\{y_1, y_2\}\cup\{z_1, z_2\}|=3$ (say $\{x_1, x_2\}\cup\{y_1, y_2\}\cup\{z_1, z_2\} = \{a_1, a_2, a_3\}$), then either exactly one of $a_i, i\in[3]$ gets a score of $1$ in every extension of \vvv or exactly two of $a_i, i\in[3]$ gets a score of $1$ in every extension of \vvv. We add an edge from $s$ to $v_\vvv$ for every $\vvv\in\PP^\pr$ of capacity $1$ in the former case and of capacity $2$ in the later case.
 \end{itemize}
 Now it is easy to see that the $(\CC, \PP^\pr, c)$ is a \YES instance if and only if there is a flow of size $t$ in \GG.
\end{proof}

\ZeroOneZeroContaminated*

\begin{proof}
If $s$ is not $\langle 0, 1, 0 \rangle$-difference-free, then there exists some $\ell \in \NB^+$ for which $s$ is $\langle 0, 1, 0 \rangle$-contaminated at $\ell$. In particular, this implies that the score vector admits the pattern $(\alpha, \alpha, \alpha+1, \alpha+1)$. Let the positions (counted from the bottom) for these scores be $i$, $i-1$, $i-2$ and $i-3$, respectively. Now note that $i-2$ is not an admissible position, and it follows that any score vector $s_{\ell+1}$ obtained from $s_\ell$ will therefore continue to be $\langle 0, 1, 0 \rangle$-contaminated. Repeating this argument for all $m \geq \ell$ gives us the desired claim. 
\end{proof}

\FourMissingPairs*

\begin{proof} The proof of this theorem is described in four parts corresponding to the four statements above.

 {\bf Proof of Part 1.} A careful reading of the proof of Theorem 2 in \cite{XiaC11} reveals that the \PW winner problem is \NPC even when every vote has at most $4$ undetermined pairs for any scoring rule for which there exists an index $i$ such that $\alpha_{i+3} = \alpha_{i+2} = \alpha_{i+1} + 1 = \alpha_{i} + 1$. Hence, our result follows immediately.

 {\bf Proof of Part 2.} The reduction is similar in spirit to the construction used in the proof of Theorem~\ref{thm:one-missing-pair}. We describe it in detail for the sake of completeness. As before, we reduce from an instance of $(3,B2)$-SAT. Let $\II$ be an instance of $(3,B2)$-SAT, over the variables $\VV = \{x_1, \ldots, x_n\}$ and with clauses $\TT = \{c_1, \ldots, c_t\}$. 

To construct the reduced instance $\II^\pr$, we introduce four candidates for every variable, and one candidate for every clause, one special candidate $w$, and a dummy candidate $g$ to achieve desirable score differences. Notationally, we will use $w_i, d_i$, $b_i$  and $b_i^\prime$ to refer to the candidates based on the variable $x_i$ and $e_j$ to refer to the candidate based on the clause $c_j$. Among these candidates, the $w_i$'s and $d_i$'s are ``dummy'' candidates, while the $b_i$'s correspond to $x_i$ and $b_i^\pr$ corresponds to $\overline{x_i}$. To recap, the set of candidates are given by:

$$\CC = \{ w_i, d_i, b_i, b_i^\pr ~|~ x_i \in \VV\} \cup \{e_j ~|~ c_j \in \TT \} \cup \{w, g\}.$$

Consider an arbitrary but fixed ordering over $\CC$, such as the lexicographic order. In this proof, the notation $\overrightarrow{\CC^\pr}$ for any $\CC^\pr \subseteq \CC$ will be used to denote the lexicographic ordering restricted to the subset $\CC^\pr$. Let $m$ denote $|\CC| = 3n + t + 2$, and let $\overrightarrow{s_m}=(2, 1, \ldots, 1, 0)\in\mathbb{N}^m$. 

For every variable $x_i \in \VV$, we introduce the following complete and partial votes.

$$\aaa_i := w_i \succ \overrightarrow{\CC \setminus \{w_i,b_i,d_i\}} \succ d_i \succ b_i \mbox{ and } \aaa_i^\pr := \aaa_i \setminus \{(b_i,c) ~|~ \mbox{ for all } c \in \CC \setminus \{b_i\}\}$$

$$\bbb_i := w_i \succ \overrightarrow{\CC \setminus \{w_i,b_i^\pr,d_i\}} \succ d_i \succ b_i^\pr \mbox{ and } \bbb_i^\pr := \bbb_i \setminus \{(b_i^\pr,c) ~|~ \mbox{ for all } c \in \CC \setminus \{b_i^\pr\}\}$$

We use $\ell^\star$ to refer to the candidate $b_j$ if the literal is positive and $b_j^\pr$ if the literal is negated. For every clause $c_j \in \TT$ given by $c_j = \{\ell_1, \ell_2, \ell_3\}$, we introduce the following complete and partial votes.

$$\qqq_{j,1} := \overrightarrow{\CC \setminus \{e_j,\ell_1^\star\}} \succ e_j \succ \ell_1^\star \mbox{ and } \qqq_{j,1}^\pr := \qqq_{j,1} \setminus \{(e_j,\ell_1^\star)\}$$

$$\qqq_{j,2} := \overrightarrow{\CC \setminus \{e_j,\ell_2^\star\}} \succ e_j \succ \ell_2^\star \mbox{ and } \qqq_{j,2}^\pr := \qqq_{j,2} \setminus \{(e_j,\ell_2^\star)\}$$

$$\qqq_{j,3} := \overrightarrow{\CC \setminus \{e_j,\ell_3^\star\}}\succ e_j \succ \ell_3^\star \mbox{ and } \qqq_{j,3}^\pr := \qqq_{j,3} \setminus \{(e_j,\ell_3^\star)\}$$

Let us define the following sets of votes:

$$\PP = \left( \bigcup_{i=1}^n \aaa_i \right) \cup \left( \bigcup_{i=1}^n \bbb_i \right) \cup \left(\bigcup_{\substack{1 \leq j \leq t,}\\\substack{1\leq b \leq 3}} \qqq_{j,b} \right)$$

and 

$$\PP^\pr = \left( \bigcup_{i=1}^n \aaa_i^\pr \right) \cup \left( \bigcup_{i=1}^n \bbb_i^\pr \right)  \cup \left(\bigcup_{\substack{1 \leq j \leq t,}\\ \substack{1\leq b \leq 3}} \qqq_{j,b}^\pr \right) $$ 

There exists a set of complete votes \WW of size polynomial in $m$ with the following properties due to \Cref{score_gen}. Let $s^+: \CC \longrightarrow\NB$ be a function mapping candidates to their scores from the set of votes $\PP \cup\WW$. Then $\WW$ can be constructed to ensure that the following hold.

\begin{itemize}
	\item $s^+(w_i) = s^+(w) + 1$ for all $1 \leq i \leq n$. 
	\item $s^+(e_j) = s^+(w) + 1$ for all $1 \leq j \leq t$. 
	\item $s^+(b_i) = s^+(w) - 2$ for all $1 \leq i \leq n$. 
	\item $s^+(b_i^\pr) = s^+(w) - 2$ for all $1 \leq i \leq n$. 
	\item $s^+(g) < s^+(w)$ and $s^+(d_i) < s^+(w)$ for all $1 \leq i \leq n$.
\end{itemize}

We now define the instance $\II^\pr$ of \PW to be $(\CC, \PP^\pr \cup \WW, w)$. This completes the description of the reduction. Observe that all the partial votes either have at most $m-1$ undetermined pairs, as required. We now turn to a proof of the equivalence. Before we begin making our arguments, observe that since $w$ does not participate in any undetermined pairs of the votes in $\PP^\pr$, it follows that the score of $w$ continues to be $s^+(w)$ in any completion of $\PP^\pr$. The intuition for the construction, described informally, is as follows. The score of every ``clause candidate'' needs to decrease by at least one, which can be achieved by pushing it down against its literal partner in the $\qqq_j$-votes. Also, the score of every $w_i$ must also decrease by at least one, and the only way to achieve this is to push either $b_i$ or $b_i^\pr$ to the top in the two votes corresponding to the variable $x_i$. This causes the candidate $b_i$ (or $b_i^\pr$, as the case may be) to gain a score of two, leading to a tie with $w$, and rendering it impossible for us to use it to ``fix'' the situation for a clause candidate. Therefore, in any successful completion, whether $b_i$ or $b_i^\pr$ retains the zero-position works as a signal for how the corresponding variable should be set by a satisfying assignment.

We now turn to a formal proof. In the forward direction, let $\tau: \VV \rightarrow \{0,1\}$ be a satisfying assignment for $\II$. Then we have the following completions of the votes in $\PP^\pr$. To begin with, for all $1\leq i \leq n$, we have: 

\begin{equation*}
  \aaa_i^{\pr\pr} := \left\{
  \begin{array}{rl}
w_i \succ \overrightarrow{\CC \setminus \{w_i,b_i,d_i\}} \succ d_i \succ b_i & \text{if } \tau(x_i) = 1,\\ 
b_i \succ \overrightarrow{\CC \setminus \{w_i,b_i,d_i\}} \succ d_i \succ w_i & \text{if } \tau(x_i) = 0.
\end{array} \right.	
\end{equation*}

and also:

\begin{equation*}
  \bbb_i^{\pr\pr} := \left\{
  \begin{array}{rl}
w_i \succ \overrightarrow{\CC \setminus \{w_i,b_i^\pr,d_i\}} \succ d_i \succ b_i^\pr & \text{if } \tau(x_i) = 0,\\ 
b_i^\pr \succ \overrightarrow{\CC \setminus \{w_i,b_i^\pr,d_i\}} \succ d_i \succ w_i & \text{if } \tau(x_i) = 1.
\end{array} \right.	
\end{equation*}

For a clause $c_j = \{\ell_1, \ell_2, \ell_3\}$, suppose $\tau(\ell_1) = 1$. Then we have the following completions for the votes $\qqq_{j,b}$, $1 \leq b \leq 3$:

$$\qqq^{\pr\pr}_{j,1} := \overrightarrow{\CC \setminus \{e_j,\ell_1^\star\}} \succ \ell_1^\star \succ e_j, $$

$$\qqq^{\pr\pr}_{j,2} := \overrightarrow{\CC \setminus \{e_j,\ell_2^\star\}} \succ e_j \succ \ell_2^\star, $$

$$\qqq^{\pr\pr}_{j,3} := \overrightarrow{\CC \setminus \{e_j,\ell_3^\star\}} \succ e_j \succ \ell_3^\star$$

The completions for the cases when $\tau(\ell_2) = 1$ or $\tau(\ell_3) = 1$ are analogously defined. It is easily checked that $w$ is a co-winner in this completion, because the score of every $b_i$ and $b_i^\pr$ increases by at most two (given that we based the extensions on a satisfying assignment), and the scores of the $w_i$'s and the $e_j$'s decrease by one, as required. 

This completes the forward direction of the argument. In the other direction, let $\PP^{\pr\pr}$ be any completion of the votes in $\PP^\pr$ which makes $w$ a co-winner with respect to $s$. Let $s^\star$ be the function that computes the scores of all the candidates with respect to $\PP^{\pr\pr}$. We define the following assignment to the variables of $\II$ based on $\PP^{\pr\pr}$:

\begin{equation*}
  \tau(x_i) := \left\{
  \begin{array}{rl}
1 & \text{if } w_i \succ \overrightarrow{\CC \setminus \{w_i,b_i,d_i\}} \succ d_i \succ b_i \in \PP^{\pr\pr},
\\ 
0 & \text{if } w_i \succ \overrightarrow{\CC \setminus \{w_i,b_i,d_i\}} \succ d_i \succ b_i^\pr \in \PP^{\pr\pr}.
\end{array} \right.	
\end{equation*}

We claim that $\tau$, as defined above, satisfies every clause in $\II$. Consider any clause $c_j \in \TT$. Observe that the score of the corresponding candidate, $e_j$, must decrease by at least one in any valid completion, since $s^+(e_j) = s^+(w) + 1$. notice that any completion of the votes corresponding to the variables $x_i$ cannot influence the score of $e_j$, because the only candidates that change scores in any completion are $b_i, b_i^\pr, w_i$ and $d_i$. Therefore, in at least one of the votes $\qqq_{j,b}$, $1 \leq b \leq 3$, we must have a completion where $e_j$ appears at the last position. We claim that the literal $\ell$ that consequently appears at position $q$ must be set to one by $\tau$. Indeed, suppose not. Then we have two cases, as follows:

\begin{itemize}
	\item Suppose the literal $\ell$ corresponds to the positive appearance of a variable $x_j$. If $\tau(x_j) = 0$, then this implies that $b_i \succ \overrightarrow{\CC \setminus \{w_i,b_i,d_i\}} \succ d_i \succ w_i \in \PP^{\pr\pr}$ (if not, then the score of $w_i$ remains unchanged, a contradiction). However, this implies that $b_i$ has gained a score of three altogether, which is also a contradiction.
	\item Suppose the literal $\ell$ corresponds to the negated appearance of a variable $x_j$. If $\tau(x_j) = 1$, then this implies that $b_i^\pr \succ \overrightarrow{\CC \setminus \{w_i,b_i,d_i\}} \succ d_i \succ w_i \in \PP^{\pr\pr}$ (if not, then the score of $w_i$ remains unchanged, a contradiction). However, this implies that $b_i^\pr$ has gained a score of three altogether, which is also a contradiction.
\end{itemize}

 {\bf Proof of Part 3.} 
Now we turn to the proof of the polynomial time solvable case. Let the input instance of \PW be $(\CC, \PP, c)$ where every partial vote in \PP has at most $m-2$ pairs of candidates whose ordering is undetermined. For any $\ppp \in \PP$, let  $A(\ppp) \subseteq \CC$ denote the set of candidates $x$ for which $(y \succ x) \notin \ppp$ for any $y \in \CC$. Note that in any valid extension of $\ppp$, the candidate who occupies the first position (thereby getting a score of two) belongs to $A(\ppp)$. Similarly, let $B(\ppp) \subseteq \CC$ denote the set of candidates $x$ for which $(x \succ y) \notin \ppp$ for any $y \in \CC$. Note that in any valid extension of $\ppp$, the candidate who occupies the last position (thereby getting a score of zero) belongs to $B(\ppp)$. Also, since there are at most $m-2$ missing pairs, note that $A(\ppp) \cap B(\ppp) = \emptyset$. 

In every partial vote in \PP we place the candidate $c$ as high as possible. Suppose in a partial vote \ppp in \PP, one undetermined pair of candidates appears at positions $i$ and $i+1$ (from the bottom), where $i+1$ is not the top position and $i$ is not the bottom position. Then we fix the ordering of the undetermined pair of candidates in \ppp arbitrarily. Let us call the resulting profile $\PP^\pr$. It is easy to see that $(\CC, \PP, c)$ is a \YES instance if and only if $(\CC, \PP^\pr, c)$ is a \YES instance. Notice that the position of $c$ in every vote in $\PP^\pr$ is fixed and thus we know the score of $c$; let it be $s(c)$. Also we can compute the minimum score that every candidate receives over all extensions of $\PP^\pr$. Let $s(w)$ be the minimum score of candidate $w$. If there exists a candidate $z$ such that $s(z)>s(c)$, then we output \NO. 

Otherwise, we construct the following flow graph $\GG=(\VV, \EE)$. For every partial vote \ppp in $\PP^\pr$, we add two vertices $a_\ppp$ and  $b_\ppp$ in \VV. We also add a vertex $v_w$ in \VV for every candidate $w$ other than $c$. We also add two special vertices $s$ and $t$ in \VV. We add an edge from $s$ to $a_\ppp$ for every $\ppp\in\PP^\pr$ for which $A(\ppp)$ is non-empty, and the capacity of this edge is $1$. We also add an edge from $s$ to $b_\ppp$ for all $\ppp \in \PP^\pr$ for which $B(\ppp)$ is non-empty, and the capacity of these edges is equal to $|B(\ppp)| - 1$. For every vote $\ppp$, we add an edge from the vertex $a_\ppp$ to all vertices in $A(v_\ppp)$, and an edge from the vertex $b_\ppp$ to all vertices in $B(v_\ppp)$. All these edges have a capacity of one. Finally, we an edge from $v_w$ to $t$ of capacity $s(c)-s(w)$ for every candidate $w$ other than $c$. 

Let the number of votes in $\PP^\pr$ which are not complete be $t$. Now it is easy to see that the $(\CC, \PP^\pr, c)$ is a \YES instance if and only if there is a flow of size $t + \sum_{\ppp \in \VV^\pr} (|B(\ppp)| - 1)$ in \GG, where $\VV^\pr$ denotes the subset of votes who admit a non-empty $B$-set.
  
 {\bf Proof of Part 4.} Observe that if the difference vector has at least three $1$s, then the scoring rule is always either $\langle 1,1 \rangle$-contaminated or $\langle 0,1,0 \rangle$-contaminated. If the difference vector has at least two $1$s, then the scoring rule is either $\langle 0,1,0 \rangle$-contaminated or it is equivalent to $(2,1,\ldots,1,0)$. If the scoring rule has one $1$, then it is either plurality or veto or $k$-approval for some $1 < k < m-1$. Now the results follows from the fact that the $k$-approval voting rule is $\langle 0,1,0 \rangle$-contaminated for every $1 < k < m-1$.
\end{proof}

\CopelandTwo*

\begin{proof}
 The \PW problem for the Copeland$^\alpha$ voting rule is clearly in \NP. To prove \NP-hardness of \PW, we reduce \PW from \TDM. Let $\II = (\XX\cup\YY\cup\ZZ, \SS)$ be an arbitrary instance of \TDM. Let $|\XX|=|\YY|=|\ZZ|= m$. We construct an instance $\II^\pr$ of \PW from \II as follows.
 
 \[\text{Set of candidates: } \CC = \XX\cup\YY\cup\ZZ\cup\{c\}\cup\GG, \text{ where } \GG = \{g_1, \ldots, g_{10m}\} \]
 
 For every $\sss=(x, y, z)\in\SS$, let us consider the following vote $\ppp_\sss$.
 
 \[ \ppp_\sss = \overrightarrow{(\CC\setminus\{x, y, z\})_\sss} \suc x \suc y \suc z, \text{where } \overrightarrow{(\CC\setminus\{x, y, z\})_\sss} \text{ is any fixed ordering of } \CC\setminus\{x, y, z\} \]
 
 Using $\ppp_\sss$, we define a partial vote $\ppp_\sss^\pr$ as follows.
 
 \[ \ppp_\sss^\pr = \ppp_\sss \setminus \{(x,y), (x,z)\} \]
 
 Let us define $\PP = \cup_{\sss\in\SS} \ppp_\sss$ and $\PP^\pr = \cup_{\sss\in\SS} \ppp_\sss^\pr$. There exists a set of complete votes \QQ of size polynomial in $m$ with the following properties~\cite{mcgarvey1953theorem}.
 
 \begin{itemize}
  \item $\DD_{\PP\cup\QQ} (x,y) = \DD_{\PP\cup\QQ} (x, z) = 1, \forall x\in\XX, y\in\YY, z\in\ZZ$
  \item $\DD_{\PP\cup\QQ} (x, g_i) = 1, \DD_{\PP\cup\QQ} (g_j, x) = 1, \forall x\in\XX, i\in[8m+1], j\in[10m]\setminus[8m+1]$
  \item $\DD_{\PP\cup\QQ} (y, g_i) = \DD_{\PP\cup\QQ} (g_j, y) =\DD_{\PP\cup\QQ} (z, g_i) = \DD_{\PP\cup\QQ} (g_j, z)= 1, \forall y\in\YY, z\in\ZZ, i\in [10m-2], j\in\{10m-1, 10m\}$
  \item $\DD_{\PP\cup\QQ} (x, c) = \DD_{\PP\cup\QQ} (y, c) = \DD_{\PP\cup\QQ} (z, c) = \DD_{\PP\cup\QQ} (c, g) = 1, \forall x\in\XX, y\in\YY, z\in\ZZ, g\in\GG$
  \item $\DD_{\PP\cup\QQ} (g_j, g_i) = 1, \forall i\in[5m], j\in\{i+1, i+2, \ldots, i+\lfloor\nfrac{(10m-1)}{2}\rfloor\}$
 \end{itemize}
 
 All the pairwise margins which are not specified above is any integer in $\{-1, 1\}$. We summarize the Copeland score of every candidate in \CC from $\PP \cup \QQ$ in \Cref{tbl:cop_initial_appendix}. We now define the instance $\II^\pr$ of \PW to be $(\CC, \PP^\pr \cup \QQ, c)$. Notice that the number of undetermined pairs of candidates in every vote in $\II^\pr$ is at most $2$. This finishes the description of the \PW instance $\II^\pr$. Notice that since the number of voters in $\II^\pr$ is odd (since the pairwise margins are odd integers), the actual value of $\alpha$ does not play any role since no two candidates tie. Hence, in the rest of the proof, we omit $\alpha$ while mentioning the voting rule. We claim that \II and $\II^\pr$ are equivalent.
 
 \begin{table}[!htbp]
  \centering
  \begin{tabular}{|c|c|c|}\hline
   Candidates & Copeland score & Winning against\\\hline
   $c$ & $10m$ & \GG\\\hline
   $x\in\XX$ & $10m+2$ & $c$, \YY, \ZZ, $\{g_i: i\in [8m+1]\}$\\\hline
   $y\in\YY, z\in\ZZ$ & $10m-1$ & $c$, $\{g_i: i\in [10m-2]\}$\\\hline
   $g_i\in\GG$ & $<9m$ & $\subseteq\CC\setminus\{g_j : j\in\{i+1, i+2, \ldots, i+\lfloor\nfrac{10m-1)}{2}\rfloor\}\}$\\\hline
  \end{tabular}
  \caption{Summary of initial Copeland scores of the candidates}\label{tbl:cop_initial_appendix}
 \end{table}
 
 In the forward direction, suppose that \II be a \YES instance of \TDM. Then there exists a collection of $m$ sets $\SS^\pr\subset\SS$ in \SS such that $\cup_{\AA\in\SS^\pr} \AA = \XX\cup\YY\cup\ZZ$. We extend the partial vote $\ppp_\sss^\pr$ to complete vote $\bar{\ppp}_\sss$ as follows for every $\sss\in\SS$.
 
 $$
 \bar{\ppp}_\sss = 
 \begin{cases}
  \overrightarrow{(\CC\setminus\{x, y, z\})_\sss} \suc y \suc z \suc x & \sss\in\SS^\pr\\
  \overrightarrow{(\CC\setminus\{x, y, z\})_\sss} \suc x \suc y \suc z & \sss\notin\SS^\pr
 \end{cases}
 $$
 
 We consider the extension of $\PP^\pr$ to $\bar{\PP} = \cup_{\sss\in\SS} \bar{\ppp}_\sss$. We observe that $c$ is a co-winner in the profile $\bar{\PP}\cup\QQ$ since the Copeland score of $c$, every $x\in\XX, y\in\YY$, and $z\in\ZZ$ in $\bar{\PP}\cup\QQ$ is $10m$ and the Copeland score of every candidate in \GG in $\bar{\PP}\cup\QQ$ is strictly less than $9m$.
 
 In the reverse direction we suppose that the \PW instance $\II^\pr$ be a \YES instance. Then there exists an extension of the set of partial votes $\PP^\pr$ to a set of complete votes $\bar{\PP}$ such that $c$ is a co-winner in $\bar{\PP}\cup\QQ$. Let us call the extension of the partial vote $\ppp_\sss^\pr$ in $\bar{\PP}$ $\bar{\ppp}_\sss$. First we notice that the Copeland score of $c$ in $\bar{\PP}\cup\QQ$ is $10m$ since the relative ordering of $c$ with respect to every other candidate is already fixed in $\PP^\pr\cup\QQ$. Now we observe that, in $\PP\cup\QQ$, the Copeland score of every candidate in \XX is $2$ more than the Copeland score of $c$, whereas the Copeland score of every candidate in \YY and \ZZ is $1$ less than the Copeland score of $c$. Hence, the only way for $c$ to co-win the election is as follows: every candidate in \XX loses against exactly one candidate in \YY and exactly one candidate in \ZZ. This in turn is possible only if, for every $x\in\XX$, there exists a unique $\sss = (x, y, z)\in\SS$ such that $\bar{\ppp}_\sss = \overrightarrow{(\CC\setminus\{x, y, z\})_\sss} \suc y \suc z \suc x$; we call that unique \sss corresponding to every $x\in\XX$ $\sss_x$. We now claim that $\TT = \{\sss_x: x\in\XX\}$ forms a three dimensional matching of $\II^\pr$. First notice that, $|\TT|=m$ since there is exactly one $\sss_x$ for every $x\in\XX$. If \TT does not form a three dimensional matching of $\II^\pr$, then there exists a candidate in $\YY\cup\ZZ$ whose Copeland score is strictly more than the Copeland score of $c$ (which is $10m$). However, this contradicts our assumption that $c$ is a co-winner in $\bar{\PP}\cup\QQ$. Hence \TT forms a three dimensional matching of $\II$ and thus $\II$ is a \YES instance.
\end{proof}

\CopelandPoly*

\begin{proof}
 Let us prove the result for $\alpha=0$. The proof for $\alpha=1$ case is similar. Let the input instance of \PW be $(\CC, \PP, c)$ where every partial vote in \PP has at most one pair of candidates whose ordering is undetermined. We consider an extension $\PP^\pr$ of \PP where the candidate $c$ is placed as high as possible. For every two candidates $x, y\in\CC$, let $\VV_{\{x, y\}}$ be the set of partial votes in $\PP^\pr$ where the ordering of $x$ and $y$ is undetermined. Let $\BB$ be the set of pairs of vertices $\{x,y\}$ for which it is possible to make $x$ tie with $y$ by fixing the ordering of $x$ and $y$ in the votes in $\VV_{\{x,y\}}$. For every $\{x, y\}\in\BB$, we also fix the orderings of $x$ and $y$ in $\PP^\pr$ in such a way that $x$ and $y$ tie. We first observe that the \PW instance $(\CC, \PP, c)$ is a \YES instance if and only if $(\CC, \PP^\pr, c)$ is a \YES instance since every vote in \PP has at most one pair of candidates whose ordering is undetermined. Let the Copeland score of $c$ in $\PP^\pr$ be $s(c)$. We put every unordered pair of candidates $\{x, y\}\subset\CC\setminus\{c\}$ in a set \AA if setting $x$ preferred over $y$ in every vote $\VV_{\{x, y\}}$ makes $x$ defeat $y$ and setting $y$ preferred over $x$ in every vote in $\VV_{\{x,y\}}$ makes $y$ defeat $x$ in pairwise election. Note that \AA can be computed in polynomial amount of time. Now we construct the following instance $\II = (\GG = (\UU, \EE), s, t)$ of the maximum $s-t$ flow problem. The vertex set \UU of \GG consists of two special vertices $s$ and $t$, one vertex $u_{\{x,y\}}$ for every $\{x,y\}$ in \AA, one vertex $u_a$ for every candidate $a\in\CC$. For every candidate $x\in\CC\setminus\{c\}$, let $n_x$ be the number of candidates in \CC whom $x$ defeats pairwise in every extension of $\PP^\pr$. Observe that $n_x$ can be computed in polynomial amount of time. We answer \NO if there exists a $x\in\CC\setminus\{c\}$ whose $n_x > s(c)$ since the Copeland score of $x$ is more than the Copeland score of $c$ in every extension of $\PP^\pr$ and thus $c$ cannot co-win. For every $\{x, y\}\in\AA$, we add one edge from $u_{\{x,y\}}$ to $x$, one edge from $u_{\{x,y\}}$ to $y$, and one edge from $s$ to $u_{\{x,y\}}$ each with capacity $1$. For every vertex $u_x$ with $n_x < s(c)$, we add an edge from $x$ to $t$ with capacity $s(c)-n_x$. We claim that the \PW instance $(\CC, \PP^\pr, c)$ is a \YES instance if and only if there is a flow from $s$ to $t$ in \GG of size $\sum_{x\in\CC} (n_x-s(c))$. The proof of correctness follows easily from the construction of \GG.
\end{proof}

\CopelandZeroHalf*

\begin{proof}
 The \PW problem for the Copeland$^\alpha$ voting rule is clearly in \NP. To prove \NP-hardness of \PW, we reduce \PW from \SAT. Let $\II$ be an instance of \SAT, over the variables $\VV = \{x_1, \ldots, x_n\}$ and with clauses $\TT = \{c_1, \ldots, c_m\}$. We construct an instance $\II^\pr$ of \PW from \II as follows. 
 \[\text{Set of candidates: } \CC = \{x_i, \bar{x}_i, d_i: i\in[n]\}\cup\{c_i:i\in[m]\}\cup\{c\}\cup\GG, \text{ where } \GG = \{g_1, \ldots, g_{mn}\} \]
 
 For every $i\in[n]$, let us consider the following votes $\ppp_{x_i}^1, \ppp_{x_i}^2, \ppp_{\bar{x}_i}^1, \ppp_{\bar{x}_i}^2$. 
 \[ \ppp_{x_i}^1, \ppp_{x_i}^2: x_i\suc d_i\suc \text{others}, \ppp_{\bar{x}_i}^1, \ppp_{\bar{x}_i}^2: \bar{x}_i\suc d_i\suc \text{others} \]
 
 Using $\ppp_{x_i}^1, \ppp_{x_i}^2, \ppp_{\bar{x}_i}^1, \ppp_{\bar{x}_i}^2$, we define the partial votes $\ppp_{x_i}^{1\pr}, \ppp_{x_i}^{2\pr}, \ppp_{\bar{x}_i}^{1\pr}, \ppp_{\bar{x}_i}^{2\pr}$ as follows. 
 \[ \ppp_{x_i}^{1\pr}, \ppp_{x_i}^{2\pr}: \ppp_{x_i}^1 \setminus \{(x_i, d_i)\}, \ppp_{\bar{x}_i}^{1\pr}, \ppp_{\bar{x}_i}^{2\pr}: \ppp_{\bar{x}_i}^1 \setminus \{(\bar{x}_i, d_i)\}\]
 
 Let a clause $c_j$ involves the literals $\el_j^1, \el_j^2, \el_j^3$. For every $j\in[m]$, let us consider the following votes $\qqq_j({\el_j^1}), \qqq_j({\el_j^2}), \qqq_j({\el_j^3})$. 
 \[ \qqq_j({\el_j^k}): c_j\suc \el_j^k\suc \text{others}, \forall k\in[3] \]
 
 Using $\qqq_j({\el_j^1}), \qqq_j({\el_j^2}), \qqq_j({\el_j^3})$, we define the partial votes $\qqq_j^\pr({\el_j^1}), \qqq_j^\pr({\el_j^2}), \qqq_j^\pr({\el_j^3})$ as follows. 
 \[ \qqq_j^\pr({\el_j^k}): \qqq_j({\el_j^k})\setminus\{(c_j, \el_j^k)\}, \forall k\in[3] \]
 
 Let us define $\PP = \cup_{i\in[n]} \{\ppp_{x_i}^1, \ppp_{x_i}^2, \ppp_{\bar{x}_i}^1, \ppp_{\bar{x}_i}^2\} \cup_{j\in[m]} \{\qqq_j({\el_j^1}), \qqq_j({\el_j^2}), \qqq_j({\el_j^3})\}$ and $\PP^\pr = \cup_{i\in[n]} \{\ppp_{x_i}^{1\pr}, \ppp_{x_i}^{2\pr}, \ppp_{\bar{x}_i}^{1\pr}, \ppp_{\bar{x}_i}^{2\pr}\} \cup_{j\in[m]} \{\qqq_j^\pr({\el_j^1}), \qqq_j^\pr({\el_j^2}), \qqq_j^\pr({\el_j^3})\}$. \shortversion{There exists a set of complete votes \QQ of size polynomial in $n$ and $m$ which realizes \Cref{tbl:cop_initial_alpha_zero_half_appendix}~\cite{mcgarvey1953theorem}. All the wins and defeats in \Cref{tbl:cop_initial_alpha_zero_half_appendix} are by a margin of $2$.}\longversion{There exists a set of complete votes \QQ of size polynomial in $n$ and $m$ with the following properties~\cite{mcgarvey1953theorem}.
 
 \begin{itemize}
  \item Let $G_{m}, G_{\nfrac{3mn}{4}}\subset\GG$ such that $|G_{m}|=m, G_{\nfrac{3mn}{4}}=\nfrac{3mn}{4},$ and $ G_{m}\cap G_{\nfrac{3mn}{4}}=\emptyset$. Then we have $\forall i\in[n], \DD_{\PP\cup\QQ} (x_i, x_j) = \DD_{\PP\cup\QQ} (x_i, \bar{x}_k) = \DD_{\PP\cup\QQ} (x_i, c) = \DD_{\PP\cup\QQ} (x_i, g) = 0, \forall j\in[n]\setminus\{i\} \forall k\in[n] \forall g\in G_m, \DD_{\PP\cup\QQ} (x_i, d_j) = \DD_{\PP\cup\QQ} (x_i, g^\pr) = 2, \DD_{\PP\cup\QQ} (x_i, g^\prr) = -2, \forall j\in[n] \forall g^\pr\in G_{\nfrac{3mn}{4}} \forall g^\prr\in\GG\setminus(G_m\cup G_{\nfrac{3mn}{4}}) $
  
  \item Let $G_{m}, G_{\nfrac{3mn}{4}}\subset\GG$ such that $|G_{m}|=m, G_{\nfrac{3mn}{4}}=\nfrac{3mn}{4},$ and $ G_{m}\cap G_{\nfrac{3mn}{4}}=\emptyset$. Then we have $\forall i\in[n], \DD_{\PP\cup\QQ} (\bar{x}_i, \bar{x}_j) = \DD_{\PP\cup\QQ} (\bar{x}_i, x_k) = \DD_{\PP\cup\QQ} (\bar{x}_i, c) = \DD_{\PP\cup\QQ} (\bar{x}_i, g) = 0, \forall j\in[n]\setminus\{i\} \forall k\in[n] \forall g\in G_m, \DD_{\PP\cup\QQ} (\bar{x}_i, d_j) = \DD_{\PP\cup\QQ} (\bar{x}_i, g^\pr) = 2, \DD_{\PP\cup\QQ} (\bar{x}_i, g^\prr) = -2, \forall j\in[n] \forall g^\pr\in G_{\nfrac{3mn}{4}} \forall g^\prr\in\GG\setminus(G_m\cup G_{\nfrac{3mn}{4}}) $
  
  \item Let $G_{n+\nfrac{3mn}{4}}\subset\GG$ such that $|G_{m}|=n+\nfrac{3mn}{4}$. Then we have $\DD_{\PP\cup\QQ} (c, x_i) = \DD_{\PP\cup\QQ} (c, \bar{x}_i) = \DD_{\PP\cup\QQ} (c,c_j) = 0, \forall i\in[n] \forall j\in[m], \DD_{\PP\cup\QQ} (c, g) = \DD_{\PP\cup\QQ} (g^\pr, c) = 2, \forall g\in G_{n + \nfrac{3mn}{4}} \forall g^\pr\in\GG\setminus G_{n + \nfrac{3mn}{4}}$
  
  \item Let $G_{2n-1}, G_{\nfrac{3mn}{4}-n+1}\subset\GG$ such that $|G_{2n-1}|=2n-1, G_{\nfrac{3mn}{4}-n+1}=\nfrac{3mn}{4}-n+1, G_{2n-1}\cap G_{\nfrac{3mn}{4}-n+1}=\emptyset$. Then we have $\forall i\in[m], \DD_{\PP\cup\QQ} (c_i, c_j) = \DD_{\PP\cup\QQ} (c_i, c) = \DD_{\PP\cup\QQ} (c_i, g) = 0, \forall j\in[m]\setminus\{i\} \forall g\in G_{2n-1}, \DD_{\PP\cup\QQ} (c_i, x_k) = \DD_{\PP\cup\QQ} (c_i, \bar{x}_k) = \DD_{\PP\cup\QQ} (d_k, c_j) = \DD_{\PP\cup\QQ} (c_i, g^\pr) = \DD_{\PP\cup\QQ} (g^\prr, c_i) = 2, \forall k\in[n] \forall g^\pr\in G_{\nfrac{3mn}{4}-n+1} \forall g^\prr\in\GG\setminus(G_{2n-1} \cup G_{\nfrac{3mn}{4}-n+1})$
  
  \item Let $G_{2n+m}, G_{\nfrac{3mn}{4}-m+n-2}\subset\GG$ such that $|G_{2n+m}|=2n+m, G_{\nfrac{3mn}{4}-m+n-2}=\nfrac{3mn}{4}-m+n-2, G_{2n+m}\cap G_{\nfrac{3mn}{4}-m+n-2}=\emptyset$. Then we have $\forall i\in[n], \DD_{\PP\cup\QQ} (d_i, g) = 0, \forall g\in G_{2n+m}, \DD_{\PP\cup\QQ} (d_i, g^\pr) = \DD_{\PP\cup\QQ} (g^\prr, d_i) = 2, \forall g^\pr\in G_{\nfrac{3mn}{4}-m+n-2}, g^\prr\in \GG\setminus(G_{2n+m} \cup G_{\nfrac{3mn}{4}-m+n-2}) $
  
  \item $\forall i\in[mn], \DD_{\PP\cup\QQ} (g_j, g_i) = 2 \forall j\in\{i+k: k\in [\lfloor\nfrac{(mn-1)}{2}\rfloor]\}$
 \end{itemize}
 
 All the pairwise margins which are not specified above is $0$. We summarize the Copeland$^\alpha$ score of every candidate in \CC from $\PP \cup \QQ$ in \Cref{tbl:cop_initial_alpha_zero_half_appendix}.} We now define the instance $\II^\pr$ of \PW to be $(\CC, \PP^\pr \cup \QQ, c)$. Notice that the number of undetermined pairs of candidates in every vote in $\II^\pr$ is at most $1$. This finishes the description of the \PW instance. We claim that \II and $\II^\pr$ are equivalent. 
 
 \begin{table}[!htbp]
  \centering
  \resizebox{\textwidth}{!}{  
  \begin{tabular}{|c|c|c|c|c|}\hline
   Candidates & Copeland$^\alpha$ score & Winning against & Losing against & Tie with\\\hline
   
   $c$ & \makecell{$(2n+m)\alpha$\\$ + n + \nfrac{3mn}{4}$} & $G^\pr\subset\GG, |G^\pr| = n + \nfrac{3mn}{4}$ & \makecell{$\GG\setminus G^\pr,|G^\pr|=n + \nfrac{3mn}{4}$ \\$ d_i, \forall i\in[n]$} & \makecell{$x_i, \bar{x}_i\forall i\in[n]$ \\ $c_j \forall j\in[m]$}\\\hline
   
   $x_i, \forall i\in[n]$ & \makecell{$(2n+m)\alpha$\\$ + n + \nfrac{3mn}{4}$} & \makecell{$G^\prr\subset\GG, |G^\prr| = \nfrac{3mn}{4}$ \\$d_i \forall i\in[n]$} & $\GG\setminus (G^\pr\cup G^\prr)$ & \makecell{$c, G^\pr\subset\GG, |G^\pr| = m $ \\$x_j,   
   \forall j\in[n]\setminus\{i\}$\\$\bar{x}_j \forall j\in[n]$} \\\hline
   
   $\bar{x}_i, \forall i\in[n]$ & \makecell{$(2n+m)\alpha$\\$ + n + \nfrac{3mn}{4}$} & \makecell{$G^\prr\subset\GG, |G^\prr| = \nfrac{3mn}{4}$ \\$d_i \forall i\in[n]$} & $\GG\setminus (G^\pr\cup G^\prr)$ & \makecell{$c, G^\pr\subset\GG, |G^\pr| = m $ \\$\bar{x}_j, \forall j\in[n]\setminus\{i\}$\\$x_j \forall j\in[n]$} \\\hline
   
   $c_j, \forall j\in[m]$ & \makecell{$(2n+m-1)\alpha $\\$+ n + \nfrac{3mn}{4}+1$} & \makecell{$x_i, \bar{x}_i\forall i\in[n]$\\$G^\pr\subset\GG, |G^\pr| = \nfrac{3mn}{4}-n+1$} & \makecell{$\GG\setminus (G^\pr\cup G^\prr)$ \\$ d_i, \forall i\in[n]$} & \makecell{$c$\\$c_j \forall j\in[m]\setminus\{i\}$\\$G^\prr\subset\GG, |G^\prr| = 2n-1$} \\\hline
   
   $d_i, i\in[n]$& \makecell{$(2n+m)\alpha$\\$ + n + \nfrac{3mn}{4}-1$} &\makecell{$c, c_j, \forall j\in[m]$\\$G^\prr\subset\GG, |G^\prr| = \nfrac{3mn}{4} - m + n -2$} & \makecell{$x_i, \bar{x}_i\forall i\in[n]$\\$\GG\setminus (G^\pr\cup G^\prr)$} & $G^\pr\subset\GG, |G^\pr| = 2n+m$ \\\hline
   
   $g_i, \forall i\in[mn]$ & $ < \nfrac{3mn}{4}$ &  & $\forall j\in\{i+k: k\in [\lfloor\nfrac{(mn-1)}{2}\rfloor]$ &  \\\hline
  \end{tabular}}
  \caption{Summary of initial Copeland$^\alpha$ scores of the candidates. All the wins and defeats in the table are by a margin of $2$.}\label{tbl:cop_initial_alpha_zero_half_appendix}
 \end{table}
 
 In the forward direction, suppose that \II be a \YES instance of \SAT. Then there exists an assignment $x_i^*$ of variables $x_i$ for all $i\in[n]$ to $0$ or $1$ that satisfies all the clauses $c_j, j\in[m]$. For every $i\in[n]$, we extend the partial votes $\ppp_{x_i}^{1\pr}, \ppp_{x_i}^{2\pr}, \ppp_{\bar{x}_i}^{1\pr}, \ppp_{\bar{x}_i}^{2\pr}$ to the complete votes $\bar{\ppp}_{x_i}^{1}, \bar{\ppp}_{x_i}^{2}, \bar{\ppp}_{\bar{x}_i}^{1}, \bar{\ppp}_{\bar{x}_i}^{2}$ as follows.
 $$
 \bar{\ppp}_{x_i}^{1}, \bar{\ppp}_{x_i}^{2} = 
 \begin{cases}
  x_i\suc d_i\suc \text{others} & x_i^*=0\\
  d_i\suc x_i\suc \text{others} & x_i^*=1
 \end{cases}~;~  \bar{\ppp}_{\bar{x}_i}^{1}, \bar{\ppp}_{\bar{x}_i}^{2} = 
 \begin{cases}
  \bar{x}_i\suc d_i\suc \text{others} & x_i^*=1\\
  d_i\suc \bar{x}_i\suc \text{others} & x_i^*=0
 \end{cases}
 $$
 
 Let $c_j$ be a clause involving literals $\el_j^1, \el_j^2, \el_j^3$ and let us assume, without loss of generality, that the assignment $\{x_i^*\}_{i\in[n]}$ makes the literal $\el_j^3$ $1$. For every $j\in[m]$, we extend the partial votes $\qqq_j^\pr({\el_j^1}), \qqq_j^\pr({\el_j^2}), \qqq_j^\pr({\el_j^3})$ to the complete votes $\bar{\qqq}_j({\el_j^1}), \bar{\qqq}_j({\el_j^2}), \bar{\qqq}_j({\el_j^3})$ as follows. 
 \[ \bar{\qqq}_j({\el_j^3}) = \el_j^3 \suc c_j\suc \text{others}, \bar{\qqq}_j({\el_j^k}) = c_j\suc \el_j^k\suc \text{others}, \forall k\in[2] \]
 
 We consider the extension of $\PP^\pr$ to $\bar{\PP} = \cup_{i\in[n]} \{\bar{\ppp}_{x_i}^1, \bar{\ppp}_{x_i}^2, \bar{\ppp}_{\bar{x}_i}^1, \bar{\ppp}_{\bar{x}_i}^2\} \cup_{j\in[m]} \{\bar{\qqq}_j({\el_j^1}), \bar{\qqq}_j({\el_j^2}), \bar{\qqq}_j({\el_j^3})\}$. We observe that $c$ is a co-winner in the profile $\bar{\PP}\cup\QQ$ since the Copeland$^\alpha$ score of $c$, $d_i$ for every $i\in[n]$, and $c_j$ for every $j\in[m]$ in $\bar{\PP}\cup\QQ$ is $(2n+m)\alpha + n + \nfrac{3mn}{4}$, the Copeland$^\alpha$ score of $x_i$ and $\bar{x_i}$ for every $i\in[n]$ is at most $(2n+m)\alpha + n + \nfrac{3mn}{4}$ since every literal appears in at most two clauses and $\alpha\le \nfrac{1}{2}$, and the Copeland$^\alpha$ score of the candidates in \GG in $\bar{\PP}\cup\QQ$ is strictly less than $\nfrac{3mn}{4}$.
 
 In the reverse direction we suppose that the \PW instance $\II^\pr$ be a \YES instance. Then there exists an extension of the set of partial votes $\PP^\pr$ to a set of complete votes $\bar{\PP}$ such that $c$ is a co-winner in $\bar{\PP}\cup\QQ$. Let us call the extension of the partial votes $\ppp_{x_i}^{1\pr}, \ppp_{x_i}^{2\pr}, \ppp_{\bar{x}_i}^{1\pr}, \ppp_{\bar{x}_i}^{2\pr}$ in $\bar{\PP}$ $\bar{\ppp}_{x_i}^{1}, \bar{\ppp}_{x_i}^{2}, \bar{\ppp}_{\bar{x}_i}^{1}, \bar{\ppp}_{\bar{x}_i}^{2}$ and the extension of the partial votes $\qqq_j^\pr({\el_j^1}), \qqq_j^\pr({\el_j^2}), \qqq_j^\pr({\el_j^3})$ in $\bar{\PP}$ $\bar{\qqq}_j({\el_j^1}), \bar{\qqq}_j({\el_j^2}), \bar{\qqq}_j({\el_j^3})$. Now we notice that the Copeland$^\alpha$ score of $c$ in $\bar{\PP}\cup\QQ$ is $(2n+m)\alpha + n + \nfrac{3mn}{4}$ since the relative ordering of $c$ with respect to every other candidate is already fixed in $\PP^\pr\cup\QQ$. We observe that the Copeland$^\alpha$ score of $d_i$ for every $i\in[n]$ can increase by at most $1$ from $\PP \cup \QQ$ without defeating $c$. Hence it cannot be the case that $d_i$ is preferred over $x_i$ in both $\bar{\ppp}_{x_i}^{1}$ and $\bar{\ppp}_{x_i}^{2}$ and $d_i$ is preferred over $\bar{x}_i$ in both $\bar{\ppp}_{\bar{x}_i}^{1}$ and $\bar{\ppp}_{\bar{x}_i}^{2}$. We define $x_i^*$ to be $1$ if $d_i$ is preferred over $x_i$ in both $\bar{\ppp}_{x_i}^{1}$ and $\bar{\ppp}_{x_i}^{2}$ and $0$ otherwise. We claim that $\{x_i^*\}_{i\in[n]}$ is a satisfying assignment to all the clauses in \TT. Suppose not, then there exists a clause $c_i$ which is not satisfied by the assignment$\{x_i^*\}_{i\in[n]}$. Hence, for $c$ to co-win in $\bar{\PP}\cup\QQ$, the Copeland$^\alpha$ score of $c_j$ for every $j\in[m]$ must decrease by at least $(1-\alpha)$ from $\PP \cup \QQ$. Now let us consider the candidate $c_i$. Hence there must be a candidate $\el_i$ such that the literal $\el_i$ appear in the clause $c_i$ and the candidate $\el_i$ is preferred over the candidate $c_i$ in $\bar{\qqq}_i({\el_i})$. However, this increases the score of $\el_i$ by $\alpha$. Also, since the assignment $\{x_i^*\}_{i\in[n]}$ makes $\el_i$ false (by our assumption, the clause $c_i$ is not satisfied), the Copeland$^\alpha$ score of $\el_i$ in $\bar{\PP}\cup\QQ$ is strictly more than $(2n+m)\alpha + n + \nfrac{3mn}{4}$ since $\alpha>0$. This contradicts our assumption that $c$ co-wins in $\bar{\PP}\cup\QQ$. Hence $\{x_i^*\}_{i\in[n]}$ is a satisfying assignment of the clause in \TT and thus \II is a \YES instance.
\end{proof}

\CopelandHalfOne*

\begin{proof}
 The \PW problem for the Copeland$^\alpha$ voting rule is clearly in \NP. To prove \NP-hardness of \PW, we reduce \PW from \SAT. Let $\II$ be an instance of \SAT, over the variables $\VV = \{x_1, \ldots, x_n\}$ and with clauses $\TT = \{c_1, \ldots, c_m\}$. We construct an instance $\II^\pr$ of \PW from \II as follows. 
 \[\text{Set of candidates: } \CC = \{x_i, \bar{x}_i, d_i: i\in[n]\}\cup\{c_i:i\in[m]\}\cup\{c\}\cup\GG, \text{ where } \GG = \{g_1, \ldots, g_{mn}\} \]
 
 For every $i\in[n]$, let us consider the following votes $\ppp_{x_i}^1, \ppp_{x_i}^2, \ppp_{\bar{x}_i}^1, \ppp_{\bar{x}_i}^2$.
 
 \[ \ppp_{x_i}^1, \ppp_{x_i}^2: x_i\suc d_i\suc \text{others} \]
 \[ \ppp_{\bar{x}_i}^1, \ppp_{\bar{x}_i}^2: \bar{x}_i\suc d_i\suc \text{others} \]
 
 Using $\ppp_{x_i}^1, \ppp_{x_i}^2, \ppp_{\bar{x}_i}^1, \ppp_{\bar{x}_i}^2$, we define the partial votes $\ppp_{x_i}^{1\pr}, \ppp_{x_i}^{2\pr}, \ppp_{\bar{x}_i}^{1\pr}, \ppp_{\bar{x}_i}^{2\pr}$ as follows.
 
 \[ \ppp_{x_i}^{1\pr}, \ppp_{x_i}^{2\pr}: \ppp_{x_i}^1 \setminus \{(x_i, d_i)\}\]
 \[ \ppp_{\bar{x}_i}^{1\pr}, \ppp_{\bar{x}_i}^{2\pr}: \ppp_{\bar{x}_i}^1 \setminus \{(\bar{x}_i, d_i)\}\]
 
 Let a clause $c_j$ involves the literals $\el_j^1, \el_j^2, \el_j^3$. For every $j\in[m]$, let us consider the following votes $\qqq_j({\el_j^1}), \qqq_j({\el_j^2}), \qqq_j({\el_j^3})$.
 
 \[ \qqq_j({\el_j^k}): c_j\suc \el_j^k\suc \text{others}, \forall k\in[3] \]
 
 Using $\qqq_j({\el_j^1}), \qqq_j({\el_j^2}), \qqq_j({\el_j^3})$, we define the partial votes $\qqq_j^\pr({\el_j^1}), \qqq_j^\pr({\el_j^2}), \qqq_j^\pr({\el_j^3})$ as follows.
 
 \[ \qqq_j^\pr({\el_j^k}): \qqq_j({\el_j^k})\setminus\{(c_j, \el_j^k)\}, \forall k\in[3] \]
 
 Let us define $\PP = \cup_{i\in[n]} \{\ppp_{x_i}^1, \ppp_{x_i}^2, \ppp_{\bar{x}_i}^1, \ppp_{\bar{x}_i}^2\} \cup_{j\in[m]} \{\qqq_j({\el_j^1}), \qqq_j({\el_j^2}), \qqq_j({\el_j^3})\}$ and $\PP^\pr = \cup_{i\in[n]} \{\ppp_{x_i}^{1\pr}, \ppp_{x_i}^{2\pr}, \ppp_{\bar{x}_i}^{1\pr}, \ppp_{\bar{x}_i}^{2\pr}\} \cup_{j\in[m]} \{\qqq_j^\pr({\el_j^1}), \qqq_j^\pr({\el_j^2}), \qqq_j^\pr({\el_j^3})\}$. There exists a set of complete votes \QQ of size polynomial in $n$ and $m$ with the following properties~\cite{mcgarvey1953theorem}.

 \begin{itemize}
  \item Let $G_{\nfrac{3mn}{4}}\subset\GG$ such that $G_{\nfrac{3mn}{4}}=\nfrac{3mn}{4}$. Then we have $\forall i\in[n], \DD_{\PP\cup\QQ} (x_i, x_j) = \DD_{\PP\cup\QQ} (x_i, \bar{x}_k) = \DD_{\PP\cup\QQ} (x_i, c) = \DD_{\PP\cup\QQ} (x_i, c_{j^\pr}) = 0, \forall j\in[n]\setminus\{i\} \forall k\in[n], \forall j^\pr\in[m], \DD_{\PP\cup\QQ} (x_i, g) = 2, \DD_{\PP\cup\QQ} (x_i, d_k) = 2,  \DD_{\PP\cup\QQ} (x_i, g^\pr) = -2, \forall k\in[n] \forall g\in\GG_{\nfrac{3mn}{4}} \forall g^\pr\in\GG\setminus G_{\nfrac{3mn}{4}}$
  
  \item Let $G_{\nfrac{3mn}{4}}\subset\GG$ such that $G_{\nfrac{3mn}{4}}=\nfrac{3mn}{4}$. Then we have $\forall i\in[n], \DD_{\PP\cup\QQ} (\bar{x}_i, \bar{x}_j) = \DD_{\PP\cup\QQ} (\bar{x}_i, x_k) = \DD_{\PP\cup\QQ} (\bar{x}_i, c) = \DD_{\PP\cup\QQ} (\bar{x}_i, c_j) = 0, \forall j\in[n]\setminus\{i\} \forall k\in[n], \forall j\in[m], \DD_{\PP\cup\QQ} (\bar{x}_i, g) = 2, \DD_{\PP\cup\QQ} (\bar{x}_i, d_k) = 2,  \DD_{\PP\cup\QQ} (\bar{x}_i, g^\pr) = -2, \forall k\in[n] \forall g\in\GG_{\nfrac{3mn}{4}} \forall g^\pr\in\GG\setminus G_{\nfrac{3mn}{4}}$
  
  \item Let $G_{n+\nfrac{3mn}{4}}\subset\GG$ such that $|G_{n+\nfrac{3mn}{4}}|=n+\nfrac{3mn}{4}$. Then we have $\DD_{\PP\cup\QQ} (c, x_i) = \DD_{\PP\cup\QQ} (c, \bar{x}_i) = \DD_{\PP\cup\QQ} (c,c_j) = 0, \forall i\in[n] \forall j\in[m], \DD_{\PP\cup\QQ} (c, g) = \DD_{\PP\cup\QQ} (g^\pr, c) = 2, \forall g\in G_{n + \nfrac{3mn}{4}}, g^\pr\in\GG\setminus G_{n + \nfrac{3mn}{4}}$
  
  \item Let $G_{n+\nfrac{3mn}{4}}\subset\GG$ such that $|G_{n+\nfrac{3mn}{4}}|=n+\nfrac{3mn}{4}$, $\bar{g}\in\GG\setminus G_{n+\nfrac{3mn}{4}}$. Then we have $\forall i\in[m], \DD_{\PP\cup\QQ} (c_i, c_j) = \DD_{\PP\cup\QQ} (c_i, c) = \DD_{\PP\cup\QQ} (c_i, x_k) = \DD_{\PP\cup\QQ} (c_i, \bar{x}_k) = \DD_{\PP\cup\QQ} (c_i, \bar{g}) = 0, \forall j\in[m]\setminus\{i\} \forall k\in[n], \DD_{\PP\cup\QQ} (c_i, g) = \DD_{\PP\cup\QQ} (g^\prr, c_i) = \DD_{\PP\cup\QQ} (d_j, c_i) = 2, \forall g^\pr\in G_{n+\nfrac{3mn}{4}} \forall g^\prr\in\GG\setminus G_{n+\nfrac{3mn}{4}} \forall j\in[n]$
  
  \item Let $G_{2n+m}, G_{\nfrac{3mn}{4}-m+n-2}\subset\GG$ such that $|G_{2n+m}|=2n+m, G_{\nfrac{3mn}{4}-m+n-2}=\nfrac{3mn}{4}-m+n-2, G_{2n+m}\cap G_{\nfrac{3mn}{4}-m+n-2}=\emptyset$. Then we have $\forall i\in[n], \DD_{\PP\cup\QQ} (d_i, g) = 0, \forall g\in G_{2n+m}, \DD_{\PP\cup\QQ} (d_i, g^\pr) = \DD_{\PP\cup\QQ} (g^\prr, d_i) = 2, \forall g^\pr\in G_{\nfrac{3mn}{4}-m+n-2}, g^\prr\in \GG\setminus(G_{2n+m} \cup G_{\nfrac{3mn}{4}-m+n-2}) $
  
  \item $\forall i\in[mn], \DD_{\PP\cup\QQ} (g_j, g_i) = 2 \forall j\in\{i+k: k\in [\lfloor\nfrac{(mn-1)}{2}\rfloor]\}$
 \end{itemize}
 
 All the pairwise margins which are not specified above is $0$. We summarize the Copeland$^\alpha$ score of every candidate in \CC from $\PP \cup \QQ$ in \Cref{tbl:cop_initial_alpha_half_one_appendix}. We now define the instance $\II^\pr$ of \PW to be $(\CC, \PP^\pr \cup \QQ, c)$. Notice that the number of undetermined pairs of candidates in every vote in $\II^\pr$ is at most $1$. This finishes the description of the \PW instance. We claim that \II and $\II^\pr$ are equivalent.
 
 \begin{table}[!htbp]
  \centering
  \resizebox{\textwidth}{!}{  
  \begin{tabular}{|c|c|c|c|c|}\hline
   Candidates & Copeland$^\alpha$ score & Winning against & Losing against & Tie with\\\hline
   
   $c$ & \makecell{$(2n+m)\alpha$\\$ + n + \nfrac{3mn}{4}$} & $G^\pr\subset\GG, |G^\pr| = n + \nfrac{3mn}{4}$ & \makecell{$\GG\setminus G^\pr,|G^\pr|=n + \nfrac{3mn}{4}$ \\$ d_i, \forall i\in[n]$} & \makecell{$x_i, \bar{x}_i\forall i\in[n]$ \\ $c_j \forall j\in[m]$}\\\hline
   
   $x_i, \forall i\in[n]$ & \makecell{$(2n+m)\alpha$\\$ + n + \nfrac{3mn}{4}$} & \makecell{$G^\prr\subset\GG, |G^\prr| = \nfrac{3mn}{4}$ \\$d_i \forall i\in[n]$} & $\GG\setminus (G^\pr\cup G^\prr)$ & \makecell{$c, c_j \forall j\in[m]$ \\$x_j,   
   \forall j\in[n]\setminus\{i\}$\\$\bar{x}_j \forall j\in[n]$} \\\hline
   
   $\bar{x}_i, \forall i\in[n]$ & \makecell{$(2n+m)\alpha$\\$ + n + \nfrac{3mn}{4}$} & \makecell{$G^\prr\subset\GG, |G^\prr| = \nfrac{3mn}{4}$ \\$d_i \forall i\in[n]$} & $\GG\setminus (G^\pr\cup G^\prr)$ & \makecell{$c, c_j \forall j\in[m]$ \\$\bar{x}_j, \forall j\in[n]\setminus\{i\}$\\$x_j \forall j\in[n]$} \\\hline
   
   $c_j, \forall j\in[m]$ & \makecell{$(2n+m+1)\alpha $\\$+ n + \nfrac{3mn}{4}$} & \makecell{$G^\pr\subset\GG, |G^\pr| = \nfrac{3mn}{4}+n$} & \makecell{$\GG\setminus (G^\pr\cup G^\prr)$ \\$ d_i, \forall i\in[n]$} & \makecell{$c, x_i, \bar{x}_i\forall i\in[n]$\\$c_j \forall j\in[m]\setminus\{i\}$\\$G^\prr\subset\GG, |G^\prr| = 1$} \\\hline
   
   $d_i, i\in[n]$& \makecell{$(2n+m)\alpha$\\$ + n + \nfrac{3mn}{4}-1$} &\makecell{$c, c_j, \forall j\in[m]$\\$G^\prr\subset\GG, |G^\prr| = \nfrac{3mn}{4} - m + n -2$} & \makecell{$x_i, \bar{x}_i\forall i\in[n]$\\$\GG\setminus (G^\pr\cup G^\prr)$} & $G^\pr\subset\GG, |G^\pr| = 2n+m$ \\\hline
   
   $g_i, \forall i\in[mn]$ & $ < \nfrac{3mn}{4}$ &  & $\forall j\in\{i+k: k\in [\lfloor\nfrac{(mn-1)}{2}\rfloor]$ &  \\\hline
  \end{tabular}}
  \caption{Summary of initial Copeland$^\alpha$ scores of the candidates}\label{tbl:cop_initial_alpha_half_one_appendix}
 \end{table}
 
 In the forward direction, suppose that \II be a \YES instance of \SAT. Then there exists an assignment $x_i^*$ of variables $x_i$ for all $i\in[n]$ to $0$ or $1$ that satisfies all the clauses $c_j, j\in[m]$. For every $i\in[n]$, we extend the partial votes $\ppp_{x_i}^{1\pr}, \ppp_{x_i}^{2\pr}, \ppp_{\bar{x}_i}^{1\pr}, \ppp_{\bar{x}_i}^{2\pr}$ to the complete votes $\bar{\ppp}_{x_i}^{1}, \bar{\ppp}_{x_i}^{2}, \bar{\ppp}_{\bar{x}_i}^{1}, \bar{\ppp}_{\bar{x}_i}^{2}$ as follows.
 
 $$
 \bar{\ppp}_{x_i}^{1}, \bar{\ppp}_{x_i}^{2} = 
 \begin{cases}
  x_i\suc d_i\suc \text{others} & x_i^*=0\\
  d_i\suc x_i\suc \text{others} & x_i^*=1
 \end{cases}
 $$
 
 $$
 \bar{\ppp}_{\bar{x}_i}^{1}, \bar{\ppp}_{\bar{x}_i}^{2} = 
 \begin{cases}
  \bar{x}_i\suc d_i\suc \text{others} & x_i^*=1\\
  d_i\suc \bar{x}_i\suc \text{others} & x_i^*=0
 \end{cases}
 $$
 
 Let $c_j$ be a clause involving literals $\el_j^1, \el_j^2, \el_j^3$ and let us assume, without loss of generality, that the assignment ${x_i^*}_{i\in[n]}$ makes the literal $\el_j^3$ $1$. For every $j\in[m]$, we extend the partial votes $\qqq_j^\pr({\el_j^1}), \qqq_j^\pr({\el_j^2}), \qqq_j^\pr({\el_j^3})$ to the complete votes $\bar{\qqq}_j({\el_j^1}), \bar{\qqq}_j({\el_j^2}), \bar{\qqq}_j({\el_j^3})$ as follows.
 
 \[ \bar{\qqq}_j({\el_j^3}) = \el_j^3 \suc c_j\suc \text{others}, \bar{\qqq}_j({\el_j^k}) = c_j\suc \el_j^k\suc \text{others}, \forall k\in[2] \]
 
 We consider the extension of $\PP^\pr$ to $\bar{\PP} = \cup_{i\in[n]} \{\bar{\ppp}_{x_i}^1, \bar{\ppp}_{x_i}^2, \bar{\ppp}_{\bar{x}_i}^1, \bar{\ppp}_{\bar{x}_i}^2\} \cup_{j\in[m]} \{\bar{\qqq}_j({\el_j^1}), \bar{\qqq}_j({\el_j^2}), \bar{\qqq}_j({\el_j^3})\}$. We observe that $c$ is a co-winner in the profile $\bar{\PP}\cup\QQ$ since the Copeland$^\alpha$ score of $c$, $d_i$ for every $i\in[n]$, and $c_j$ for every $j\in[m]$ in $\bar{\PP}\cup\QQ$ is $(2n+m)\alpha + n + \nfrac{3mn}{4}$, the Copeland$^\alpha$ score of $x_i, \bar{x_i}$ for every $i\in[n]$ is at most $(2n+m)\alpha + n + \nfrac{3mn}{4}$ since every literal appears in at most two clauses and $1-\alpha\le \nfrac{1}{2}$, and the Copeland$^\alpha$ score of the candidates in \GG in $\bar{\PP}\cup\QQ$ is strictly less than $\nfrac{3mn}{4}$.
 
 In the reverse direction suppose the \PW instance $\II^\pr$ be a \YES instance. Then there exists an extension of the set of partial votes $\PP^\pr$ to a set of complete votes $\bar{\PP}$ such that $c$ is a co-winner in $\bar{\PP}\cup\QQ$. Let us call the extension of the partial votes $\ppp_{x_i}^{1\pr}, \ppp_{x_i}^{2\pr}, \ppp_{\bar{x}_i}^{1\pr}, \ppp_{\bar{x}_i}^{2\pr}$ in $\bar{\PP}$ $\bar{\ppp}_{x_i}^{1}, \bar{\ppp}_{x_i}^{2}, \bar{\ppp}_{\bar{x}_i}^{1}, \bar{\ppp}_{\bar{x}_i}^{2}$ and the extension of the partial votes $\qqq_j^\pr({\el_j^1}), \qqq_j^\pr({\el_j^2}), \qqq_j^\pr({\el_j^3})$ in $\bar{\PP}$ $\bar{\qqq}_j({\el_j^1}), \bar{\qqq}_j({\el_j^2}), \bar{\qqq}_j({\el_j^3})$. Now we notice that the Copeland$^\alpha$ score of $c$ in $\bar{\PP}\cup\QQ$ is $(2n+m)\alpha + n + \nfrac{3mn}{4}$ since the relative ordering of $c$ with respect to every other candidate is already fixed in $\PP^\pr\cup\QQ$. We observe that the Copeland$^\alpha$ score of $d_i$ for every $i\in[n]$ can increase by at most $1$ from $\PP \cup \QQ$. Hence it cannot be the case that $d_i$ is preferred over $x_i$ in both $\bar{\ppp}_{x_i}^{1}$ and $\bar{\ppp}_{x_i}^{2}$ and $d_i$ is preferred over $\bar{x}_i$ in both $\bar{\ppp}_{\bar{x}_i}^{1}$ and $\bar{\ppp}_{\bar{x}_i}^{2}$. We define $x_i^*$ to be $1$ if $d_i$ is preferred over $x_i$ in both $\bar{\ppp}_{x_i}^{1}$ and $\bar{\ppp}_{x_i}^{2}$ and $0$ otherwise. We claim that $\{x_i^*\}_{i\in[n]}$ is a satisfying assignment to all the clauses in \TT. Suppose not, then there exists a clause $c_i$ which is not satisfied by the assignment$\{x_i^*\}_{i\in[n]}$. The Copeland$^\alpha$ score of $c_j$ for every $j\in[m]$ in $\bar{\PP}\cup\QQ$ is $(2n+m+1)\alpha + n + \nfrac{3mn}{4}$. Hence, for $c$ to co-win in $\bar{\PP}\cup\QQ$, the Copeland$^\alpha$ score of $c_j$ for every $j\in[m]$ must decrease by at least $\alpha$ from $\PP \cup \QQ$. Now let us consider the candidate $c_i$. There must be a candidate $\el_i$ such that the literal $\el_i$ appear in the clause $c_i$ and $\el_i$ is preferred over the candidate $c_i$ in $\bar{\qqq}_i({\el_i})$. However, this increases the score of $\el_i$ by $\alpha$. Also, since the assignment $\{x_i^*\}_{i\in[n]}$ makes $\el_i$ false (since by assumption, the clause $c_i$ is not satisfied), the Copeland$^\alpha$ score of $\el_i$ in $\bar{\PP}\cup\QQ$ is strictly more than $(2n+m)\alpha + n + \nfrac{3mn}{4}$ since $\alpha<0$. This contradicts the assumption that $c$ co-wins in $\bar{\PP}\cup\QQ$. Hence $\{x_i^*\}_{i\in[n]}$ is a satisfying assignment of the clause in \TT and thus \II is a \YES instance.
\end{proof}

\MaximinPoly*

\begin{proof}
Let the input instance of \PW be $(\CC, \PP, c)$ where every partial vote in \PP has at most one pair of candidates whose ordering is undetermined. We consider an extension $\PP^\pr$ of \PP where the candidate $c$ is placed as high as possible. Notice that the maximin score of $c$ in every extension of $\PP^\pr$ is same and known since the relative ordering of $c$ with other candidates is fixed in $\PP^\pr$. Let the maximin score of $c$ in $\PP^\pr$ be $s(c)$. We now observe that, if $c$ is a weak Condorcet winner, that is $s(c)\ge 0$, then $c$ is a co-winner in every extension of $\PP^\pr$ and thus $(\CC, \PP, c)$ is a \YES instance. Otherwise, let us assume $s(c)<0$. For any two candidates $x, y\in\CC\setminus\{c\}$, let $\VV_{x,y}$ be the set of partial votes in $\PP^\pr$ where the ordering between the candidates $x$ and $y$ is undetermined. Since every partial vote in $\PP^\pr$ has at most one undetermined pair, for every $x_1, x_2, y_1, y_2\in\CC\setminus\{c\}$, $\VV_{x_1,x_2} \cap \VV_{y_1, y_2} = \emptyset$. 

We construct the following flow graph \GG. We have a vertex for every subset $\{x,y\}\subseteq\CC\setminus\{c\}$ of candidates other than $c$ of size two, a vertex for every candidate other than $c$, and two special vertces $s$ and $t$. If making $x$ prefer over $y$ in every $\VV_{x,y}$ makes $\DD(x,y) < s(c)$, then we add a directed edge from the vertex $\{x,y\}$ to $x$ of capacity one. Similarly, if making $y$ prefer over $x$ in every $\VV_{x,y}$ makes $\DD(y,x) < s(c)$, then we add a directed edge from the vertex $\{x,y\}$ to $y$ of capacity one. We add an edge of capacity one from $s$ to every vertex corresponding to the vertex $\{x,y\}$ for every $\{x,y\}\subseteq\CC\setminus\{c\}$. Let $\bar{\CC}\subseteq\CC\setminus\{c\}$ be the set of candidates $x$ in $\CC$ such that there exists a candidate $y\in\CC\setminus\{x\}$ such that $\DD(x,y) \le s(c)$ in every extension of $\PP^\pr$; note that this is easy to check by guessing the candidate $y$. We add an edge of capacity one from the vertex corresponding to every candidate in $\CC\setminus\bar{\CC}$ to $t$. We claim that the $(\CC, \PP^\pr, c)$ is a \YES instance if and only if there is a $s-t$ flow in \GG of amount $|\CC\setminus\bar{\CC}|$.

Suppose $(\CC, \PP^\pr, c)$ is a \YES instance. Consider an extension $\bar{\PP}$ of $\PP^\pr$ where $c$ co-wins. Let the extension of $\VV_{x,y}$ in $\bar{\PP}$ be $\bar{\VV}_{x,y}$. For every candidate $x\in\CC\setminus\bar{\CC}$, there exists a candidate $y\in\CC\setminus\{c\}$ such that $\DD(x,y)\le s(c)$; we call that candidate $d(x)$ (if there are more than one such $y$, we pick any one). Then we assign a flow of unit one along the path $s\rightarrow \{x,d(x)\}\rightarrow x\rightarrow t$. For any two candidates $x,y\in\CC\setminus\bar{\CC}$, since at most one of $\DD(x,y)$ and $\DD(y,x)$ be less than $0$ (and thus at most one of them can be $\le s(c)$), we never assign flows to both the paths $s\rightarrow \{x,y\}\rightarrow x\rightarrow t$ and $s\rightarrow \{x,y\}\rightarrow y\rightarrow t$. Hence the flow is valid. Since every vertex in $\CC\setminus\bar{\CC}$ sends exactly one unit of flow to $t$, the total amount of flow in \GG is $|\CC\setminus\bar{\CC}|$.

In the reverse direction, suppose there exists a $s-t$ flow $f$ in \GG of amount $|\CC\setminus\bar{\CC}|$. Then, for every vertex $x\in\CC\setminus\bar{\CC}$, there exists a $d(x)\in\CC\setminus\{c,x\}$ such that $f$ assigns a one unit of flow from the vertex $\{x,d(x)\}$ to $x$. For every candidate $x\in\CC\setminus\bar{\CC}$, we make $d(x)\suc x$ in the completion of every vote in $\VV_{x,d(x)}$. We fix the ordering of all other pairs of candidates arbitrarily. We use $\bar{\VV}$ to denote the resulting completion of $\PP^\pr$. By construction of \GG, $c$ is a co-winner in $\bar{\VV}$.
\end{proof}

\BucklinPoly*

\begin{proof}

We first establish the hardness result. The \PW problem for the Bucklin voting rule is clearly in \NP. To prove \NP-hardness of \PW, we reduce \PW from \TDM. Let $\II = (\XX\cup\YY\cup\ZZ, \SS)$ be an arbitrary instance of \TDM. Let $|\XX|=|\YY|=|\ZZ|= m$, $|\SS|=t$, and $\UU = \XX\cup\YY\cup\ZZ$. For every $a\in\UU$, let $f_a$ be the number of sets in \SS where $a$ belongs, that is $f_a = |\{\sss\in\SS : a\in\sss\}|$. We assume, without loss of generality, that $f_a\le 3$ for every $a\in\UU$ since \TDM is \NPC even with this restriction \cite{kann1991maximum}. We also assume without loss of generality that $t>3m$ (otherwise we duplicate the sets in \SS). We construct an instance $\II^\pr$ of \PW from \II as follows.
 \[\text{Set of candidates: } \CC = \XX\cup\YY\cup\ZZ\cup\{c\}\cup\GG_1\cup\GG_2\cup\GG_3, \text{ where } |\GG_1| = |\GG_2| = |\GG_3| = 3m \]
 
 For every $\sss=(x, y, z)\in\SS$, let us consider the following vote $\ppp_\sss$. 
 \[ \ppp_\sss = (\UU\setminus\{x,y,z\}) \suc x \suc y \suc z \suc \text{others} \]
 
 Using $\ppp_\sss$, we define a partial vote $\ppp_\sss^\pr$ as follows. 
 \[ \ppp_\sss^\pr = \ppp_\sss \setminus \{(x,y), (x,z)\} \]
 
 Let us define $\PP = \cup_{\sss\in\SS} \ppp_\sss$ and $\PP^\pr = \cup_{\sss\in\SS} \ppp_\sss^\pr$. For $i\in[3]$ and $j\in[3m]$, let $\GG_i^j$ denote an arbitrary subset of $\GG_i$ of size $j$. We add the following set \QQ of complete votes as in \Cref{tbl:more_votes_bucklin}.
%
 \begin{table}[!htbp]
 \centering
 \resizebox{\textwidth}{!}{
  \begin{tabular}{|c|c|}\hline
   \makecell{$\forall z\in\ZZ$, $f_z-1$ copies of $c\suc \GG_1^{3m-4}\suc z \suc \text{others}$\\$1$ copy of $c\suc \GG_1^{3m-3}\suc z\suc \text{others}$}&$\forall y\in\YY$, $f_y$ copies of $\GG_1^{3m-3}\suc y\suc c\suc \text{others}$  \\
   $\forall x\in\XX$, $3$ copies of $(\XX\setminus\{x\})\suc\YY\suc\GG_2^{m-1} \suc x\suc \text{others}$ & $t-3m$ copies of $\XX\suc\YY\suc \GG_2^m\suc \text{others}$\\
   $t-1$ copies of $\ZZ\suc \GG_2^{2m} \suc \text{others}$ & $1$ copy of $\ZZ\suc \XX \suc \GG_2^m \suc \text{others}$\\
   $t$ copies of $c\suc \XX\suc \YY\suc \GG_2^m \suc \text{others}$ & $t-1$ copies of $c\suc \YY \suc \ZZ \suc \GG_3^{m} \suc \text{others}$\\
   $1$ copy of $c\suc \ZZ\suc \XX \suc \GG_3^{m} \suc \text{others}$ & $t-2$ copies of $\ZZ\suc \XX\suc \GG_3^m \suc \text{others}$\\
   $2$ copies of $c\suc \ZZ\suc \XX \suc \GG_3^m \suc \text{others}$ & $1$ copy of $\ZZ\suc \XX\suc \GG_3^m \suc \text{others}$\\\hline
  \end{tabular}}
  \caption{We add the following set of complete votes \QQ.}\label{tbl:more_votes_bucklin}
 \end{table}

 We summarize the number of times every candidate gets placed within top $3m-1$ and $3m-2$ positions in $\PP \cup \QQ$ in \Cref{tbl:buck_initial}. We now define the instance $\II^\pr$ of \PW to be $(\CC, \PP^\pr \cup \QQ, c)$. The total number of votes in $\II^\pr$ is $8t+1$. Notice that the number of undetermined pairs of candidates in every vote in $\II^\pr$ is at most $2$. This finishes the description of the \PW instance. We claim that \II and $\II^\pr$ are equivalent. 
 \begin{table}[!htbp]
  \centering
  \begin{tabular}{|ccc|}\hline
   Candidates & Top $3m-1$ positions & Top $3m-2$ positions\\\hline\hline
   $c$ & $4t+2$ & $3t+2$\\
   $x\in\XX$ & $4t+3$ & $4t$\\
   $y\in\YY$ & $\le 4t+2$ & $4t-1$\\
   $z\in\ZZ$ & $4t+1$ & $4t$\\
   $g\in\GG_1\cup\GG_2\cup\GG_3$ & $<4t$ & $<4t$\\\hline
  \end{tabular}
  \caption{Number of times every candidate is initially placed within top $3m-1$ and $3m-2$ positions.}\label{tbl:buck_initial}
 \end{table}
  
 In the forward direction, suppose that \II be a \YES instance of \TDM. Then there exists a collection of $m$ sets $\SS^\pr\subset\SS$ in \SS such that $\cup_{\AA\in\SS^\pr} \AA = \XX\cup\YY\cup\ZZ$. We extend the partial vote $\ppp_\sss^\pr$ to $\bar{\ppp}_\sss$ as follows for $\sss\in\SS$. 
 $$
 \bar{\ppp}_\sss = 
 \begin{cases}
  (\UU\setminus\{x,y,z\}) \suc y \suc z \suc x \suc \text{others} & \sss\in\SS^\pr\\
  (\UU\setminus\{x,y,z\}) \suc x \suc y \suc z \suc \text{others} & \sss\notin\SS^\pr
 \end{cases}
 $$
 
 We consider the extension of \PP to $\bar{\PP} = \cup_{\sss\in\SS} \bar{\ppp}_\sss$. We claim that $c$ is a co-winner in the profile $\bar{\PP}\cup\QQ$ since $c$ gets majority within top $3m-1$ positions with $4t+2$ votes, whereas no candidate gets majority within top $3m-2$ positions and every candidate in \CC is placed at most $4t+2$ times within top $3m-1$ positions.
 
 In the reverse direction suppose the \PW instance $\II^\pr$ be a \YES instance. Then there exists an extension of the set of partial votes $\PP^\pr$ to a set of complete votes $\bar{\PP}$ such that $c$ is a co-winner in $\bar{\PP}\cup\QQ$. Let us call the extension of $\ppp_\sss^\pr$ in $\bar{\PP}$ $\bar{\ppp}_\sss$. First we notice that for $c$ to co-win, every $x\in\XX$ must be placed at positions outside top $3m-1$ since otherwise $x$ will receive more votes that $c$ within top $3m-1$ positions. Also observe that the only way to place $x$ outside the top $3m-1$ positions in the votes in $\bar{\ppp}_\sss$ for some $\sss=(x,y,z)$ is to put $x, y$ and $x$ at $3m$, $3m-2$, and $3m-1$ positions respectively. We consider the subset $\SS^\pr\subseteq\SS$ of \SS whose corresponding vote completions place $x$ at $3m^{th}$ position; that is $\SS^\pr = \{ \sss=(x,y,z)\in\SS: \bar{\ppp}_\sss = (\UU\setminus\{x,y,z\}) \suc y \suc z \suc x \suc \text{others} \}$. From the discussion above, we have $|\SS^\pr|\ge m$. Now we observe that every $y\in\YY$ can be placed at most once at the $(3m-2)^{th}$ position in the votes in $\{\bar{\ppp}_\sss: \sss\in\SS^\pr\}$; otherwise $y$ will get majority within top $3m-2$ positions and $c$ cannot win the election.
 Also every $z\in\ZZ$ can be placed at most once at the $(3m-1)^{th}$ position in the votes in $\{\bar{\ppp}_\sss: \sss\in\SS^\pr\}$; otherwise $z$ will receive strictly more than $4t+2$ votes within top $3m-1$ positions and thus $c$ cannot win. Hence, every $x\in\XX, y\in\YY$ and $z\in\ZZ$ belong in exactly one set in $\SS^\pr$ and thus $\SS^\pr$ forms a three dimensional matching. Hence \II is a \YES instance.

 We now turn to the polynomially solvable scenario claimed in the theorem. Let the input instance of \PW be $(\CC, \PP, c)$ where every partial vote in \PP has at most one pair of candidates whose ordering is undetermined. We consider an extension $\PP^\pr$ of \PP where the candidate $c$ is placed as high as possible. Let $k$ be the minimum integer such that $c$ gets majority within top $k$ positions. Now we can use the polynomial time algorithm for the $k$-approval voting rule to solve this instance.
\end{proof}

\end{document}